\documentclass{wlscirep}
\usepackage[utf8]{inputenc}
\usepackage[T1]{fontenc}

\usepackage{amsmath,amssymb,xspace,epsfig}

\usepackage{color}
\usepackage{graphicx}
\usepackage{hyperref}
\usepackage[numbers,sort]{natbib}

\usepackage{mystyle}
\usepackage[linesnumbered,ruled]{algorithm2e}
\usepackage{wrapfig,subfigure}

\synctex=1




\let\eps\varepsilon

\textwidth 6.5in \textheight 9.0in \oddsidemargin 0.0in
\evensidemargin 0.0in \topmargin -0.5in

\long\def\commented#1{}
\def\etal{\emph{et~al.}\xspace}

\newcommand{\denselist}{\itemsep 0pt\parsep=1pt\partopsep 0pt}
\newcommand{\bitem}{\begin{itemize}\denselist}
\newcommand{\eitem}{\end{itemize}}
\newcommand{\benum}{\begin{enumerate}\denselist}
\newcommand{\eenum}{\end{enumerate}}

\long\def\commented#1{}
\def\etal{\emph{et~al.}\xspace}
\let\eps\varepsilon

\def\R{\mathbb{R}}

\let\eps\varepsilon

\def\etal{\emph{et~al.}\xspace}


\title{Community Detection on Networks\\ with Ricci Flow}

\author[1]{Chien-Chun Ni}
\author[2]{Yu-Yao Lin}
\author[3]{Feng Luo}
\author[4,*]{Jie Gao}
\affil[1]{Yahoo! Research, Sunnyvale, CA, USA}
\affil[2]{Intel Corp., Hillsboro, OR, USA}
\affil[3]{Rugters University, New Brunswick, NJ, USA}
\affil[4]{Stony Brook University, Stony Brook, NY, USA}

\affil[*]{jgao@cs.stonybrook.edu}

\keywords{Community detection, graph clustering, discrete Ricci curvature, discrete Ricci flow}

\begin{abstract}
	
Many complex networks in the real world have community structures -- groups of well-connected nodes with important functional roles. 
It has been well recognized that the identification of communities bears numerous practical applications. 
While existing approaches mainly apply statistical or graph theoretical/combinatorial methods for 
community detection, in this paper, we present a novel geometric approach which enables us to borrow powerful classical geometric methods and properties. 
By considering networks as geometric objects and communities in a network as a geometric decomposition, we apply curvature and discrete Ricci flow, which have been used to decompose smooth manifolds with astonishing successes in mathematics, to break down communities in networks.
We tested our method on networks with ground-truth community structures, and experimentally confirmed the effectiveness of this geometric approach.

\end{abstract}

\begin{document}

\flushbottom
\maketitle

\thispagestyle{empty}

\section{Introduction}

Complex networks have been used to model connections of elements in many different fields such as social networks, biology, and biochemistry (protein-protein networks~\cite{bhowmick2015clustering}, metabolic networks, and gene networks), and computer science (P2P, the Internet). It has been widely recognized that many of real world networks have community structures -- nodes in the same community are densely connected while nodes from different communities are sparsely connected. Recognition of community structures brings out important functional components and plays an important role in supporting processes on networks such as contagions of diseases, information or behaviors. Many algorithms have been developed to identify and separate communities in the literature~\cite{yang2016comparative,Fortunato2010-pe,Newman2006-xe,Sinha2018-nx,leskovec2010empirical,Clauset2004-bp,Zhang2014-xe,Peel2017-nx,Allen2017-ah,JMLR:v18:16-480}.

Most current works on community detection try to recognize dense clusters in a graph: by randomized algorithms such as label propagation~\cite{Raghavan2007-wf} or random walks~\cite{Rosvall2008-ns}; by optimized centrality such as betweenness centrality~\cite{Girvan2002-cq}; or by considering notions such as modularity~\cite{newman2004finding}: the fraction of edges that fall within the given groups minus the expected fraction if edges were distributed uniformly at random (while still respecting the degree distribution). The viewpoint of modularity could be considered as a statistical measure of non-uniformity of the network. 

Unlike existing methods, our work explores a new path connecting community detection and geometry. We consider community structure as a geometric phenomenon and use geometric methods to identify communities in a network. The motivation comes from the classical topological connected sum decomposition of $3$-manifolds. The groundbreaking work of Hamilton and Perelman~\cite{hamilton1982three, Perelman2002-gs} shows that the connected sum decomposition can be detected by the geometric Ricci flow. By considering a network as a discrete counterpart of a manifold and connected sum components as communities, we introduce a discrete Ricci flow on networks for identifying communities in a network.


The Ricci flow approach is based on the geometric notion of curvature, introduced by F. Gauss and B. Riemann over 150 years ago, which describes quantitatively how spaces are bent at each point~\cite{Jost2017-bx}. In classical geometry, regions in a space with large positive curvature tend to be more densely packed than regions of negative curvature. 
To locate these regions of large curvature, in a seminal work in 1982, Hamilton~\cite{hamilton1982three} introduced a curvature guided diffusion process, called the Ricci flow, that deforms the space in a way formally analogous to the diffusion of heat. Under the Ricci flow, regions in a space of large positive curvature shrink to points whereas regions of very negative curvature spread out.
In this paper, we observed that communities in networks resemble regions in Riemannian manifolds of large positive curvature. By applying the discrete Ricci flow on networks as the classic Ricci flow on manifolds, we are able to detect community structures in networks. 

\begin{figure}[htbp]
    \centering
    \includegraphics[width=0.5\columnwidth]{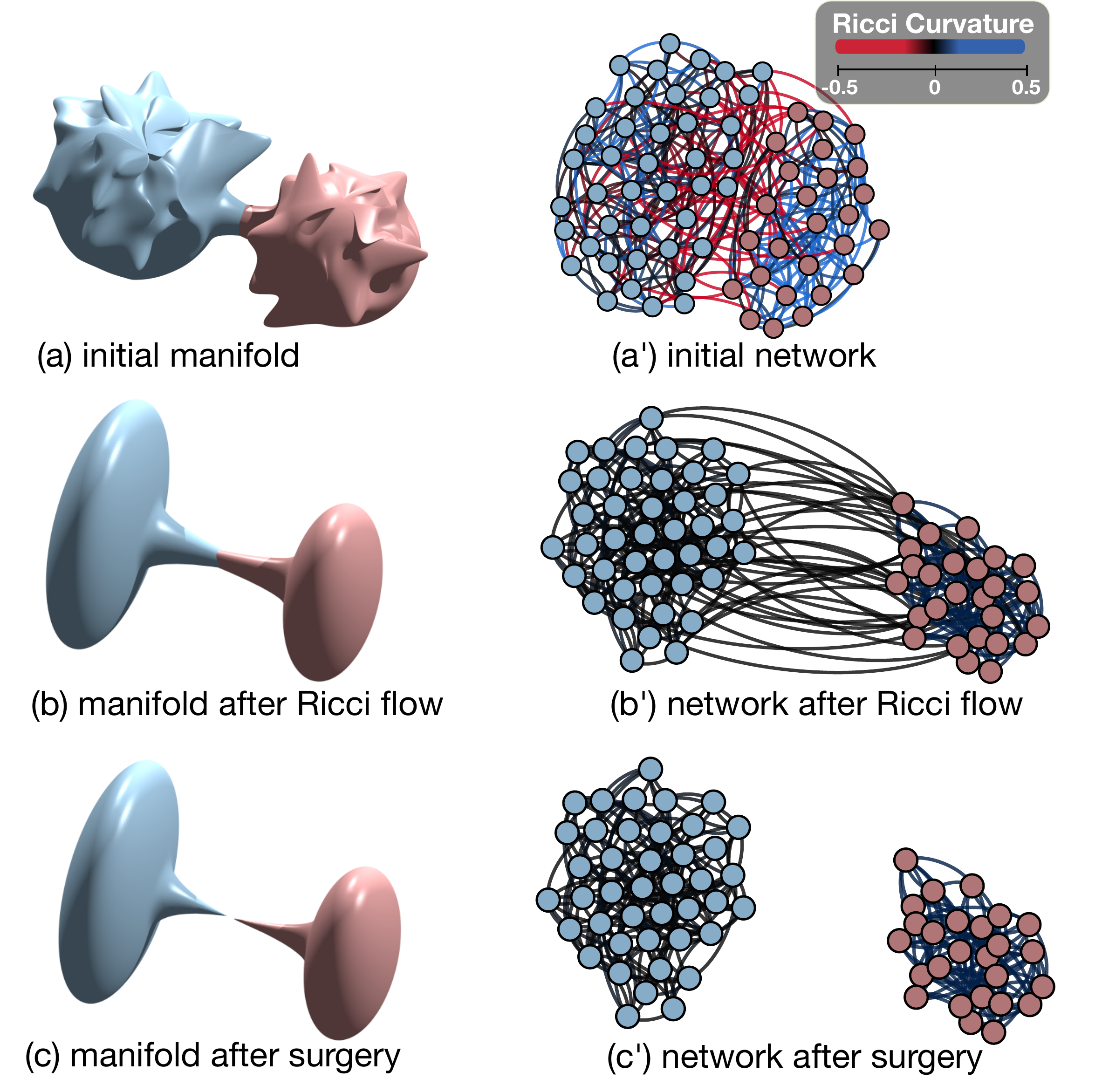}
    \caption{An illustration of Ricci flow on a manifold and a network. 
	Ricci flow captures the large positive curvature regions in the manifold as well as communities in the network. The formation of singularities in the Ricci flow is illustrated by (b, c).  The analog in (b', c') in the discrete Ricci flow decomposes the network into communities.}
    \label{fig:manifold} 
\end{figure}

Figure~\ref{fig:manifold} illustrates this key observation. In the left column, the Ricci flow deforms a Riemannian manifold(Figure~\ref{fig:manifold}(a)) gradually and develops a neck pinching singularity (Figure~\ref{fig:manifold}(b)). By removing the singularity, the manifold is decomposed into sub-regions of positive curvature(Figure~\ref{fig:manifold}(c)). In the right column, the discrete Ricci flow on a metric graph(Figure~\ref{fig:manifold}(a')) stretches edges of large negative Ricci curvature and shrinks edges of large positive Ricci curvature over time(Figure~\ref{fig:manifold}(b')). By removing the edges of weight greater than a threshold value, we recover subgraphs of large Ricci curvature representing communities(Figure~\ref{fig:manifold}(c')).

\subsection{Our Contribution}

To carry out the discretization process, we start from the recent important work of Y. Ollivier \cite{Ollivier2009-yl,Ollivier2010-ub, Lott2009-lp} which introduced Ricci curvature on metric measure spaces by using the optimal transport theory. 
Ollivier's definition for metric graphs assigns a probability measure to each node and the Ricci curvature of an edge is related to 
the optimal transportation cost 
between two probability measures defined on the vertices of the edge.
Various definitions of Ricci curvatures on networks have been used in graph analysis for applications such as anomaly detection, detection of backbone edges or cancer related proteins~\cite{Ni2015-yv,Samal2018-bt,Sreejith2016-yl,Wang2014-iy,Whidden2017-xa,Jost2014-rk,Sandhu2015-lz,Sandhu2016-vh,Ni2018-pv}.

Motivated by Hamilton's Ricci flow, we introduce an algorithm, called discrete Ricci flow on networks, for detecting community structures.
The discrete Ricci flow is defined on weighted graphs and deforms edge weights as time progresses: edges of large positive Ricci curvature (i.e., sparsely traveled edges) will shrink and edges of very negative Ricci curvature (i.e., heavily traveled edges) will be stretched. 
By iterating the Ricci flow process, we are able to identify heavily traveled edges and thus find communities. 

Figure~\ref{fig:karate_demo} illustrates how discrete Ricci flow detects communities on the Zachary's Karate club graph. 
In this graph, individuals in the same club are represented as nodes of the same color, and friendship ties between two individuals are represented as edges with weights equal to $1$. 
With discrete Ricci flow algorithm, the edge weights evolve such that the community structure can be easily detected by removing edges that are stretched greater than a threshold. Figure~\ref{fig:fb} shows another example of communities on a Facebook ego graph. We have also tested our Ricci flow algorithms on many of the real-world networks with ground-truth communities and artificial networks, and shown competitive accuracy results with other community detection algorithms using various statistical methods or physics models.

\begin{figure}[htbp]
    \centering
    \includegraphics[width=1\columnwidth]{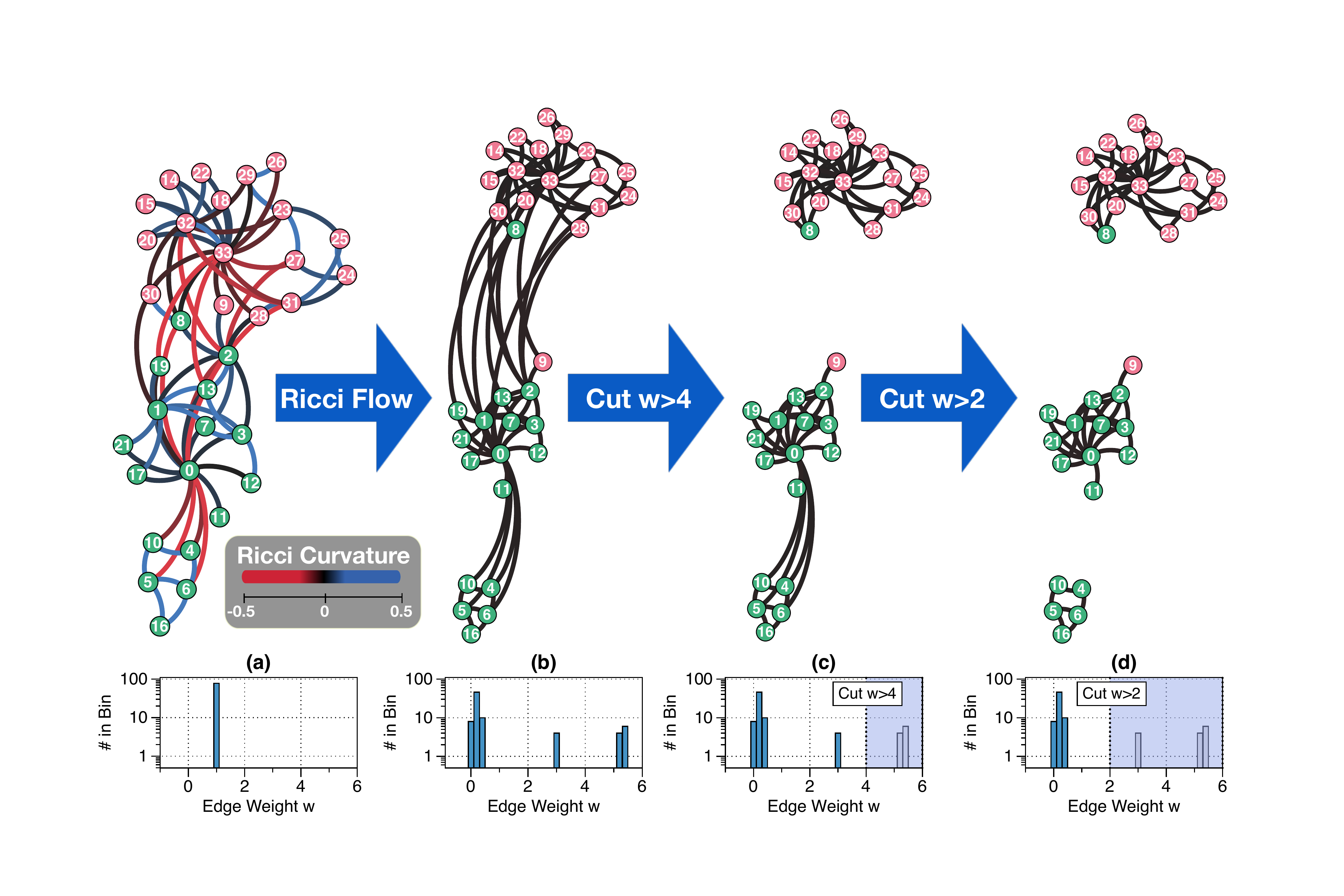}
    \caption{Ricci flow for community detection on the Karate club graph generated by Gephi's ForceAtlas2 layout\cite{ICWSM09154}. (a): The Karate club graph with edge weight $1$ on all edges. Different colors of vertices represent different communities. The colors of edges represent the Ricci curvature on the edges. Notice that most edges between communities are negatively curved. (b): The same graph after $100$ Ricci flow iterations. Ricci flow adjusts the edge weights so that the edge Ricci curvatures are the same everywhere; the intra-community edges shrink; the inter-community edges are stretched. (c) By removing all edges with weight greater than $4$, we acquire a partitioning of the graph with two communities. (d) By removing all edges with weight greater than $2$, we obtain three communities.}
    \label{fig:karate_demo} 
\end{figure}

\begin{figure}[htbp]
    \centering
    \includegraphics[width=1\columnwidth]{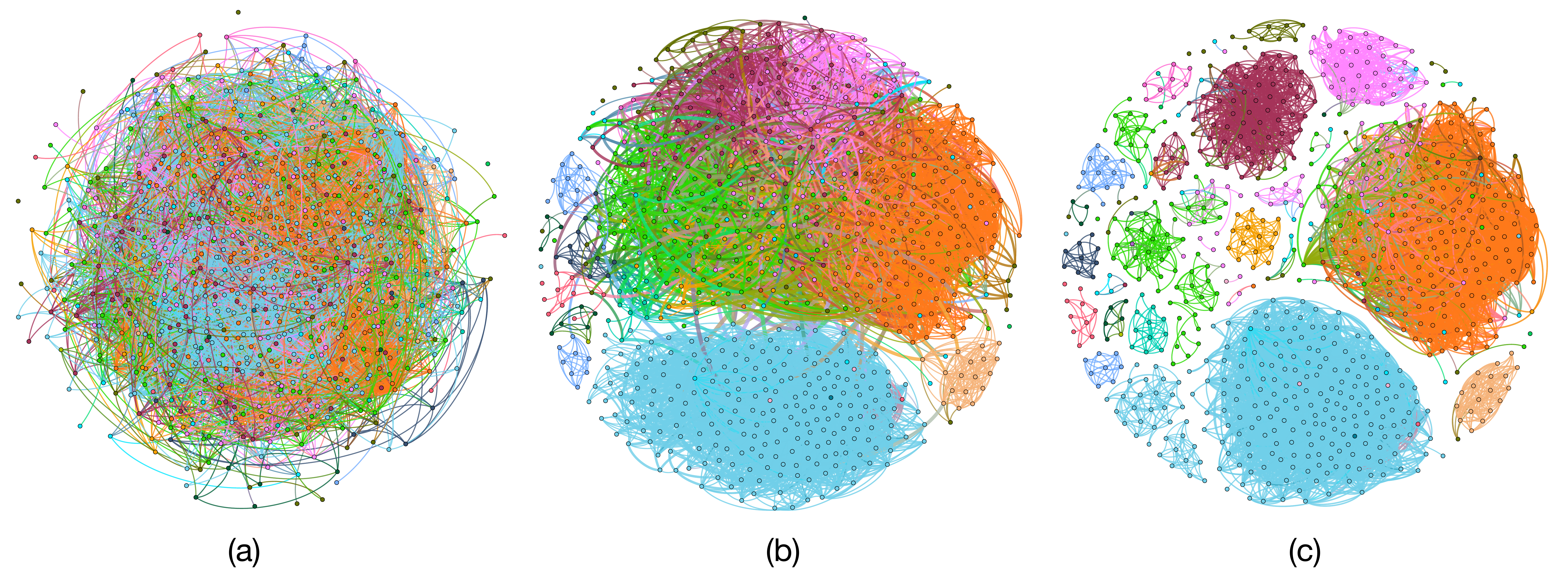}
    \caption{(a) A Facebook ego network of one user with $792$ friends and $14025$ edges generated by Gephi's Fruchterman Reingold layout\cite{ICWSM09154}. The colors represent $24$ different friend circles (communities) hand labeled by the user. (b) By the Ricci flow process of 20 iterations, the weights of inter-community edges are increased (thick edges in the figure) while the weights of intra-community edges gradually shrink to $0$ (thin edges in the figure). (c) By removing the inter-community edges with high weights, the communities are clearly detected.}
    \label{fig:fb} 
\end{figure}

We applied the discrete Ricci flow method on artificial networks generated by the stochastic block model(SBM)~\cite{JMLR:v18:16-480}, the Lancichinetti-Fortunato-Radicchi (LFR) benchmark graph~\cite{lancichinetti2008benchmark} and the emergent geometrical network model~\cite{Bianconi2014-cw,wu2015emergent} (GNet). We choose Adjusted Rand Index (ARI)~\cite{hubert1985comparing} as a quality measure for the clustering accuracy. The proposed discrete Ricci flow method is shown to provide nearly perfect clustering result when community structures exist. Also, extensive comparison tests on real networks with ground-truth communities show that our algorithm is competitive with previously proposed ones. Similar results have been observed with other metrics of clustering accuracy such as modularity.

Our work of Ricci curvature on networks is built on our previous works~\cite{Ni2015-yv,Ni2018-pv} and is also inspired by the important works of E. Saucan and J. Jost \etal in \cite{Samal2018-bt, Saucan2018-yg, Sreejith2017-md, Jost2014-rk}. In these works, they systematically introduced and investigated various discrete curvatures for complex networks. 
The comparative analysis of Forman and Ollivier Ricci curvature on benchmark datasets of complex networks and real-world networks was also carried out. 
Their numerical results show a striking fact that these two completely different discretizations of the Ricci curvatures are highly correlated in many networks.



\subsection{Related Work}

Ricci curvature on general spaces without Riemannian structures has been recently studied, in the work of Ollivier~\cite{Ollivier2010-ub, Ollivier2009-yl} on Markov chains, and Bakry and Emery~\cite{Bakry1985-tp}, Lott, Villani~\cite{Lott2009-lp}, Bonciocat and Sturm \cite{Bonciocat2009-tq, Bonciocat2014-qw} on general metric spaces. 
Ricci curvature based on optimal transportation theory, proposed by Ollivier (Ollivier-Ricci curvature)~\cite{Ollivier2010-ub, Ollivier2009-yl}, has become a popular topic and been applied in various fields -- for distinguishing cancer-related genes from normal genes~\cite{Sandhu2015-lz}, for studying financial market fragility~\cite{Sandhu2016-vh}, for understanding phylogenetic trees~\cite{Whidden2017-xa}, and for detecting network backbone and congestion~\cite{Ni2015-yv,Wang2014-iy, Wang2016-pm}. In \cite{Pal2018-cy}, Pal \etal proposed to use Jaccard coefficients for a proxy for Ollivier-Ricci Curvature.
Besides, discrete Ricci curvature has also been defined on cell complexes, proposed by Forman~\cite{Forman2003-ao} (Forman curvature or Forman-Ricci curvature). Forman curvature is based on graph Laplacian. It is easier and faster to compute than Ollivier-Ricci curvature, but is less geometrical. It is more suitable for large scale network analysis~\cite{Samal2018-bt, Weber2017-hu, Sreejith2016-yl,weber2018detecting} and image processing~\cite{Saucan2009-ih}. We have also experimented with Forman curvature for community detection. The results were less satisfying. So here we focus on Ollivier-Ricci curvature.

Unlike discrete Ricci curvature, discrete Ricci flow has not been studied as much. Chow and Luo introduced the first discrete Ricci flow on surfaces~\cite{chow2003combinatorial}. In \cite{Weber2017-hu}, Weber \etal suggested applying Forman-Ricci flow for anomaly detection in the complex network. In \cite{Ni2018-pv}, Ni \etal used the Ollivier-Ricci curvature flow to compute the Ricci flow metric as edge weights for the problem of network alignment (noisy graph matching).

Community detection is a well-studied topic in social network analysis~\cite{yang2016comparative, Plantie2013-dy, leskovec2010empirical, Fortunato2010-pe,Pares2018-yq,Yin2017-nq,Newman2016-nj,Decelle2011-fh}, and protein-protein interaction networks~\cite{Ji2014-cw, bhowmick2015clustering}. There are a few main ideas. One family of algorithms iteratively remove edges of high `centrality', for example, the edge betweenness centrality as suggested in \cite{Girvan2002-cq} by Girvan and Newman. The other idea is to use modularity (introduced by Newman and Clauset \etal~\cite{Newman2006-xe, Clauset2004-bp}), which measures the strength of division of a graph into clusters, as the objective of optimization.
But methods using modularity suffer from a resolution limit and cannot detect small communities.
A geometric extension, named Laplacian modularity, is also suggested with the help of Gauss's law in \cite{Sinha2018-nx}.
Another family of algorithms borrows intuitions from other fields. 
In \cite{Reichardt2006-zs}, a spin glass approach uses the Potts model from statistical physics: every node (particle) is assigned one of $c$ spin states (communities); edges between nodes model the interaction of the particles. 
The community structure of the network is understood as the spin configuration that minimizes the energy of the spin glass.
In \cite{Raghavan2007-wf}, Raghavan \etal proposed a non-deterministic label propagation algorithm for large networks. In the initial stage, the algorithm randomly assigns each node in the graph one of $c$ labels. Each node then changes its label to the most popular label among its neighbors. 
Infomap~\cite{Rosvall2008-ns} uses an information theoretic approach. A group of nodes for which information flows quickly shall be in the same community. The information flow is approximated by random walks and succinctly summarized by network coding. 

Taking a geometric view of complex networks is an emerging trend, as shown in a number of recent work. For example, the community structures were used as a coarse version of its embedding in a hidden space with hyperbolic geometry~\cite{Faqeeh2018-lo}. Topological data analysis, a typical geometric approach for data analysis, has been applied for analyzing complex systems~\cite{Salnikov2018-he}.




\section{Classical theory of Ricci curvature, Optimal transport and the Ricci flow}

In this section, we briefly recall the basic notation of Ricci curvature in Riemannian geometry, Ollivier's work on generalizing Ricci curvature to metric measure spaces through optimal transport, and the Ricci flow. Their discrete and computational counterparts are addressed in  Section~\ref{sec:main-theory}.

\subsection{Sectional and Ricci curvature} 
One of the central themes in modern geometry is the notion of curvature which quantitatively measures how space is curved. It was introduced by Gauss and Riemann. For a surface in the 3-dimensional Euclidean space, the \emph{Gaussian curvature} at a point is defined as the signed area distortion of the Gauss map sending a point on the surface to its unit normal vector. For instance, a plane has zero curvature, a sphere has positive curvature and a hyperboloid of one sheet has negative curvature (Figure~\ref{fig:curvature}). Gauss showed that curvatures depend only on the induced Riemannian metric on the surface, i.e., independent of how a surface is embedded in the 3-dimensional space. 

\begin{figure}[htbp]
    \centering
    \includegraphics[width=1\columnwidth]{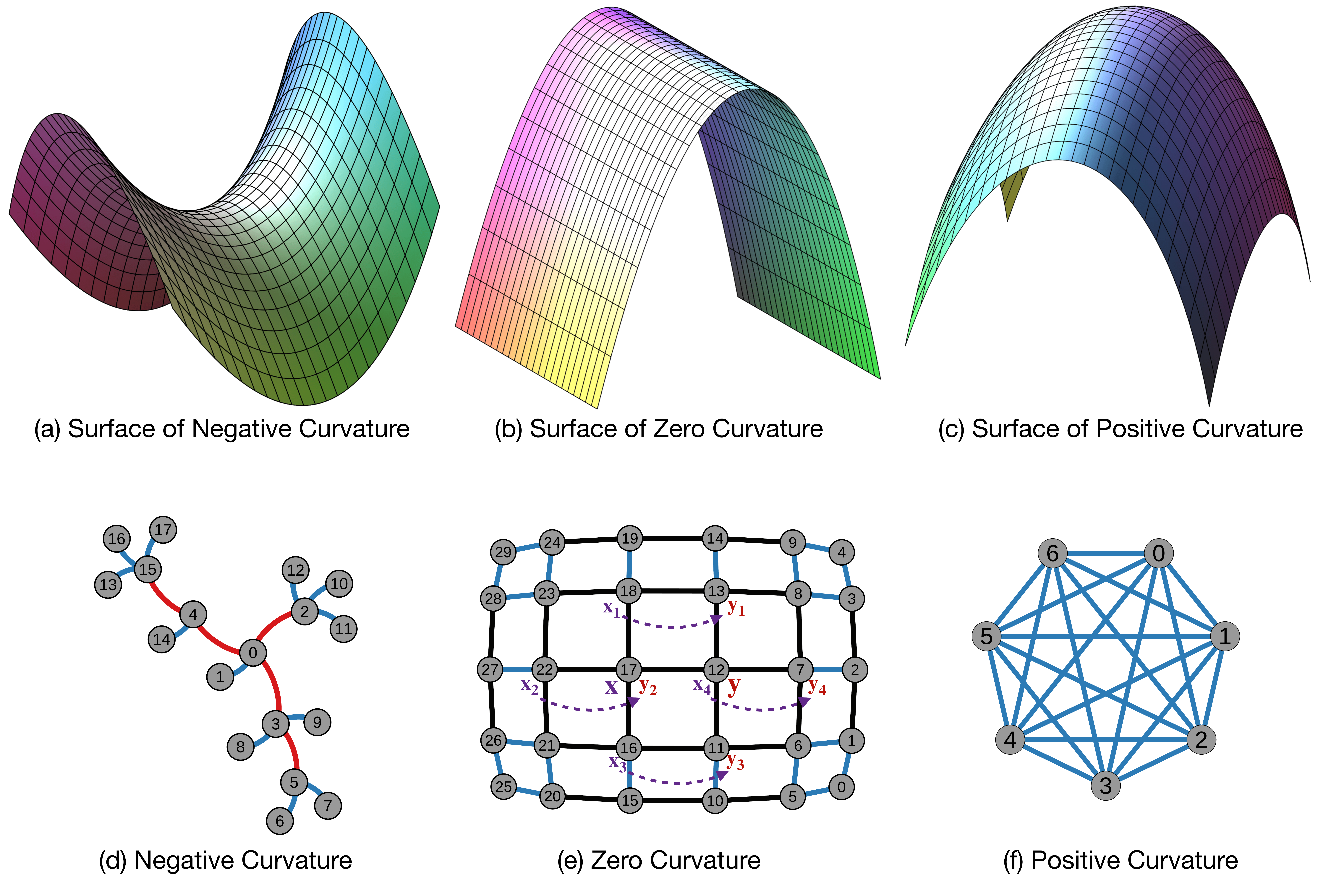}
    \caption{Examples of Ricci curvature on manifolds and graphs. In (a)-(c), manifolds with negative, zero, and positive curvatures are shown. In (d)-(f), all edges have weight of $1$. (d) A tree graph with negative curvature everywhere, except the edges of the leaf nodes. (e) A (infinitely sized) grid graph with all edges of zero curvature. The cost of moving $m_x=\{x, x_1,x_2,x_3,x_4\}$ to $m_y=\{y,y_1,y_2,y_3,y_4\}$ is equal to $d(x,y)$ (f) A complete graph with all edges of positive curvature. }
    \label{fig:curvature} 
\end{figure}

For a Riemannian manifold $(M, g)$, Riemann's \emph{sectional curvature} assigns a scalar to each 2-dimensional linear subspace $P$ in the tangent space at a point $p$ of $M$. This scalar is equal to the Gaussian curvature of the image of $P$ under the exponential map at $p$.
A positive sectional curvature space tends to have a small diameter and is geometrically crowded (e.g., a sphere). In contrast, a negative sectional curvature closed Riemannian manifold has an infinite fundamental group, a contractible universal cover, and is geometrically spreading out like a tree in large scale. Thus, a positively curved region behaves more like a `community' than negatively curved regions.
Similar to sectional curvature, the \emph{Ricci curvature} assigns each unit tangent vector $v$ at $p$ a scalar which is the average of the sectional curvatures of
planes containing $v$. Geometrically, Ricci curvature
controls how fast the volume of a ball grows as a function
of the radius. It also controls the volume of the overlap of two balls
in terms of their radii and the distance between their centers.
On the other hand, the volume of the overlap of two balls is directly related to the cost of transportation to move one ball to the other, i.e., a larger volume of overlap means less cost of moving one ball to the other. It shows that the Ricci curvature is related to optimal transportation. An explicit formula (Equation~\ref{eq345}) that builds a bridge between them was worked out by Ollivier~\cite{Ollivier2009-yl}. Through the formula, Ollivier defined the generalized Ricci curvature on metric measure spaces by the optimal transportation.

\subsection{The optimal transportation and Ollivier's Ricci curvature.}
The original optimal transport problem was proposed by G. Monge in 1781. The problem wants to minimize the transportation cost to move iron ores from different mines to a collection of factories which consume the iron ores. In Monge's setting, the problem can be mathematically formulated as follows.
Let mines and factories as two probability spaces $X$ and $Y$; the amount of iron ores to be moved and consumed as two probability Borel measures $\mu$ and $\nu$, we define the cost of transporting from location $x$ to location $y$ to be $c(x,y)$, where $c : X \times Y  \to \R_{\geq 0}$.
In general, the cost function $c$ is usually taken to be the distance $d(x, y)$ if $X=Y$ and the cost of transportation per-unit distance is constant. A \emph{transportation} $T: (X, \mu) \to (Y, \nu)$ is a measure
preserving map. Monge's formulation of the optimal transportation
problem is to find a transportation $T : X \to Y$ that realizes ${ \inf \left\{\left.\int
_{X}c(x,T(x))\,\mathrm {d} \mu (x)\;\right| \text{T: transportation} \right\}}$.

Monge's optimal transportation problem had a major breakthrough in 1930 when Kantorovich formulated the optimal transportation problem into a linear optimization problem. In his setting, Kantorovich replaces transportation maps $T$ by probability measures $\gamma$ on $X \times Y$ (called \emph{transportation plans}) satisfying $\gamma(A \times Y)=\mu(A)$
and $\gamma(X \times B)=\nu(B)$ for all measurable subsets $A$ and $B$. The goal is to find a transportation plan $\gamma$
that attains the infimum cost $$W(\mu, \nu) ={\displaystyle \inf
\left\{\left.\int _{X\times Y}c(x,y)\,\mathrm {d} \gamma
(x,y)\right|\gamma \in \Gamma (\mu ,\nu )\right\}},$$ where
$\Gamma(\mu, \nu)$ denotes the collection of all possible transportation plans. If $X$ is a metric space with distance function $d$ and $X=Y$, the quantity $W(\mu, \nu)$ for $c(x,y)=d(x,y)$ is called the \emph{Wasserstein distance} (or the earth mover's distance) between two probability measures $\mu,
\nu$ on $X$.

Wasserstein distance plays a crucial role in Ollivier's approach to Ricci curvature. In his observation~\cite{Ollivier2009-yl},
if $(M^n,d)$ is an n-dimensional Riemannian manifold with Riemannian volume $\mu$ and fix $\epsilon>0$, let $m_x=\frac{\mu|_{B(x, \epsilon)}}{\mu(B(x, \epsilon))}$
be the probability measure associated to $x \in M$ where $B(x, \epsilon)$ is the ball of radius $\epsilon$ at $x$. Then the
Wasserstein distance $W(m_x, m_y)=(1-k(x,y))d(x,y)$, where
\begin{equation} \label{eq345} k(x,y)=\frac{\epsilon^2
\text{Ricci}(v,v)}{2(n+2)}+O(\epsilon^3+\epsilon^2 d(x,y))\end{equation} and
$v$ is the tangent vector at $x$ of the geodesic $xy$.  This shows
that Ricci curvature can be defined for general metric spaces with
measures. Given a metric space $(X, d)$ equipped with a
probability measure $m_x$ for each $x\in X$, the Ollivier's Ricci
curvature along the path $xy$ is defined to be 

\begin{equation}\label{eqn:ollivier-ricci}
   \kappa_{xy}=1-\frac{W(m_x, m_y)}{d(x,y)}, 
\end{equation}
where $W(m_x, m_y)$ is the
Wasserstein distance with respect to $c(x, y)=d(x,y)$.

\subsection{The Ricci flow}

The Ricci flow, introduced by Richard S. Hamilton in 1981~\cite{hamilton1982three}, deforms the metric of a Riemannian manifold in a way formally analogous to the diffusion of heat, smoothing out irregularities in the metric. The Ricci flow has been one of the most powerful tools for solving geometric problems in the past forty years. The flow exhibits many similarities with the heat equation.

Suppose a Riemannian metric $g_{ij}$ is given on a manifold $M$ so that its Ricci curvature is $R_{ij}$. Hamilton's Ricci flow is the following second-order nonlinear partial differential equation on symmetric $(0, 2)$-tensors:
$$\frac{\partial}{\partial t}g_{ij}=-2 R_{ij}.$$
A solution to the Ricci flow is a one-parameter family of metrics $g_{ij}(t)$ on a smooth manifold $M$ satisfying the above partial differential equation. One of the key properties of the Ricci flow is that the curvature evolves according to a nonlinear version of the heat equation. Thus the Ricci flow tends to smooth out irregularity of the curvature. Under the Ricci flow, regions in the manifold of positive sectional curvature tend to shrink and regions of negative sectional curvature tend to expand and spread out.
Singularities usually occur while deforming a Riemannian 3-manifold through the Ricci flow.  They appear in a small neighborhood of a surface in the 3-manifold. By removing the singularities (i.e., surfaces) and redefining the Ricci flow on the remaining pieces, one produces the Ricci flow with surgery on the manifold. Figure~\ref{fig:manifold} (b) and (c) illustrate the formation of a singularity and the `surgery' operation. The ground-breaking work of Perelman~\cite{Perelman2002-gs} shows that the Ricci flow with surgery captures the geometric decomposition of the 3-manifold. It solves the Geometrization Conjecture of Thurston and geometrically classifies all 3-manifolds.

Ricci flow enables a better understanding of the evolution and community structure of networks. In our heuristic thinking, a network is analogous to a discretization of high dimensional manifold (say a 3-manifold) and communities in the network are analogous to the components in the geometric decomposition of the 3-manifold. Since Perelman's work~\cite{Perelman2002-gs} proved that the Ricci flow is able to predict geometric components of a 3-manifold, it suggests that a discrete Ricci flow on the network should be able to detect the community structure. Just like in Hamilton-Perelman's work on Ricci flow, the cutoff number of iterations and threshold value for surgery in Ricci flow depend on individual networks.


\section{Theory and Algorithms on Discrete Ollivier Ricci Flow}
\label{sec:main-theory}

In this section, we introduce our discrete Ricci flow algorithm for community detection on the network. We started with the definition of Ricci curvature by Ollivier in Equation~\ref{eqn:ollivier-ricci}, for each node $x$ on a metric graph $G=(V,E,w)$, we define a mass distribution $m_x$ on $x$'s neighbor nodes. 
A discrete \emph{transport plan} is a map $A: V \times V \to [0,1]$ such that $A(u, v)$ is the amount of mass at vertex $v$ to be moved to vertex $u$. It satisfies $\sum_{v' \in V} A(u,v')=m_x(u)$ and $\sum_{ u' \in V} A(u',v)=m_y(v)$. The Wasserstein distance here $W(m_x,m_y)$ is defined as the minimum total weighted travel distance to move $m_x$ to $m_y$, i.e., 
$ W(m_x,m_y)= \inf \{\sum_{u,v \in V} A(u,v) d(u,v) \} $.
The discrete Ricci curvature on a network edge $xy \in E$ is defined as 
\begin{equation*}
    \kappa_{xy} = 1 -\frac{ W(m_x,m_y)}{d(x,y)},
\end{equation*}
where $d(x,y)$ is the length of the shortest path between nodes $x$ and $y$. 

Under this definition, if two nodes $x$ and $y$ are from different communities, their neighbor nodes tend to have fewer common neighbors, hence the best way to move $m_x$ from $x$'s neighbors to $m_y$ in $y$'s neighbors is to travel along the edge $xy$. Because of this, the Wasserstein distance is necessarily larger than the length of $xy$, which leads to negative Ricci curvature.
Alternatively, nodes within the same community tend to share neighbors or have shortcut between neighbors, thus have a Wasserstein distance no greater than $d(x, y)$. Therefore intra-community edges are mostly positively curved. See Figure~\ref{fig:curvature} for examples of network edges of positive, zero and negative curvatures.

Note that the probability distribution $m_x$ for $x \in V$ needs to be specified. In previous work~\cite{Lin2011-wk}, the probability distribution is uniform on $x$'s neighbors.
In this paper, we suggest a more general family of probability distributions $m^{\alpha,p}_x$, with two parameters: $\alpha \in [0,1]$ and power $p \ge 0$: 
\begin{equation*}
m^{\alpha,p}_x(x_i)=
\begin{cases}
\alpha & \mbox{\ if } x_i = x\\
\frac{1-\alpha}{C}\cdot  \exp(-d(x,x_i)^p) & \mbox{\ if } x_i \in \pi(x)\\
0 & \mbox{\ otherwise }.
\end{cases}
\end{equation*}
Here $C=\sum_{x_i \in \pi(x)} \exp(-d(x,x_i)^p)$ is a normalization factor and $\pi(x)$ is the set of neighbors of $x$. The parameter $\alpha$ determines the probability to remain at $x$. The power parameter $p$ determines how much we want to discount the neighbor $x_i$ of $x$ with respect to the weight $d(x,x_i)$. When $p=0$, the probability measure is uniform on all neighbors of $x$ as suggest in~\cite{Lin2011-wk}. For a large $p$, the neighbors that are far away from $x$ are aggressively discounted.


The discrete Ricci flow algorithm on a network is an evolving process. In each iteration, we update all edge weights simultaneously by the following flow process:

$$ w_{xy}^{(i+1)} = d^{(i)}(x,y) - \kappa_{xy}^{(i)} \cdot d^{(i)}(x,y), $$
where $w_{xy}^{(i)}$ is the weight of the edge $xy$ at the $i$-th iteration, and $\kappa_{xy}^{(i)}$ is the Ricci curvature at the edge $xy$ at the $i$-th iteration, and $d^{(i)}(x,y)$ is the shortest path distance on the graph induced by the weights $w_{xy}^{(i)}$. Initially $w_{xy}^{(0)}=w_{xy}$ and $d_{xy}^{(0)}=d_{xy}$. The detailed algorithm is presented in supplementary information.

This discrete Ricci flow process expands negatively curved edges and shrinks positively curved edges. Eventually, nodes connected by intra-community edges are condensed and inter-community edges are stretched. By this effect, a simple thresholding procedure can easily separate different communities. This is termed network `surgery' when edges of large weights (likely inter-community edges) are removed after several Ricci flow iterations (usually 10 to 15 iterations). See Figure~\ref{fig:karate_demo} as an example for the surgery process. For networks with hierarchical community structures, we may perform multiple rounds of network surgery and Ricci flow to fully separate communities at different scales.

\section{Results}

\subsection{Theoretical Results}
We can prove rigorously that the Ollivier-Ricci flow with respect to the specific choice of $\alpha=0$ and $p=0$ can successfully detect community structure for the following $G(a,b)$ family of graphs (Please refer to supplementary information for further detail). Take the complete graph on $b+1$ vertices $p_1, ..., p_{b+1}$ and $b+1$ complete graphs $C_1, ..., C_{b+1}$ on $a+1$ vertices. Take a vertex $u_i$ from each $C_i$ and identify $u_i$ with $p_i$. The resulting graph is $G(a,b)$. 
For $a>b$, this is a highly symmetric graph with a clear community structure -- each copy of $C_i$ is a community and there are $b+1$ of them. Between any two communities $C_i, C_j$, there is only one edge $u_iu_j$ joining them. This community structure can be detected by the Ollivier-Ricci flow with respect to the Ollivier-Ricci curvature $K_0$ corresponding to $\alpha=0, p=0$ in Section~\ref{sec:main-theory}. More precisely, the Ollivier-Ricci curvature $K_0$ is associated with the
probability distribution $\mu_x$ 
such that $\mu_x(y)=1/d_x$ if $y$ is adjacent to $x$ and $\mu_x(y)=0$ otherwise. In this case, we are able to compute explicitly the Ollivier-Ricci curvature at the $n$-th iteration of the Ricci flow and confirm how the weights of the network edges evolve over time. 

\begin{theorem} The Ricci flow associated to the Olivier $K_0$-Ricci curvature detects the community structure on $G(a,b)$ if $a > b \geq 2$, namely, the weight of the intra-community edges shrink asymptotically faster than the weight of the inter-community edges.
\end{theorem}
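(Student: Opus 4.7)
The plan is to exploit the high symmetry of $G(a,b)$ to reduce the discrete Ricci flow to a three-dimensional iteration and then analyse the resulting linear dynamics via Perron--Frobenius. The automorphism group of $G(a,b)$ acts on edges with three orbits: (A) edges inside some $C_i$ whose endpoints are both distinct from the hub $u_i$; (B) edges inside $C_i$ incident to $u_i$; and (C) edges $u_iu_j$ between two hubs. Since the initial weights are uniform and the Ricci flow commutes with graph automorphisms, a direct induction on the iteration number $n$ shows that the three orbits carry common edge weights at every step; denote these $\alpha_n$, $\beta_n$, $\gamma_n$, with $\alpha_0=\beta_0=\gamma_0=1$.

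Next I would derive the recurrences. One first checks inductively that the direct edge is the shortest path between its endpoints in each orbit: the only competitor for a type-A edge is a route of length $2\beta_n$ through the hub, and the bound $\alpha_n\le 2\beta_n$ is preserved by the recurrences below precisely because $a>b$. With the distance matrix fixed, the Wasserstein distance $W(m_x,m_y)$ under the uniform measure $m_x=1/\deg(x)$ becomes a finite transportation LP. Averaging over the orbits of non-hub vertices and of hubs reduces each LP to an optimisation in a single real parameter. For type A the two measures share $a-1$ common neighbours at equal mass, leaving only $1/a$ of mass to move along the edge, so $\alpha_{n+1}=\alpha_n/a$. For type B the reduced cost is affine in the parameter with slope $\alpha_n-2\beta_n\le 0$, so the optimum fills all of $x$'s demand from the common non-hub neighbours and yields
\[
\beta_{n+1} \;=\; \frac{a\,\alpha_n + (ab-a-b)\,\beta_n + ab\,\gamma_n}{a(a+b)} .
\]
For type C the slope is $2\gamma_n>0$, so the optimum routes no mass directly along $u_iu_j$ and one reads off
\[
\gamma_{n+1} \;=\; \frac{2a\,\beta_n + (a-1)\,\gamma_n}{a+b} .
\]

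The finish is linear algebra. The first recurrence decouples and gives $\alpha_n=a^{-n}$, geometrically negligible. Dropping the $\alpha$ term, $(\beta_n,\gamma_n)$ evolves under
\[
M \;=\; \begin{pmatrix} (ab-a-b)/(a(a+b)) & b/(a+b) \\ 2a/(a+b) & (a-1)/(a+b) \end{pmatrix},
\]
whose entries are strictly positive once $a>b\ge 2$, since then $(a-1)(b-1)>1$ forces $ab>a+b$. Perron--Frobenius supplies a simple dominant eigenvalue $\lambda_1>0$ with strictly positive eigenvector $(\beta^{*},\gamma^{*})$. To compare the two weights I would apply the Collatz--Wielandt bound to the trial vector $(1,1)^{\top}$: a direct comparison of row sums gives $M_{11}+M_{12}=(2ab-a-b)/(a(a+b)) < (3a-1)/(a+b) = M_{21}+M_{22}$ for $a>b\ge 2$, so $(1,1)^{\top}$ is not an eigenvector and hence $\lambda_1<M_{21}+M_{22}$ strictly. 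Reading the ratio from the second row of $Mv=\lambda_1 v$ then gives $\beta^{*}/\gamma^{*}=(\lambda_1-M_{22})/M_{21}<1$, and together with $\alpha_n/\gamma_n\to 0$ this shows that both intra-community edge weights become asymptotically smaller than the inter-community weight, as claimed.

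The hardest part is the transport step: one must certify that the symmetric one-parameter family of plans actually attains the Wasserstein minimum, and that no asymmetric multi-source reassignment does better. Here the symmetry/averaging argument shows that a symmetric optimum exists, so the single-variable sign-of-slope analysis suffices, but this reduction needs to be set up carefully orbit by orbit. A secondary subtlety is maintaining the shortest-path hypothesis $\alpha_n\le 2\beta_n$ throughout the induction, which is where the strict inequality $a>b$ is used: from the recurrences one sees $\beta_{n+1}\ge \alpha_n/(a+b) \ge \alpha_n/(2a) = \alpha_{n+1}/2$ precisely when $a\ge b$. Once these two points are secured, the Perron--Frobenius and Collatz--Wielandt arguments are routine.
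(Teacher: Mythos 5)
Your proposal is correct and follows the same overall strategy as the paper: the same three-orbit symmetry reduction, and in fact the identical linear recurrences (your $\alpha_{n+1}=\alpha_n/a$, $\beta_{n+1}$, $\gamma_{n+1}$ coincide term-by-term with the paper's $D_3, D_2, D_1$ and its $3\times 3$ transition matrix $A$). Where you genuinely diverge is the endgame. The paper analyzes the full $3\times 3$ matrix and simply asserts, ``using Maple calculation,'' that the eigenvalues satisfy $\lambda_1>\lambda_2=1/a\ge 0>\lambda_3$ and that the dominant eigenvector has the form $[1,k,0]^t$ with $k\in(0,1)$. You instead decouple the $\alpha$-equation, observe that the remaining $2\times 2$ block is strictly positive when $a>b\ge 2$ (via $(a-1)(b-1)>1$), and derive $\beta^*/\gamma^*<1$ by hand from a Collatz--Wielandt row-sum comparison; this is a hand-verifiable replacement for the symbolic computation and, as a bonus, Perron--Frobenius automatically guarantees that the initial vector $(1,1)^t$ has a strictly positive component along the dominant eigenvector --- a point the paper's expansion $W_0=a_1w_1+a_2w_2+a_3w_3$ silently assumes. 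You are also more careful than the paper on two fronts it leaves implicit: maintaining the shortest-path hypothesis $\alpha_n\le 2\beta_n$ along the iteration (which is also exactly the condition making your type-B slope nonpositive), and acknowledging that optimality of the symmetric transport plans needs a symmetrization/averaging certificate rather than mere exhibition. Two small blemishes: your phrase ``routes no mass directly along $u_iu_j$'' for type C is misleading --- the optimal plan does carry $\tfrac{a-1}{a+b}$ of mass across that edge as part of the $x\to u_i\to u_j\to y$ paths; what vanishes is the back-transfer from the surplus at $u_j$ to the deficit at $u_i$ --- and when you drop the $\alpha_n$ forcing term from the $\beta$-recurrence you should note that $\lambda_1\ge\frac{2ab-a-b}{a(a+b)}>\frac1a$ so that this inhomogeneity does not perturb the limiting ratio $\beta_n/\gamma_n$. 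Neither affects the correctness of the argument, and your conclusion ($\beta_n/\gamma_n\to k<1$, $\alpha_n/\gamma_n\to 0$) is exactly the paper's.
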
\label{thm:convergence}
\begin{proof}
Please refer to supplementary information.
\end{proof}

\subsection{Experimental Results}

In this section, we explain the model networks and real-world datasets used to evaluate the community detection accuracy of our method. For the model network, we tested the growing geometrical network model with emergent complex geometry(GNet), and two models that provides community labels: the standard and widely used stochastic block model(SBM), and the Lancichinetti-Fortunato-Radicch benchmark model(LFR) that generates graphs of power-law degree distributions. For real-world datasets, we picked $6$ different community graphs that come with \emph{ground-truth community} labels. 
More detailed experiments can be found in supplementary information.

\subsubsection{Model Networks and Real World Datasets}

\paragraph{Stochastic Block Model}
The \emph{stochastic block model} (SBM) is a probabilistic graph model~\cite{JMLR:v18:16-480}. A graph following the stochastic block model has $n$ vertices, which are partitioned into $k$ communities. Two nodes within a community are connected with probability $p_{intra}$ while two nodes in different communities are connected with probability $p_{inter}$, $p_{intra}>p_{inter}$.






\paragraph{Lancichinetti-Fortunato-Radicch Model} 
The Lancichinetti-Fortunato-Radicch (LFR) benchmark~\cite{lancichinetti2008benchmark} generates undirected unweighted networks with non-overlapping communities. The model produces networks with both degree and community size satisfying power-law distributions. This model is also commonly used to evaluate community detection algorithms~\cite{yang2016comparative}. 

%
%
%


\vspace*{3mm}
\noindent\textbf{Emergent Geometrical Network Model}
The emergent geometrical network model~\cite{Bianconi2014-cw,wu2015emergent} (GNet) describes a growing network with high clustering coefficient using the triadic closure property. It is observed to have non-trivial community structures. One version described in~\cite{wu2015emergent} could grow a geometric network. It is composed of the skeleton of a simplicial complex in which a set of $2$-simplices are glued together properly. The generation of this model is controlled by the designated number $m$ of 2-simplices glued along a 1-simplex (edge), and the probability $p$ of connecting two nodes with hop distance $2$.

\paragraph{Real World Datasets}

For real world datasets, we choose networks that provide ground truth communities from KONECT\cite{Kunegis2013-bg}, UCI network data repository and Stanford Network Analysis Project~\cite{snapnets}. The statistics of the real world datasets are summarized in Table~\ref{tbl:realworld_dataset}. In the followings, we briefly describe the datasets.

\begin{itemize}
\item \emph{Karate club network}
The Karate club network dataset was collected from the members of a university karate club by Wayne Zachary in 1970s
. The network is undirected in which nodes represent members of the club, and edges represent ties between two members. This dataset is generally used to find the two groups of people into which the karate club fission after a conflict between two faculties. 

\item \emph{American college football network}
The American college football network is a representation of the schedule of Division I games during the season Fall $2000$ and was previously used for community detection by Girvan and Newman~
. Each node represents a football team and each edge indicates a game between two teams. The community structure of the network is given by partitioning the teams into $12$ conferences. Games held between teams of the same conference are held more frequently than games played between different conferences. 

\item \emph{Political books network}
This is a network of books about US politics published around the time of the $2004$ presidential election and sold by the online bookseller \emph{Amazon.com}. Edges between books represent frequent co-purchasing of books by the same buyers. 

\item \emph{Political blogs network}
The $2004$ U.S. Presidential Election was notably influenced by blogs. The political blogs network dataset was collected by Adamic and Glance~
 in $2005$. The posts published by either liberal or conservative bloggers are represented by nodes. Any two nodes are connected by an edge if one of them is cited by the other. 

\item \emph{Ego-network from Facebook}
The ego-network dataset
 consists of `friend circles' of one anonymous user and his/her friends on Facebook. The network forms friend circles such as family members, high school friends or other friends that are `hand labeled' by the user. To normalize the influence of users belongs to multiple circles, we treat the overlaid circle as a new circle.

\item \emph{Email-EU-core network}
The Email-Eu-core network was formed by the email contacts between members of a large European research institution. The members are represented by nodes where any pair of nodes are connected by an edge if they have had contacts through e-mail. Each individual belongs to exactly one of 42 departments at the research institute.

\end{itemize}

\begin{table}
\centering
\caption{Real World Dataset}
\label{tbl:realworld_dataset}
\begin{tabular}{lrrrrrr}
\hline
Datasets    & V     & E      & \#Class & AvgDeg  & CC     & Diameter \\
\hline
\hline
Karate Club & 34    & 78     & 2       & 4.5882  & 0.5706 & 5        \\
Football    & 115   & 613    & 12      & 10.6608 & 0.4032 & 4        \\
Polbooks    & 105   & 441    & 3       & 8.4000  & 0.4875 & 7        \\
Polblogs    & 1222  & 16714  & 2       & 27.3552 & 0.3203 & 8        \\
FB-Ego      & 792   & 14025  & 24      & 35.417  & 0.483  & 10       \\
Email-EU-core& 1005 & 16064  & 42      & 31.968  & 0.450  & 7        \\                
\hline   
\end{tabular}
\end{table}


\subsubsection{Experimental Results}

To evaluate the clustering accuracy of our algorithm, we tested the clustering result with two different metrics: Adjusted Rand Index (ARI) and modularity. ARI measures the accuracy of clustering result with the \emph{ground truth} clustering. Modularity quantifies the strength of the community structure of a given graph without the need of ground-truth clustering. 

\paragraph{Clustering Accuracies} 

The Clustering accuracies of applying discrete Ricci flow for $50$ iterations is shown in Figure~\ref{fig:evaluation}. In Figure \ref{fig:evaluation}(a) and Figure \ref{fig:evaluation}(b), the parameters $p_{inter}/p_{intra}$ of the SBM and $\mu$ of the LFR indicate the magnitude of community structure of the models respectively. In both models, higher parameter values in $x$-axis indicate weaker community structures.
We choose the adjusted Rand index (ARI)~\cite{hubert1985comparing} as the quality measure for the clustering accuracy compared with the ground truth, as shown in the vertical axes. The ARI score represents the agreement of partitioned node pairs in ground truth communities and clustered communities. The higher the ARI score is, the more accurate our detected communities are. The results of Ricci flow algorithm show robust detection of community structures that compares favorably with prior algorithms -- with a sharp phase transition from nearly $100\%$ accuracy for SBM models with $p_{inter}/p_{intra}=0.5$ (almost all nodes separated correctly) to nearly $0\%$ accuracy for models with $p_{inter}/p_{intra}=0.55$ (meaning the non-existence of community structure). Similar results have been observed with modularity.


\begin{figure}[htbp]
    \centering
    \includegraphics[width=1\columnwidth]{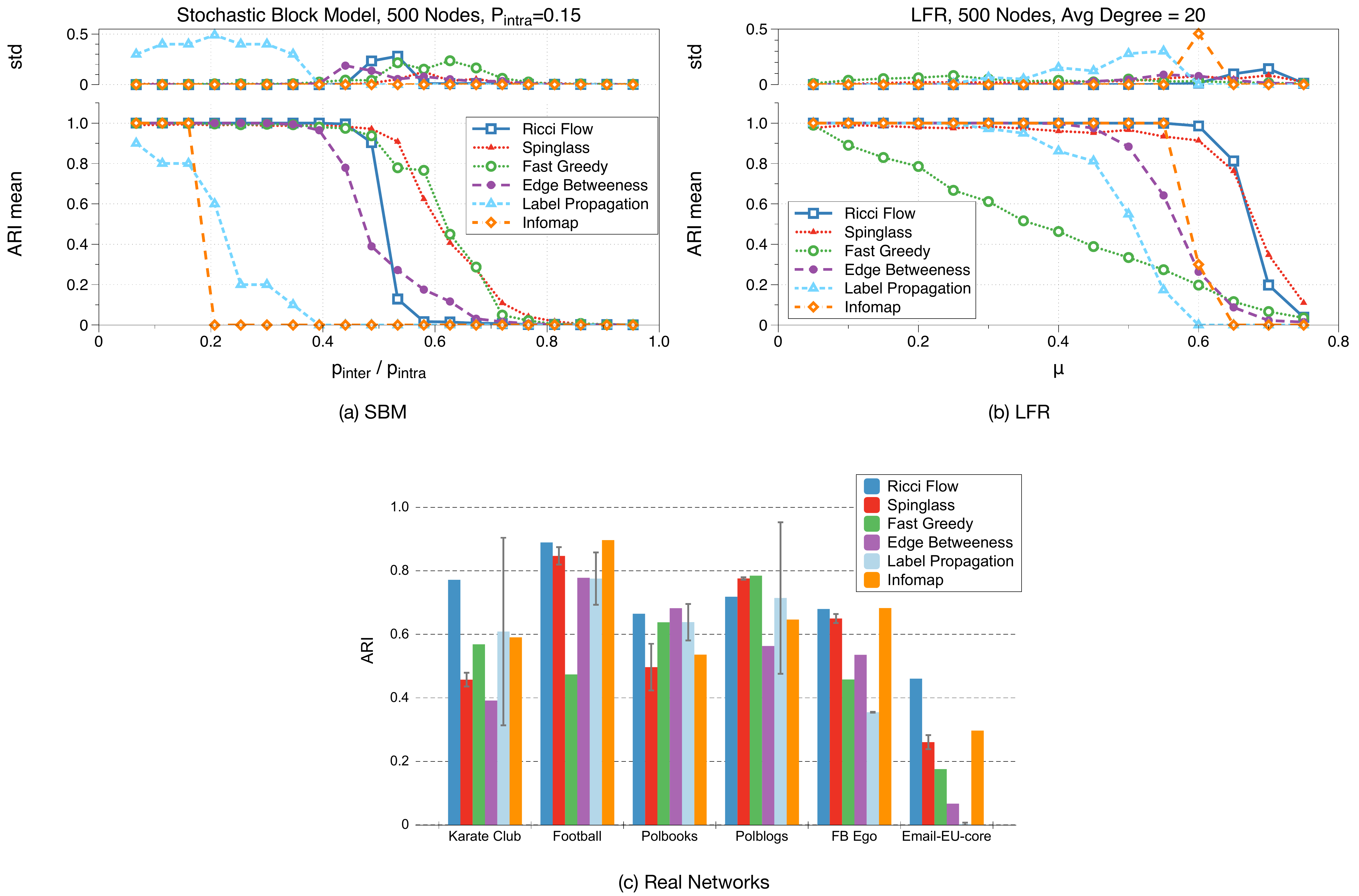}
    \caption{The accuracy of the Ricci flow method for community detection on model networks. The accuracy is measured by the adjust Rand index (ARI) and each data point is the average of $10$ model graphs. In (a), we tested on the stochastic block model (SBM) with $500$ nodes and two communities of the same size. A graph with low $p_{inter}/p_{intra}$ ratio has more distinctive communities. Our method is shown to have perfect accuracy with $p_{inter}/p_{intra} < 0.5$. In (b), for Lancichinetti-Fortunato-Radicchi (LFR) Model, we set the graph to have $500$ nodes, average degree of $20$, with $38$ communities. LFR can produce graphs with power-law degree distribution with communities of different sizes. The magnitude of community structure is controlled by $\mu$, the ratio of inter-community edges with intra-community edges. Again, our method produces the best accuracy among all methods. In (c), for non-deterministic algorithm Spinglass and Label Propagation, the accuracy are averaged over $10$ runs. }
    \label{fig:evaluation}

\end{figure}


To remove the singularities generated during the Ricci flow, we applied the surgery which removed edges with weight greater than an intermediate cutoff threshold for every $5$ iterations during the whole $50$ iteration process. The clustering accuracy results under different accuracy metrics are shown in Figure~\ref{fig:measures}. In Figure~\ref{fig:measures}(a), when the (final) cutoff threshold is set between $1$ and $0.47$, we have a perfect clustering result of detecting all $30$ communities, and this is correctly captured by ARI with the highest possible score $1.0$. 
(In classical case of Hamilton-Perelman Ricci flow on 3-manifolds, the time to do surgery depends on individual manifolds)
For modularity, the trend of capturing the perfect clustering accuracy result is similar to ARI (before the cutoff threshold $0.47$), but its highest score occurs with a cutoff threshold of $0.275$, which detected $290$ communities. With this connection that ARI and modularity tend to capture the communities in the same trend, hence for networks without community labels such as GNet, a cutoff threshold is suggest to be when modularity first hits the plateau of the curve, for example with cutoff at $3.2$ in Figure~\ref{fig:measures}(b). 
This cutoff threshold also gives us a hint to detect hierarchical community structures. In Figure~\ref{fig:GNet-planar}, layered community structures are revealed by applying different cutoff thresholds after $20$ iterations of discrete Ricci flow processes. 


\begin{figure}[htbp]
    \centering
    \includegraphics[width=1\columnwidth]{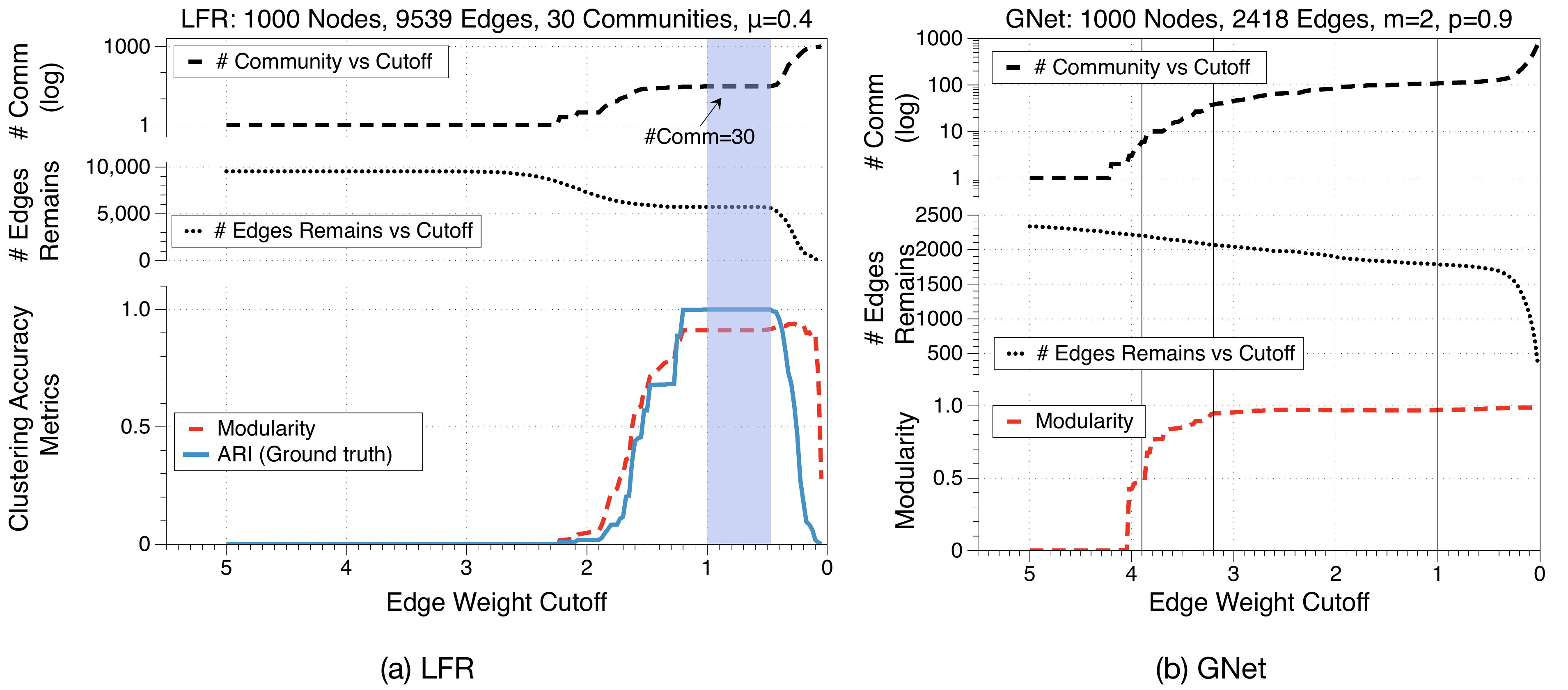}
    \caption{A comparison of clustering accuracy on an LFR graph after $50$ iterations and GNet after $20$ iterations of the Ricci flow with different final edge weight cutoff thresholds. In (a), with cutoff threshold set between $1$ and $0.47$ as the range highlighted in blue, we detected all communities correctly. In (b), we chose the cutoff threshold to be the turning point of modularity at $w=3.2$ as the middle vertical line. To compare the communities detected with different cutoff thresholds, two extra cutoff $w=3.9$ and $w=1$ are added. The detected communities are shown in Figure~\ref{fig:GNet-planar}.}
    \label{fig:measures} 
\end{figure}

\begin{figure}[htbp]
    \centering
    \includegraphics[width=1\columnwidth]{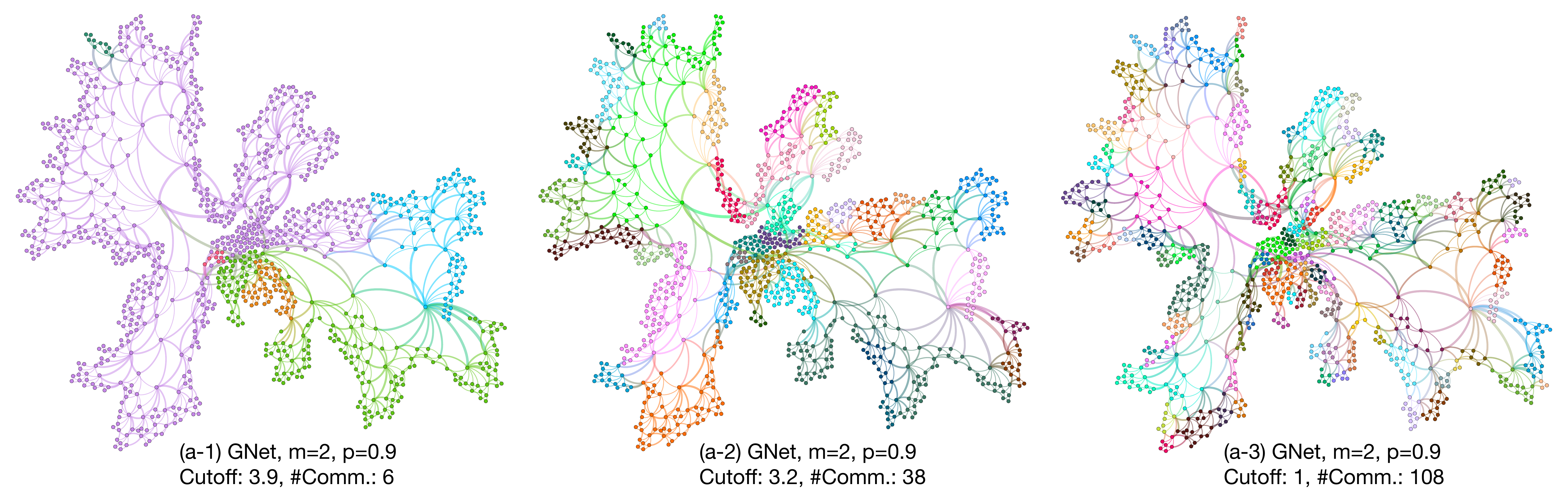}
    \caption{Illustrations of communities detected by discrete Ricci flow with three different cutoff weights of a GNet planar model with $1000$ nodes, $m=2$, and $p=0.9$. With different cutoff thresholds (labeled as vertical lines in Figure~\ref{fig:measures}(b)), we are able to detect communities in a hierarchical manner.}
    \label{fig:GNet-planar}
\end{figure}

\subsubsection{Comparison with Other Methods}

We compared our result with the community detection algorithms such as Modularity based Fast Greedy algorithm~\cite{Clauset2004-bp}, Label Propagation~\cite{Raghavan2007-wf}, Infomap~\cite{Rosvall2008-ns}, Spinglass~\cite{Reichardt2006-zs}, and Edge Betweenness~\cite{Girvan2002-cq} (by iGraph: \url{http://igraph.org/python/}) with Adjusted Rand Index (ARI) as the accuracy metric. 

We first tested community detection algorithms on a simple graph model SBM with $500$ nodes, $6800$ edges and two even size communities in Figure~\ref{fig:evaluation}(a). We fixed $P_{inter}=0.15$ and tested the mixing ratio $P_{intra}/P_{inter}$ from $0.1$ to $0.9$. For SBM, beside label propagation method and Infomap, most of the algorithms perform well when the mixing ratio is below $0.5$.

For LFR graphs, Ricci flow and Spinglass outperform all other methods in our experiments (Figure~\ref{fig:evaluation}(b)). Compared to the accuracy of $95\%$ for Spinglass, Ricci flow is more stable with nearly perfect accuracy for most of the values of $\mu$.

We also evaluated community detection algorithms on different real-world datasets. In Figure~\ref{fig:evaluation}(c), Ricci flow shows competitive or better results in Karate club, Football, Polbooks, and Polblogs datasets.

One key factor of a community structure is the density of connections within communities, the community structure is stronger if nodes in one community are more densely connected. In Figure~\ref{fig:degree}, we tested Ricci flow and spinglass on LFR graphs with different average degree settings. 
The results show that with a higher average degree (higher edge density within communities) both algorithms provide better clustering results.

\begin{figure}[htbp]
    \centering
    \includegraphics[width=1\columnwidth]{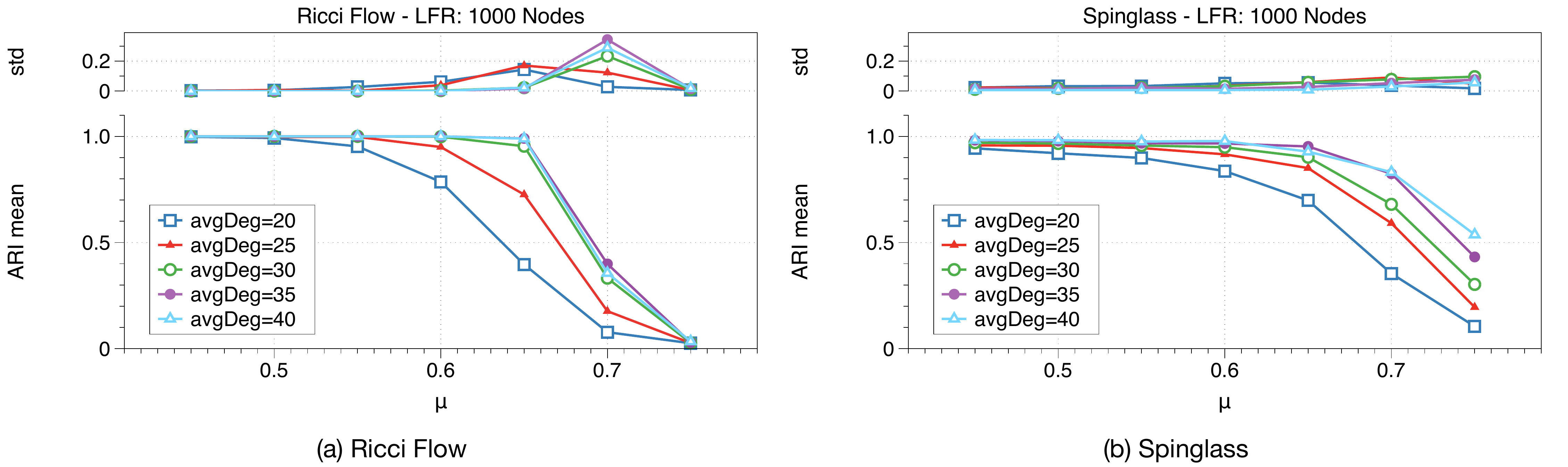}
    \caption{ Ricci flow and spinglass algorithms on LFR graphs with different average node degrees. With a higher average degree which implies higher edge density within communities, both algorithms provide better clustering results.}
    \label{fig:degree} 
\end{figure}

\section{Conclusion}

In this paper, we have introduced geometric tools to investigate the community structures on complex networks.  The basic idea is to consider networks as geometric objects and use the notion of curvature and curvature guided flow to decompose networks. In classical mathematics, Ricci curvature and Ricci flow are among the most important tools for analyzing and decomposing manifolds according to their geometric and topological properties. What is interesting is that the corresponding discrete counterparts are shown to be powerful for detecting community structures. Interesting future works include improving the theoretical understanding of discrete curvature on graphs and applying our methods for real-world applications. 

\subsection*{Acknowledgement.} The authors would like to acknowledge support by NSF DMS 1737876, NSF DMS 1405106, NSF DMS 1811878, NSF FRG 1760527, NSF DMS-1737812,  NSF CNS-1618391 and  NSF CCF-1535900. We thank Xianfeng David Gu and Wujun Zhang for discussions.

\subsection*{Author Contributions.} All authors have contributed to the design of algorithms and writing of the article. Yu-Yao Lin and Chien-Chun Ni carried out implementation of the algorithm as well as evaluations under different models.  

\subsection*{Competing interests} 
The authors declare no competing interests.

\subsection*{Data availability statement}
The datasets generated during and/or analyzed during the current study are available from the corresponding author on reasonable request. Code for graph Ricci curvature and Ricci flow computation is available on github (\url{https://github.com/saibalmars/GraphRicciCurvature}).

\subsection*{Correspondence} Please send correspondences to Jie Gao, Department of Computer Science, Stony Brook University, Stony Brook, NY 11794. Email: jgao@cs.stonybrook.edu.

\newpage
\appendix
\section*{Supplementary Materials}

\renewcommand{\figurename}{Supplementary Figure S}

\section{Theoretical foundation}

We model a network as an unweighted graph $G=(V, E)$ with node set $V$ and edge set $E$.
The community detection problem on $G$ is to find a collection of edges, called \emph{inter-communities edges}, such that removing these edges from $G$ produces a collection of connected subgraphs $C_1, ..., C_n$, called \emph{communities}, with the following properties (1) nodes in each community $C_i$ are heavily connected by \emph{intra-community edges} among themselves and (2) nodes in different communities are sparsely connected. It has been observed in many real-world networks (e.g., social, biological, the Internet) that communities exist. Detecting communities in a network has become an important and fundamental problem in complex networks with many applications. However, from a mathematical point of view, a precise and universally accepted definition of community structure is still lacking.

Our approach to define communities in a network uses the notion of Ricci curvature in geometry and the mathematical theory of optimal transport. Consider a graph $G$ as a transportation network, edges between different communities are heavily traveled compared to edges within a community, since traffic prefers to take shortest paths whenever possible. Specifically, one can quantify the amount of traffic through an edge using the transportation cost from the optimal transportation theory.

We will use the following definitions and conventions. Two vertices (or nodes) $i, j \in V$ are adjacent, denoted by $i \sim j$, if they are the endpoints of an edge.  In this case, we denote the edge by $ij$.
 The edge weight $w: E \to \R_{\geq 0}$ on $G$ assigns each edge $ij$  a non-negative number $w_{ij}$. The triple $(V, E, w)$ is called a  weighted graph or a metric graph.
A path from node $a$ to node $b$ is a collection of edges $e_i=v_iv_{i+1}$ for $i=0,1,2, ..., n-1$ such that $v_0=a$ and $v_n=b$.  The \it length \rm of the path $\{ e_0, e_1, ..., e_{n-1}\}$ is defined to be $\sum_{i=0}^{n-1} w_{i(i+1)}$. The path is said to have $n$ hops. 

\subsection{The optimal transport problem}
The original optimal transport problem considered by G. Monge in 1781 is to minimize the transportation cost to move iron ores in different mines to a collection of factories which consume the iron ores. Mathematically, the problem is formulated as follows.  To begin, let us briefly recall the notion of metric spaces and Borel measures on a metric space.  A metric space is a pair $(X, d)$ where $X$ is a set and  $d$ is a distance function $d: X \times X \to \mathbf R_{\geq 0}$ with the following properties:
\begin{itemize}
    \item $d(x,y)=0$ if and only if $x=y$;
    \item $d(x,y)=d(y,x)$;
    \item $d(x,y)+d(y,z) \geq d(x,z)$ for all $x,y,z \in X$.
\end{itemize} 

Given a metric space $(X,d)$, a Borel set in $X$ is obtained from open sets in $X$ through the operations of countable union, countable intersection, and relative complement.   
A Borel probability measure $\mu$ on a metric space $(X, d)$ assigns each Borel set $A$ a non-negative real number $\mu(A)$ such that 
\begin{itemize}
    \item $\mu(X)=1$;
    \item if $A_, ...., A_n, ...$ are disjoint Borel sets then $\mu(\cup_{i=1}^{\infty} A_i)=\sum_{i=1}^{\infty} \mu(A_i)$.
\end{itemize}  

In our case of a finite graph $G=(V,E)$, we take $X$ to be the vertex set $V$. Every subset of $X$ is Borel. A Borel probability measure can be identified with a function $\mu: V \to [0,1]$ such that $\sum_{i \in V} \mu(i)=1$.  Given an edge weighted graph $(V,E, w)$ with $w_{ij}>0$ for all edges, one introduces a metric $d$ on vertex set $V$ by
\begin{equation}\label{dis}
d(v, v') =\min\{  \sum_{i=0}^n   w_{ k_i, k_{i+1}}:  k_i \in V,  k_0=v, k_{n+1}=v', k_i \sim k_{i+1}\}
\end{equation}  where the minimum is taken over all edge paths from $v$ to $v'$. We call $d$ the induced metric from the edge weight $w$.  Note that by definition, $d(v, v')+d(v', v) \geq d(v, v'')$ for any three vertices $v, v', v'' \in V$.

Let $X$ and $Y$ be two metric spaces
and $\mu$ and $\nu$ be two probability Borel measures on $X$ and $Y$ respectively. Here $X$ and $Y$ stand for mines and factories and $\mu$ and $\nu$ are the amount of iron ores to be moved and consumed respectively. Let  $c : X \times Y  \to \R_{\geq 0}$
be a continuous function considered as the cost, i.e., the cost of transporting from location $x$ to location $y$ is $c(x,y)$. The function $c$  is usually taken to be the distance
$d(x, y)$  if $X=Y$ and the cost of transportation per-unit distance is constant.

A \it transport map \rm $T: (X, \mu) \to (Y, \nu)$ is a measure
preserving map, i.e., for any Borel set $A \subset Y$,  $\nu(A)=\mu(T^{-1}(A))$. Monge's formulation of the optimal transportation
problem is to find a transport map $T : X \to Y$ that realizes
the infimum 
$${\displaystyle \inf \left\{\left.\int
_{X}c(x,T(x))\,\mathrm {d} \mu (x)\;\right| \text{T is a
transportation} \right\}}.$$
A map $T$ that attains
this infimum (i.e. makes it a minimum instead of an infimum) is
called an ``optimal transport map". In this generality, there is
no mathematical theorem that guarantees the existence of the optimal transportation map. A breakthrough in optimal transportation problem was made by L. Kantorovich in 1942 \cite{kantorovich1942translocation}.  In this paper, he formulated the optimal transportation problem as a
linear optimization problem whose solution always exists.  He replaced the transportation map $T$ by the \it transportation plans \rm which are
probability measures $\gamma$ on the product space $X \times Y$  satisfying $\gamma(A \times Y)=\mu(A)$
and $\gamma(X \times B)=\nu(B)$ for all Borel sets $A$ and $B$.  The goal is to  find a transportation plan  $\gamma$
that attains the infimum cost $$W(\mu, \nu) ={\displaystyle \inf
\left\{\left.\int _{X\times Y}c(x,y)\,\mathrm {d} \gamma
(x,y)\right|\gamma \in \Gamma (\mu ,\nu )\right\}} ,$$ where
$\Gamma(\nu, \mu)$ denotes the collection of all transportation
plans. If $X=Y$, the quantity $W(\mu, \nu)$ is called the \it
Wasserstein distance \rm between two probability measures $\mu,
\nu$ on $X$. Kantorovich proved that the infimum is always
achieved by some transportation plan.

In our case of a finite weighted graph $G=(V, E, w)$ with two probability measures $\mu$ and $\nu$ on the vertex set $V$, one can reformulate Kantorovich's problem as follows.  Let $d$ be the associated distance function on $V$ defined by Equation (\ref{dis}).  A transportation plan $\gamma$ is given by a map $\gamma:  V \times V \to [0,1]$ such that $\sum_{i \in V} \gamma_{ij} =\mu_j$ and $\sum_{j \in V} \gamma_{ij} =\nu_i$ for all $i, j \in V$.  The goal is to find the minimum cost $$\min\{ \sum_{i, j \in V}  \gamma_{ij} d(i,j): \text{ $\gamma$ is a transportation plan}\}.$$  This is a linear programming problem and thus can be computed.

\subsection{Curvatures in classical differential geometry}

One of the central themes in modern geometry is the notion of curvature which measures how space
is curved. It was introduced by Gauss and Riemann over 190 years ago.  By an $n$-dimensional manifold, following Riemann, we mean a space which locally looks like the $n$-dimensional space.  A Riemannian metric on a manifold assigns each tangent space of the manifold a Euclidean metric. Manifolds together with  Riemannian metrics, i.e.,  Riemannian manifolds, are the main objects of study in modern geometry.  For instance, a smooth surface in the $3$-dimensional Euclidean space is a $2$-dimensional Riemannian manifold whose associated Riemannian metric is induced from the $3$-dimensional Euclidean space. These are the original objects investigated by Gauss.  For such a surface $S$,  the \it Gaussian map \rm from $S$ to the unit sphere sends a point on $S$ to the unit normal vector of $S$ at $p$.  The \it curvature \rm (or Gaussian curvature) of the surface at a point $p$ is the signed area distortion of the Gaussian map at $p$. To be more precise, it is the Jacobian determinant of the Gaussian map at $p$.
From this definition, one sees that the plane has zero curvature, the sphere has positive curvature, and the hyperboloid of one sheet has negative curvature. Gauss showed (Theorema egregium) that the curvature depends only on the induced Riemannian metric on the surface and does not depend on how the surface is embedded.
Riemann generalized Gaussian curvature to high dimensions as follows. For a Riemannian manifold $(M,g)$, the \it sectional curvature \rm assigns each 2-dimensional linear subspace $P$ in the tangent space  $T_pM$ of $M$ at $p$ a scalar (the Riemannian sectional curvature). The scalar is equal to the Gaussian curvature of the image of $P$  under the exponential map. A positively curved space tends to have small diameter and is geometrically crowded (e.g., a sphere). In contrast, a negatively curved space has an infinite fundamental group, a contractible universal cover, and is geometrically spreading out like a tree (in large scale).





The \it Ricci curvature \rm assigns each unit tangent vector $v$ at a point $p$ a scalar which is the average
of the sectional curvatures of planes containing $v$. There are several characteristic properties
of the Ricci curvature which are used for defining the Ricci curvature on general metric spaces.
Namely, Ricci curvature controls how fast the volume of distance ball grows as a function
of the radius. It also controls the volume of the overlap of two balls in terms of their radii and the
distance between their centers.

There have been various approaches to generalize the notion of curvature to spaces which are not manifolds (e.g., a graph with edge weights). One of the important work of Ollivier \cite{Ollivier2009-yl}  Ricci curvature relates to optimal transport. Since optimal transport can be formulated on very general metric spaces with probability measure at each point, in particular on networks with edge weights and probability measures at each vertex. This is the approach we take for community detection using curvature and optimal transport.

%
%
\def\Ricci{\operatorname{Ricci}}

\subsection{Ollivier's work on Ricci curvature for general metric spaces with measures}

Ollivier's approach to Ricci curvature relies on the following key
theorem \cite{Ollivier2009-yl} which relates curvature with optimal transportation.  Let $(M^n,d)$ be an $n$-dimensional
Riemannian manifold with Riemannian distance $d$ whose Ricci curvature is $k$ and Riemannian volume measure is $\mu$. Fix $\epsilon>0$, let 
\begin{equation} \label{mass1} 
    m_x=\frac{\mu|_{B(x, \epsilon)}}{\mu(B(x, \epsilon))} 
\end{equation}
be the probability measure associated to $x \in M$. Then the
Wasserstein distance $W(m_x, m_y)$ is $$(1-k(x,y))d(x,y),\,
k_{xy}=\frac{\epsilon^2
k(v,v)}{2(n+2)}+O(\epsilon^3+\epsilon^2 d(x,y))$$ and
$v$ is the tangent vector at $x$ of the geodesic $xy$.
This shows that the classical Ricci curvature is related to the optimal transportation problem.
This also shows how to define
 Ricci curvature for general metric spaces with probability measures.

 \begin{definition} (Ollivier~\cite{Ollivier2009-yl})
 Given a metric space $(X, d)$ equipped with a
probability measure $m_x$ for each $x\in X$, the Ollivier's Ricci curvature
 along the shortest path $xy$ is
  \begin{equation} \label{ricci} k(x,y)
=1-\frac{W(m_x, m_y)}{d(x,y)} \end{equation}
where $W(m_x, m_y) $ is the Wasserstein distance with respect to the cost function $c(x, y)=d(x,y)$.  \end{definition}

\subsection{Ricci curvature for weighted graph}

In our case of weighted graph, we adapt Ollivier's definition of discrete Ricci curvature and relate it to community detection. On a weighted graph $(V, E, w)$ we consider the associated distance function $d$ defined by Equation~(\ref{dis}). To define the Ollivier Ricci curvature, one needs probability measures associated with each vertex. 
In \cite{Lin2011-wk}, given a non-negative scalar $\alpha$, Lin \etal defined the probability measure $m^{\alpha}_x$ on $V$ associated to the vertex $x \in V$ to be
\begin{equation} \label{mass_lin}
m^{\alpha}_x(i) = \left\{
        \begin{array}{lll}
            \alpha & \quad \mbox{\ if } i=x \\
            (1-\alpha)/|\pi(x)| & \quad \mbox{\ if } i \sim x \\
            0  & \quad  \mbox{\ otherwise},
        \end{array}
    \right.
\end{equation}
where $\pi(x)$ is the set of neighbors of $x$ and $|\pi(x)|$ is the number of elements in set. In this work, we extended and generalized the probability distribution defined by Lin \etal to further consider the edge weights. The probability measure we constructed is motivated by the classical differential geometry as follows. Consider a network as a discretization of a smooth manifold, a Riemannian metric and a Riemannian distance should correspond to edge weight and distance. 
Given an edge weighted graph $(V,E, w)$ whose associated distance is $d$ (Equation~(\ref{dis})), we define the associated normalized probability measure (Equation~(\ref{mass1})) as follows. 

\begin{equation}\label{mass2}
m^{\alpha,p}_x(i)=
\begin{cases}
\alpha & \mbox{\ if } i = x\\
\frac{1-\alpha}{C}\cdot \exp(-(d(x,i))^p) & \mbox{\ if } i \sim x\\
0 & \mbox{\ otherwise},
\end{cases}
\end{equation}
where $C$ is a normalization factor $C= \sum_{j\sim x} \exp(-(d(x,j))^p)$. 
Then the discrete Ricci curvature defined under this mass distribution is as follows:
\begin{equation} \label{ricci2} \kappa_{xy}
=1-\frac{W(m^{\alpha,p}_x, m^{\alpha,p}_y)}{d(x,y)} 
\end{equation}
Note that for $p=0$, the probability measure becomes $m^{\alpha,0}_x(i)=\frac{1-\alpha}{|\pi(x)|}$ if $i \sim x$, which is the same as Equation (\ref{mass_lin}).  For our computation, we take $p=2$ and $\alpha=1/2$ in most cases.

To detect the community structure, we introduce a curvature guided diffusion process called discrete Ricci flow (or simply Ricci flow) on a network. The flow was motivated by the powerful tool of smooth Ricci flow which has recently revolutionized the geometry and topology of $3$-dimensional spaces. It is also related to the work of discrete Ricci flow on surfaces \cite{chow2003combinatorial}.

The Ricci flow on weighted graph $(V,E, w^{(0)})$ generates a time dependent family of weighted graphs $(V,E, w(t))$ such that the initial value $w(0)=w^{(0)}$ and the weight $w_{ij}(t)$ on edge $ij$ changes proportional to the Ollivier-Ricci curvature $\kappa_{ij}(t)$ at edge $ij$ at time $t$.

Ollivier, in the future work section of~\cite{Ollivier2009-yl}, suggested to use the following formula for Ricci flow with continuous time parameter $t
$:
\begin{equation}\label{dc}
\frac{ d }{dt} d_{ij}(t)=-\kappa_{ij}(t) d_{ij}(t).
\end{equation}
In our setting, we use discrete time parameter $k$ and define a discrete Ricci flow as a family of edge weighted graphs $(V, E, w^{(k)})$, $ k \in \mathbf Z_{\geq 0}$, with associated distances $d^{(k)}$, where $d^{(k)}(i, j)$ is the shortest edge path length between $i$ and $j$.
The flow takes discrete time $k\in \mathbf Z_{\geq 0}$. 
\begin{equation}\label{drf}
    w_{ij}^{(k+1)}= (1-\kappa^{(k)}_{ij}) d^{(k)}(i,j)  \quad  
\end{equation}
where $\kappa_{ij}^{(k)}$ is the Ricci curvature at the edge $ij$ of the weighted graph
$(V,E, w^{(k)})$. Note that Equation~(\ref{drf})  states that  
$w_{ij}^{(k+1)}$ is equal to the Wasserstein distance between adjacent vertices $i$ and $j$ in the metric graph $(V, E, w^{(k)})$.
To detect a community structure on an unweighted graph $G=(V, E)$, we take the initial edge weight $w^{0)}$ to be the constant $1$ and run the Ricci flow (Equation~(\ref{drf})) from there.

The flow tends to expand negatively Ricci curved subgraphs and contract positively Ricci curved subgraphs.
In networks with clear community structures, the inter-community edge weights converge to $\infty$ and the intra-community edge weights converge to $0$. Thus the network is naturally partitioned into different communities by removing  edges of very high weights.  
We may continue to run Ricci flow to further partition the communities into smaller ones, when the network has  hierarchical community structures.  This is similar to the popular Girvan-Newman algorithm~\cite{Girvan2002-cq}. It first computes betweenness centrality for each edge, then removes edges with the highest score. After that, re-compute all scores and repeat. Here the betweenness of an edge is the sum $\sum_{i, j \in V\{\partial e\}} \frac{ \sigma_{ij}(e)}{\sigma_{ij}}$ where $\sigma_{ij}$ is the number of shortest paths from $i$ to $j$ and $\sigma_{ij}(e)$ is the number of shortest paths from $i$ to $j$ which contain the edge $e$. Since computing centrality involves global information of the network and is expensive, Ricci flow is easier to implement in practice.

Our work on Ricci curvature on networks builds on our previous work~\cite{Ni2015-yv} and is also inspired by the important works of E. Saucan and J. Jost \etal in \cite{Samal2018-bt, Saucan2018-yg, Sreejith2017-md, Jost2014-rk}. In these works, they systematically introduced and investigated various discrete curvatures for complex networks. The comparative analysis of Forman and Ollivier Ricci curvature on benchmark dataset of complex networks and real-world networks was also carried out. Their numerical results show a striking fact that these two completely different discretization of the Ricci curvatures are highly correlated in many networks.

\section{From Ricci flow to community detection}

In this section, we explain the relationship between the discrete Ricci flow (Equation~\ref{drf}) on networks and the classical theory of 3-manifolds and the Ricci flow. These classical theories motivate us to apply Ricci flow for community detection.

A 3-manifold in mathematics is a connected (Hausdorff) topological space which looks like the 3-space in small scale. More precisely, each point in a 3-manifold has a neighbourhood homeomorphic to an open ball in the 3-space.  3-manifolds are the basic prototypes of 3-dimensional spaces. There are two operations which produce complicated 3-manifolds from simple ones. The first is the connected sum operation and the second is the torus sum.  In the connected sum operation, one takes two 3-manifolds, removes two small open balls from them. Glue the two remaining 3-manifolds with holes along their 2-sphere boundary by a homeomorphism. The resulting 3-manifold is the connected sum of the given two. The torus sum operation is similar where one takes two 3-manifolds with torus boundary and glues the two boundary by a homeomorphism. Naturally, one asks if each 3-manifold can be decomposed as a connected sum and torus sum from simple pieces, i.e., reversing the above process. The answer is affirmative and is the content of the classical theorem of topological decomposition of 3-manifolds, established  by H. Kneser, J. Milnor ~\cite{Milnor1962-pm} Jaco-Shalen~\cite{Jaco1979-pw}, and Johannson~\cite{Johannson2006-vx}.
The theorem states that each 3-manifold can be canonically decomposed into simple pieces using the 2-spheres and tori. In the sphere decomposition, one looks for topologically essential disjoint 2-spheres in the 3-manifold. Here a topologically essential 2-sphere (or torus) means it is not the boundary of a 3-ball (or a solid torus) in the manifold. By capping off the 2-sphere boundary of the compliment by the 3-balls, one obtains new 3-manifolds which have simpler topology and the original 3-manifold is obtained from them by the connected sum. The torus decomposition is similar in construction and uses the essential torus instead of the 2-sphere. A 3-manifold which cannot be decomposed by any essential 2-spheres and tori is called atoroidal.
Thus the classical decomposition theorem says that each 3-manifold is a connected sum and torus sum of a collection of atoroidal ones.

One of the most important problems in low-dimensional geometry and topology is the geometrization conjecture proposed by William Thurston in 1976 \cite{Thurston1982-rf}. It states that any atoroidal 3-manifold admits complete, locally homogeneous Riemannian metrics, i.e., atoroidal 3-manifolds are geometric. This fundamental conjecture was solved by the groundbreaking work of Perelman in 2004 which uses the Ricci flow method developed by R. Hamilton in 1981.  A key step in Perelman's proof is that the Ricci flow can detect the 2-sphere and torus decomposition.

The Ricci flow, introduced by Richard S. Hamilton in 1981 \cite{hamilton1982three}, deforms the metric of a Riemannian manifold in a way formally analogous to the diffusion of heat, smoothing out irregularities in the metric. Ricci flow has been one of the most powerful tools for solving geometric problems in the past forty years. The flow exhibits many similarities with the heat equation.  

Suppose a Riemannian metric $g_{ij}$ is given on a manifold $M$ so that its Ricci curvature is $R_{ij}$. Then Hamilton's Ricci flow is the following second-order nonlinear partial differential equation on symmetric $(0, 2)$-tensors:
$$\frac{\partial}{\partial t}g_{ij}=-2 R_{ij}.$$
A solution to the Ricci flow is a one-parameter family of metrics $g_{ij}(t)$ on a smooth manifold $M$ satisfying the above partial differential equation. One of the key properties of the Ricci flow is that the curvature evolves according to a nonlinear version of heat equation. Thus the Ricci flow tends to smoothing out irregularity of the curvature. 


In the groundbreaking work of G. Perelman, he used the Ricci flow to prove the geometrization conjecture by analyzing the singularity formation and introducing a Ricci flow with `surgery' \cite{mt}. One way a singularity may arise in the Ricci flow is that an essential 2-sphere in the manifold may collapse to a point in finite time. Also essential tori in the manifold can also be detected by the Ricci flow. Thus, one consequence of Perelman's work is that Ricci flow can be used to find the 2-sphere and torus decomposition in a 3-manifold. This is the key motivation for us to introduce a discrete Ricci flow on networks for community detection.

We consider a community structure in a network heuristically  to be a discrete counterpart of the (topological) sphere-torus decomposition of a 3-manifold. One can roughly justify this analogy as follows.
A commonly agreed characterization of communities in a network is that there are more edges linking nodes  within a community and fewer edges linking nodes in different community.  This is similar to the components in the sphere decomposition. Indeed,  according to the Seifert-van Kampen theorem (\cite{spanier1966algebraic}), loops within a component in the sphere and torus decomposition interact more among themselves than loops in different components.
Since the Ricci flow runs on a manifold with a Riemannian metric and detects topological decompositions, the discrete Ricci flow on a network with edge weight should detect the community structures.  This theme has been supported by our numerical experiments.

\section{Algorithms for discrete Ricci flow on graph}\label{sec:algorithm}

For a given graph $G=(V,E)$, let the edge weight of edge $xy \in E$ be $w_{x y}$, the discrete Ricci curvature of edge $xy$, $\kappa_{xy}$, is computed as follows:
$$ \kappa_{xy}=1-\frac{W(m^{\alpha,p}_x, m^{\alpha,p}_y)}{d(x,y)}, $$
where $W$ is the optimal mass transport distance (a.k.a. Wasserstein Distance or Earth Mover Distance). The mass distribution $m^{\alpha,p}$ is defined as 
$$
m^{\alpha,p}_x(i)=
\begin{cases}
\alpha & \mbox{\ if } i = x\\
\frac{1-\alpha}{C}\cdot \exp(-(d(x, i))^p) & \mbox{\ if } i \sim x\\
0 & \mbox{\ otherwise},
\end{cases}
$$
where $C$ is a normalization factor $C= \sum_{j\sim x} \exp(-(d(x ,j ))^p)$. To speed up the computation of $W$, we suggest to apply the Sinkhorn distance~\cite{Cuturi2013-kx} which computes the approximate optimal transport distance fast without losing too much accuracy.

For discrete Ricci flow computation, we follow Equation~\ref{drf}. For each iteration $i$, the Ricci flow metric on edge (edge weight) is defined as follows:
$$ w^{(i+1)}_{xy}=d^{(i)}(x,y)-\eps \cdot \kappa^{(i)}_{xy} \cdot d^{(i)}(x,y) ,\quad \forall xy \in E,$$
where $\eps$ is the step size or learning rate of the gradient decent process. $d^{(i+1)}(x,y)$ is the shortest path between $x$ and $y$ based on $w^{(i+1)}$.

Since the edge weights change in a relative manner, we re-scale the metric so that the sum of edge weight remains constant. For each iteration, we recompute the Ricci curvature on each edge based on the current edge weight, then update and normalize the weight until $|\kappa^{(i)}_{xy}-\kappa^{(i-1)}_{xy}|<\delta$, where $\delta >0$.
The detailed algorithm is as follows.

\begin{algorithm}
    \label{algo:ricci_flow}
    \SetKwInOut{Input}{Input}
    \SetKwInOut{Output}{Output}

    \Input{An undirected graph $G$ and a real number $\delta > 0$.}
    \Output{A weighted graph $G$ with edge weight as Ricci flow metric on each edge.}

    Normalize the edge weight $w^{(i)}_{xy} \leftarrow d^{(i)}(x,y) \cdot \frac{|E|}{\sum_{xy \in E}d^{(i)}(x,y)}$

    Compute the Ricci curvature of $G$

    Update the edge weight by $w^{(i+1)}_{xy}\leftarrow d^{(i)}(x,y)-\eps \cdot \kappa^{(i)}_{xy} \cdot d^{(i)}(x,y)$

    Repeat $1-3$ until all $|\kappa^{(i)}_{xy}-\kappa^{(i-1)}_{xy}|<\delta$.

    \caption{Discrete Ricci Flow}
\end{algorithm}

\section{Evaluations}\label{sec:more_evaluation}

Here we briefly introduce these two metrics and apply these metrics to measure the accuracy of our clustering results. 

\paragraph{ARI} \label{par:ari}
Adjusted Rand Index (ARI)~\cite{hubert1985comparing} measures the similarity between two clustering results by adjusting the Rand index~\cite{rand1971objective} in a way that a random result gets a score of $0$. Suppose the ground truth community is described by a partition of nodes $\{1, 2, \cdots, n\}$ into disjoint sets $C_1, C_2, \cdots, C_m$, and the detected community is represented by disjoint sets $\{\bar{C}_1, \bar{C}_2, \cdots, \bar{C}_k\}$. ARI scores the agreement of partitioned node pairs in $C_{i}$ and $\bar{C}_{j}$. The ARI of the given clustering result and the actual result is defined as follows. 

\begin{equation}\label{ari}
    ARI(C,\bar{C}) = \frac{\sum_{i=1}^{m}\sum_{j=1}^{k}\binom{|C_{i}\cap\bar{C}_{j}|}{2}-[\sum_{i=1}^{m}\binom{|C_{i}|}{2}\sum_{j=1}^{k}\binom{|\bar{C}_{j}|}{2}]/\binom{|V|}{2}}{\frac{1}{2}[\sum_{i=1}^{m}\binom{|C_{i}|}{2}+\sum_{j=1}^{k}\binom{|\bar{C}_{j}|}{2}]-[\sum_{i=1}^{m}\binom{|C_{i}|}{2}\sum_{j=1}^{k}\binom{|\bar{C}_{j}|}{2}]/\binom{|V|}{2}}
\end{equation}
The value of ARI is $1$ when the two clustering results match perfectly and is below $1$ otherwise.





\paragraph{Modularity}

Modularity introduced in \cite{newman2004finding} describes the fraction of edges between communities minus the expected fraction if edges are distributed uniformly at random. The range of modularity is $[-1,1)$ where a positive value means more edges are intra-community edges than expected for a random graph. Unlike ARI, Modularity does not require ground truth community, and is easy to compute even for large networks. But modularity suffers from resolution problem due to the statistical nature~\cite{Fortunato2007-ff}.

For a network $G$ with $n$ node, $m$ edges, $c$ communities and adjacency matrix $A$, the modularity $Q$ of this network is defined as follows:
$$
Q = \frac{1}{(2m)}\sum_{vw} \left[ A_{vw} - \frac{k_v k_w}{(2m)} \right] \delta(c_{v}, c_{w})     
  = \sum_{i=1}^{c} (e_{ii}-a_{i}^2),
$$
where $k_x$ is the degree of node $x$, $\delta(c_{v}, c_{w})$ equals to $1$ if node $v$ and $w$ are in the same community, otherwise $0$, and $e_{ij}$ is the fraction of inter-community edges connecting community $i$ and $j$:
$$
e_{ij}= \sum_{vw} \frac{A_{vw}}{2m} 1_{v\in c_i} 1_{w\in c_j},
$$
and $a_i$ is the fraction of intra-community edges in community $i$:
$$
a_i=\frac{k_i}{2m}
   = \sum_{j} e_{ij}.
$$

\subsection{Edge Weights under Ricci Flow Iterations}
We first analyze the iterating process of Ricci flow on graphs with community structure. In Fig.~\ref{fig:iteration}, we showed the variation of edge weights and edge Ricci curvatures on an LFR graph with $1000$ nodes and $\mu=0.4$ with $38$ communities and with average degree $20$. We name edges to connect nodes in the same community as intra-community edges (labeled as blue) and edges connect nodes in two different communities as inter-community edges (labeled red in the figure). When the iteration process starts, the original weight of all edges is set to be $1$ (Fig.~\ref{fig:iteration:subfig:weight}). As Ricci flow iterates, we can see a clear trend that the weights of intra-community edges (labeled blue in figure) are decreasing and converged to $0$ while the weights inter-community edges increased with iterations. In Fig.~\ref{fig:iteration:subfig:rc}, as Ricci flow iterates, the Ricci curvature on all edges converged to a fix value.

\begin{figure}[htbp]
    \centering
    \subfigure[Ricci flow iterations over edge weights]{
        \label{fig:iteration:subfig:weight}
        \includegraphics[width=0.48\columnwidth]{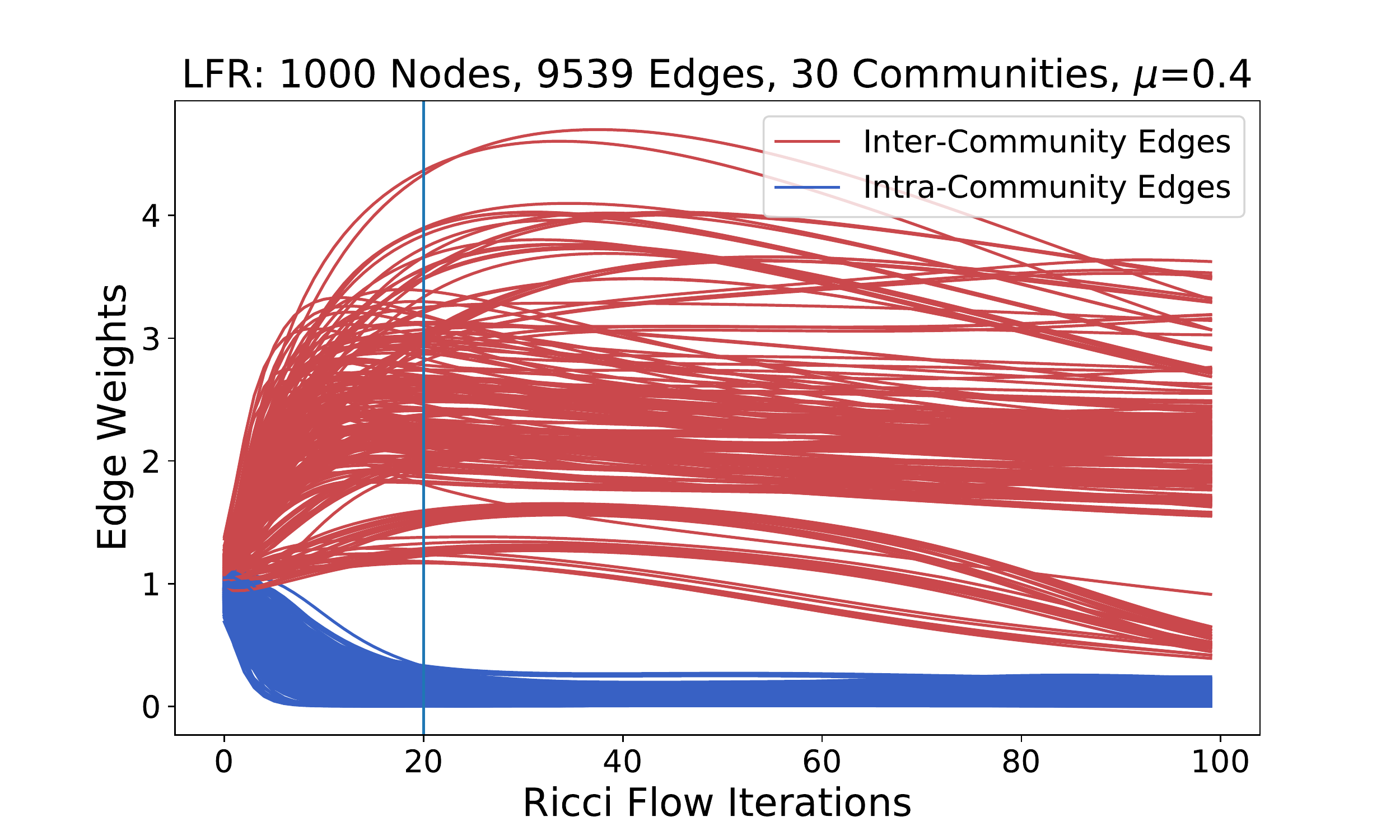}}
    \subfigure[Ricci flow iterations over edge Ricci curvature]{
        \label{fig:iteration:subfig:rc}
        \includegraphics[width=0.48\columnwidth]{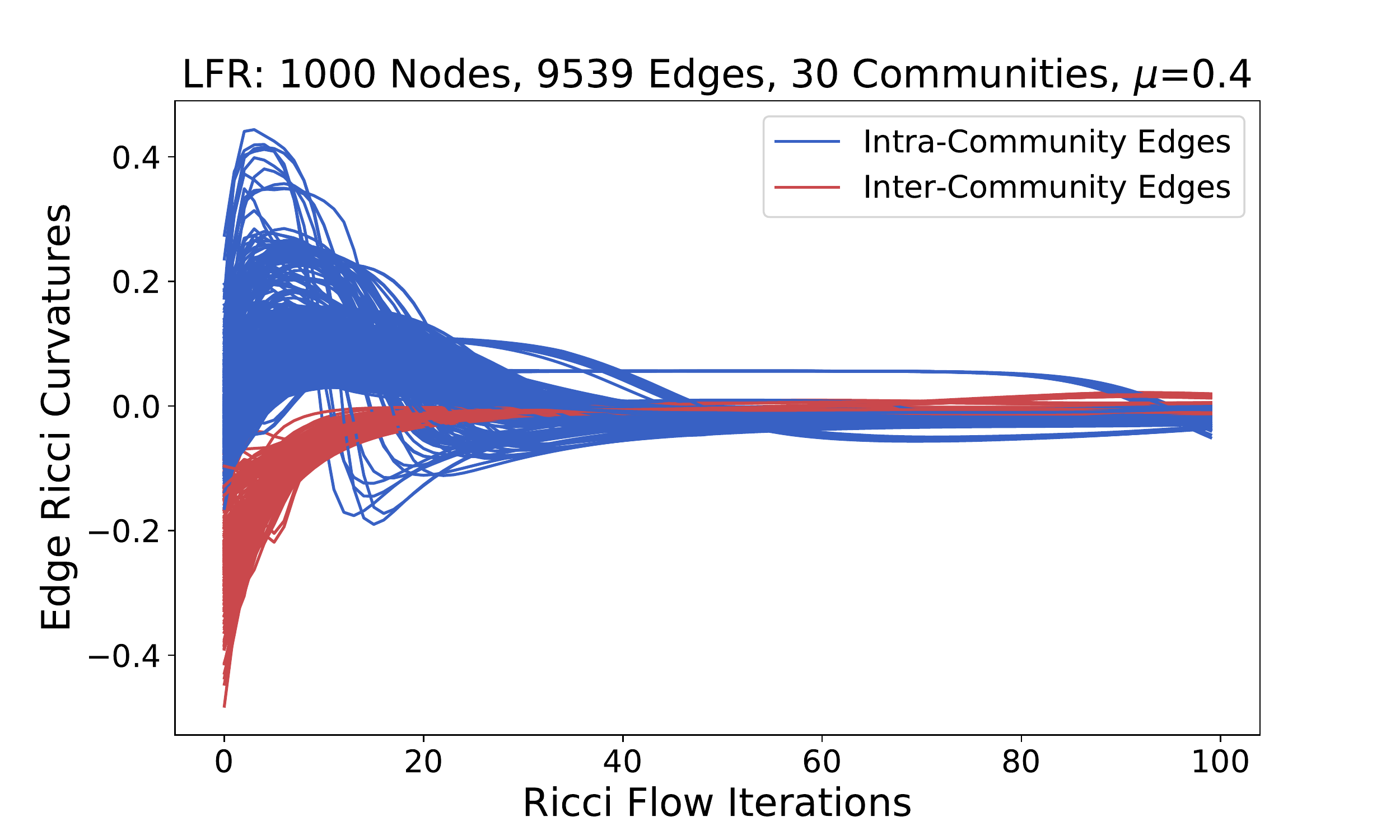}}
    \caption{ Edge weights and Ricci curvatures over Ricci flow iterations on a LFR graph with $1000$ nodes, $9539$ edges, and $\mu=0.4$. Fig.~\ref{fig:iteration:subfig:weight}: the edge weights of inter-community edges expand by Ricci flow. Fig.~\ref{fig:iteration:subfig:rc}: edge Ricci curvatures converged to $0$ by Ricci flow. }
    \label{fig:iteration} 
\end{figure}

\subsection{Ricci Flow Parameters}
Here we discuss the influence of different probability measures on Ricci flow. Recall that, when computing the Ricci curvature of each edge, we choose the probability distribution as exponential in the edge weight. This exponential distribution takes \emph{base} to be $e$ and the exponent to be the edge weight to the \emph{power} of $p$. In Fig.~\ref{fig:power}, we test the community detection accuracy of Ricci flow using probability distribution with different base and power on LFR graph with $1000$ nodes with $\mu=0.5$ and average degree $20$. The result shows that over iterations, choosing $p=2$ or $p=3$ yields decent accuracy. Notice that for $p=0$, the probability distribution is reduced to Lin and Yau's setting~\cite{Lin2011-wk} that the mass is equally distributed to a node's neighbors during the optimal transportation process.

\begin{figure}[htbp]
    \centering
    \includegraphics[width=0.5\columnwidth]{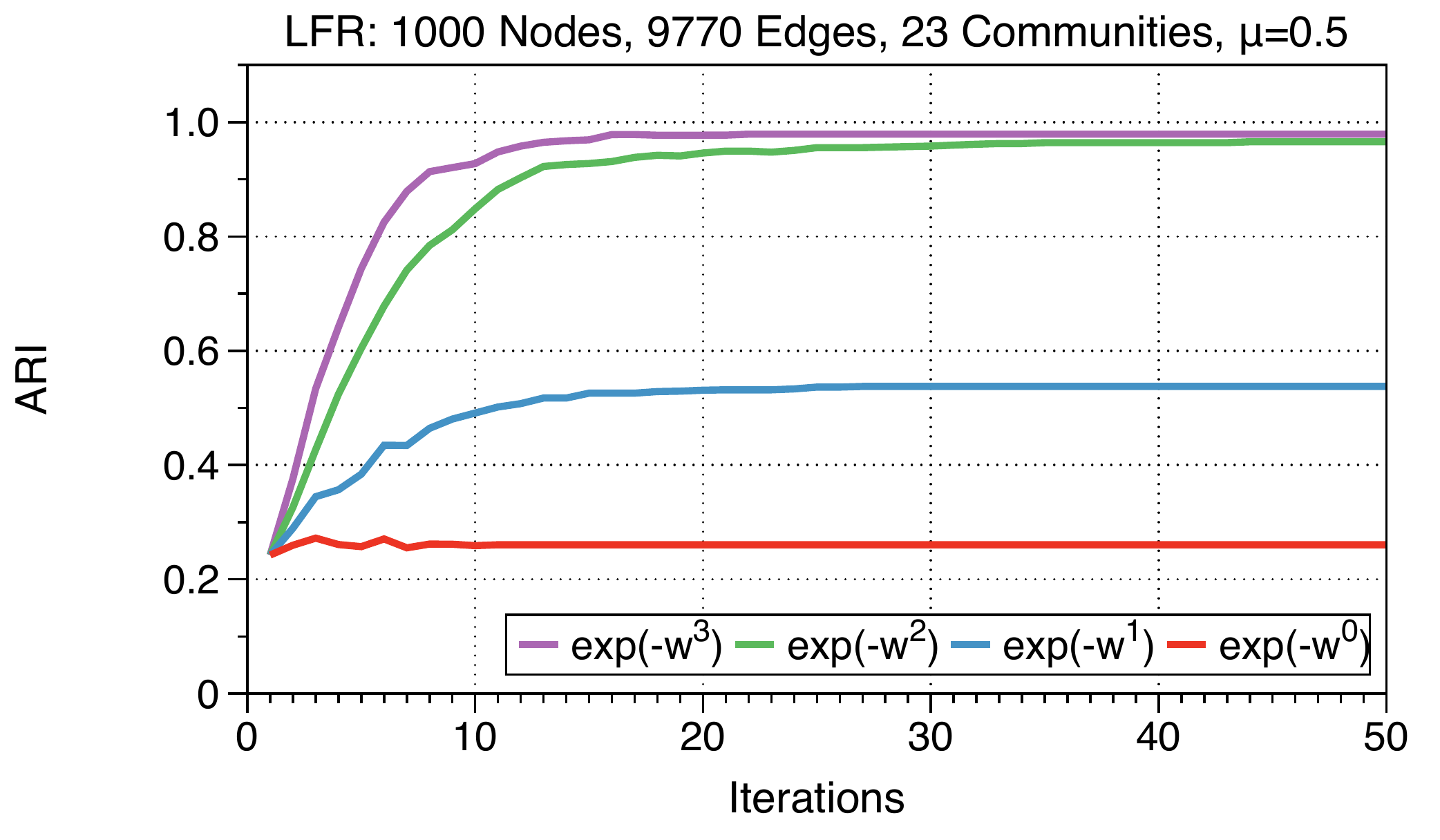}
    \caption{ Ricci flow with different power probability measure setting on a LFR graph with $1000$ nodes, 9770 edges and $\mu=0.5$.}
    \label{fig:power} 
\end{figure}

In Fig.~\ref{fig:base}, we check the influence of probability distribution of different \emph{base} with $p=2$ on LFR graph with $1000$ nodes and average degree $20$. To eliminate the influence of randomness of the model, for every different $\mu$, we take the average result over $10$ trials. The result shows that for most of the \emph{base} setting, the result is good when $\mu$ is smaller than $0.6$. For a more mixed graph with greater $\mu$, choose \emph{base} to be $e$ yields the most stable result.

In the following experiments, we stick to $base=e$ and $power=2$ as our parameter setting.

\begin{figure}[htbp]
    \centering
    \includegraphics[width=0.45\columnwidth]{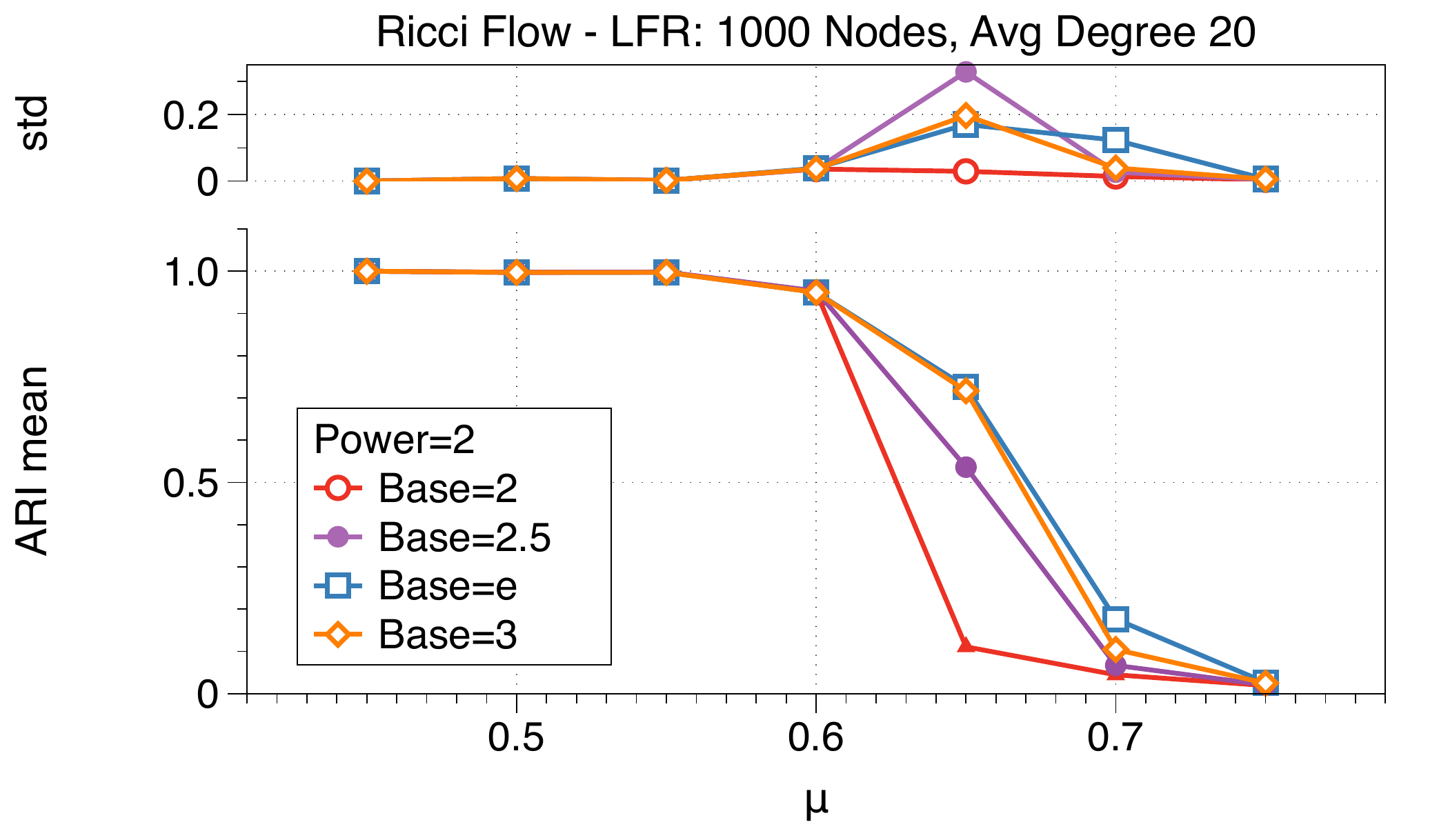}
    \caption{Ricci flow with different base probability measure setting on a LFR graph with $1000$ nodes, $9770$ edges and $\mu=0.5$.} 
    \label{fig:base} 
\end{figure}


\subsection{Approximated Optimal Transport Distance}
To speed up the computation process, we suggest utilizing the Sinkhorn distance~\cite{Cuturi2013-kx} as an approximate solution for optimal transport. The result of applying these two distances for Ricci flow is shown in Fig. \ref{fig:sinkhorn}. With regularity term set to be $0.1$, the Sinkhorn distance performs equally well as the optimal transport in Ricci flow process. And the time complexity is reduced by four times. The computation time is average over $5$ iterations on an Intel(R) Xeon(R) CPU E5-2670 v2 @ 2.50GHz with 512G RAM using 20 processes. Optimal transport is computed by CVXPY(https://www.cvxpy.org) with ECOS solver, Sinkhorn distance is solved by POT: Python Optimal Transport(https://github.com/rflamary/POT).

\begin{figure}[htbp]
    \centering
    \includegraphics[width=0.45\columnwidth]{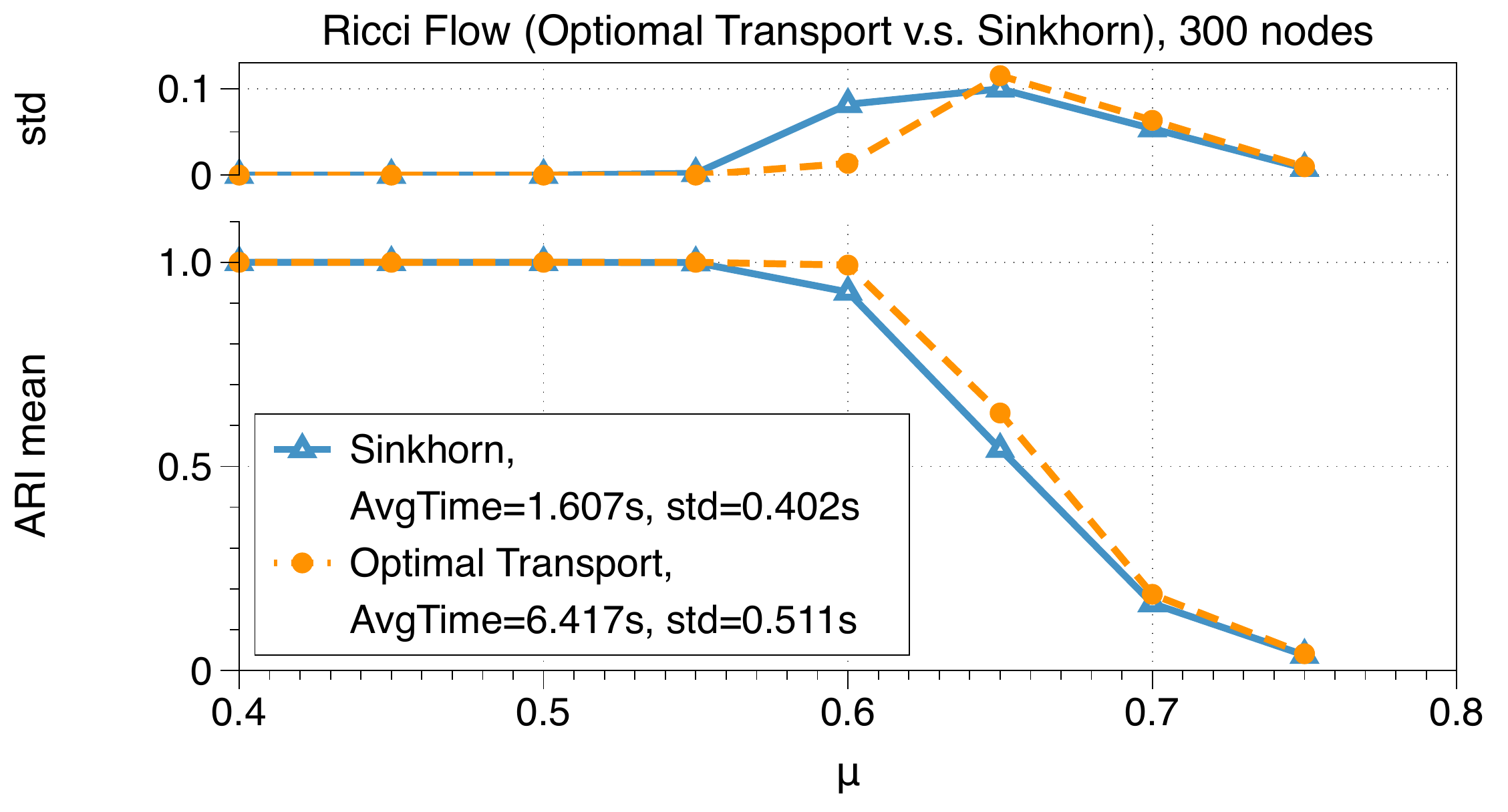}
    \caption{ A comparison of applying Sinkhorn distance and optimal transport distance for Ricci flow computations on LFR graph with $300$ nodes. The result of applying the Sinkhorn distance is similar as the one using optimal transport wile the time cost is four times smaller.} 
    \label{fig:sinkhorn} 
\end{figure}

\subsection{Ricci flow with Surgery}
Without too many parameter settings, Ricci flow yields good clustering performance for the graph with distinct community structures. In more complex graphs especially those with hierarchical community structures, we may need to run Ricci flow multiple times and do surgery during the Ricci flow iterations.
For these kinds of graphs, we mimic the surgery process in the classic Ricci flow process during Ricci flow iterations. For every $5$ iteration in the Ricci flow process, we cutoff the edges whose weights rank among the top $5\%$. Fig.~\ref{fig:surgery-iteration} presents the edge weight and edge Ricci curvature changes during Ricci flow iterations with surgery. For every $5$ iteration, the surgery process cuts out a small part of high weighted edges (mostly inter-community edges), and the surgery helps to further separate the communities. Fig.~\ref{fig:surgery} demonstrates the result of Ricci flow with surgery. With LFR graph of $1000$ nodes and $\mu=0.6$, the surgery process boosts the ARI from $0.3$ to $0.8$.


\begin{figure}[htbp]
    \centering
    \subfigure[Weight: Without Surgery]{
        \label{fig:surgery-iteration:subfig:weight_nosurgery}
        \includegraphics[width=0.48\columnwidth]{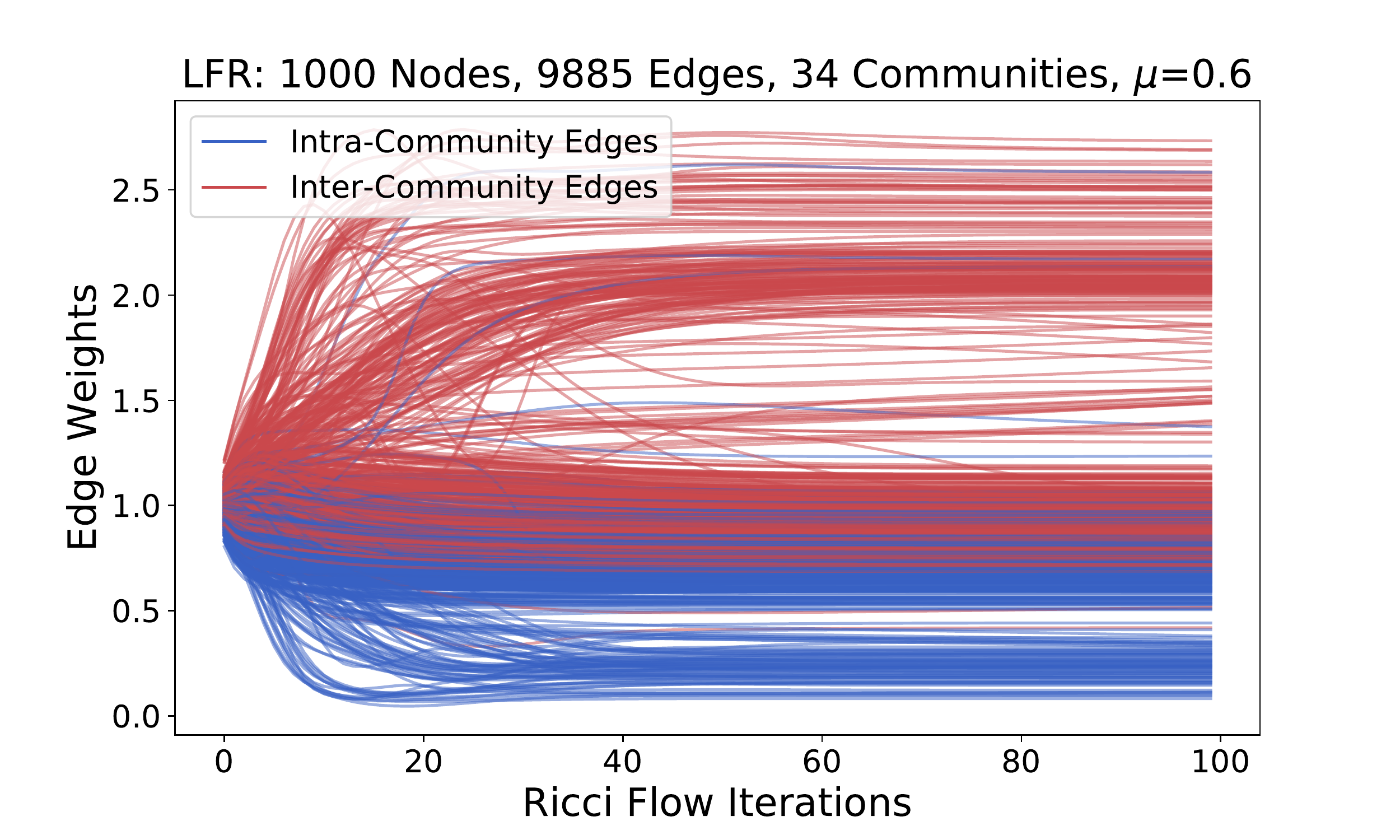}}
    \subfigure[Ricci Curvature: Without Surgery]{
        \label{fig:surgery-iteration:subfig:rc_nosurgery}
        \includegraphics[width=0.48\columnwidth]{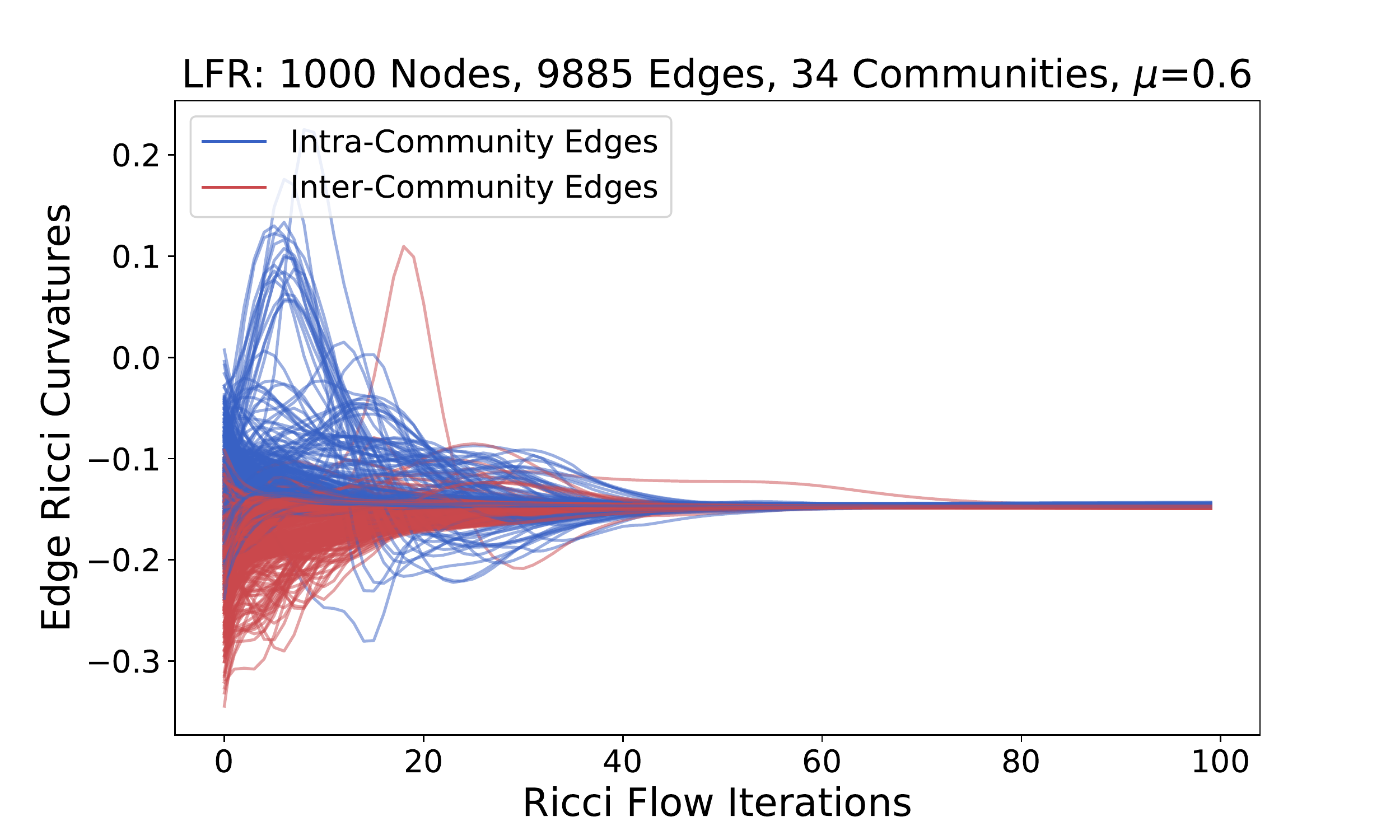}}
    \subfigure[Weight: With Surgery]{
        \label{fig:surgery-iteration:subfig:weight}
        \includegraphics[width=0.48\columnwidth]{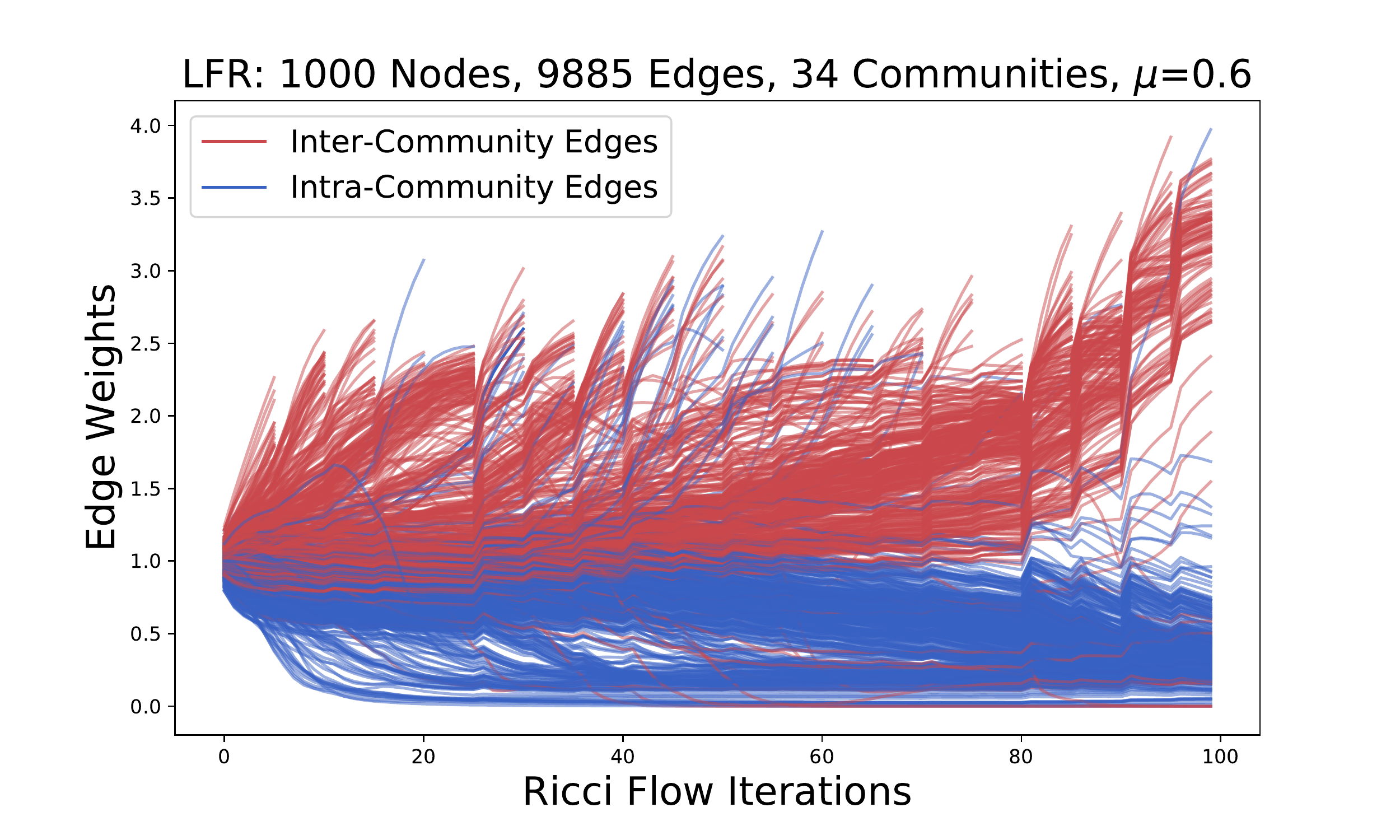}}
    \subfigure[Ricci Curvature: With Surgery]{
        \label{fig:surgery-iteration:subfig:rc}
        \includegraphics[width=0.48\columnwidth]{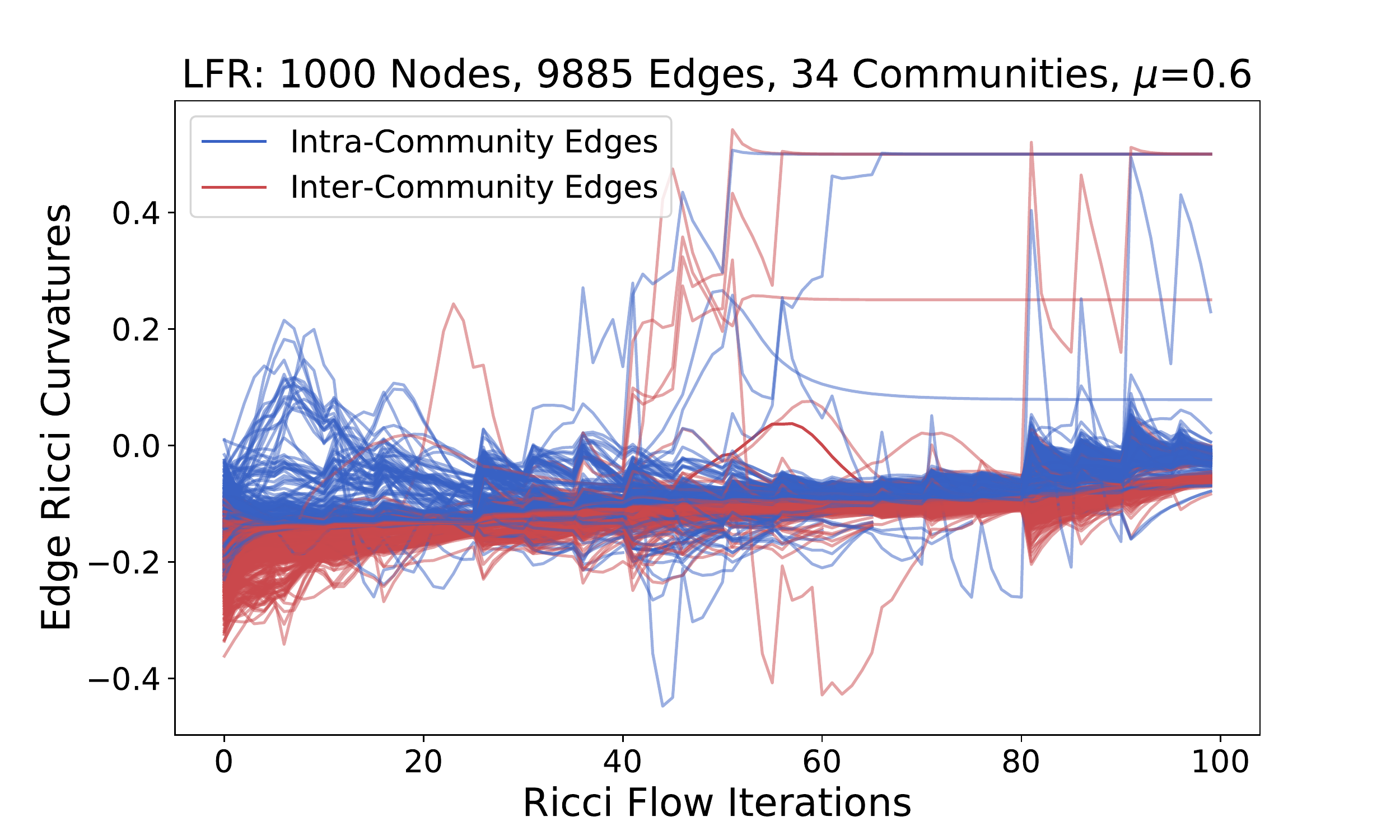}}
    \caption{ Edge weights and Ricci curvatures over Ricci flow iterations with and without surgery on a LFR graph with $1000$ nodes, $9718$ edges, and $\mu=0.6$. In Fig.~\ref{fig:surgery-iteration:subfig:weight} and Fig.~\ref{fig:surgery-iteration:subfig:rc}, for every $5$ iterations, the surgery process cut out a small part of high weighted edges (mostly inter-community edges). This cutting helps further separate the communities. }
    \label{fig:surgery-iteration} 
\end{figure}

\begin{figure}[htbp]
    \centering
    \subfigure[Ricci Flow]{
        \label{fig:lfr-mu:subfig:rf_e2_ari}
        \includegraphics[width=0.45\columnwidth]{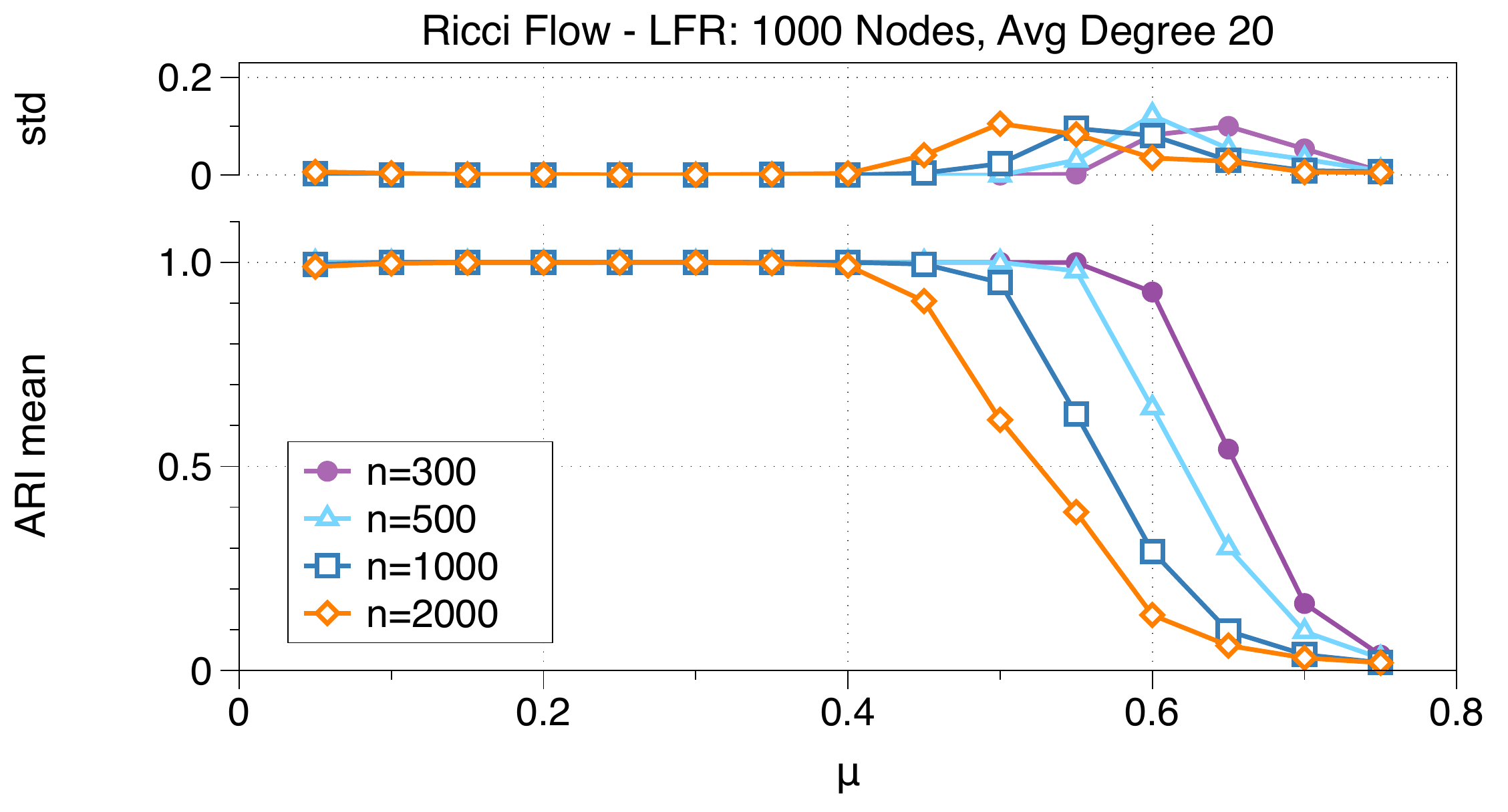}}
    \subfigure[Ricci Flow With Surgery]{
        \label{fig:lfr-mu:subfig:rf_surgery_ari}
        \includegraphics[width=0.45\columnwidth]{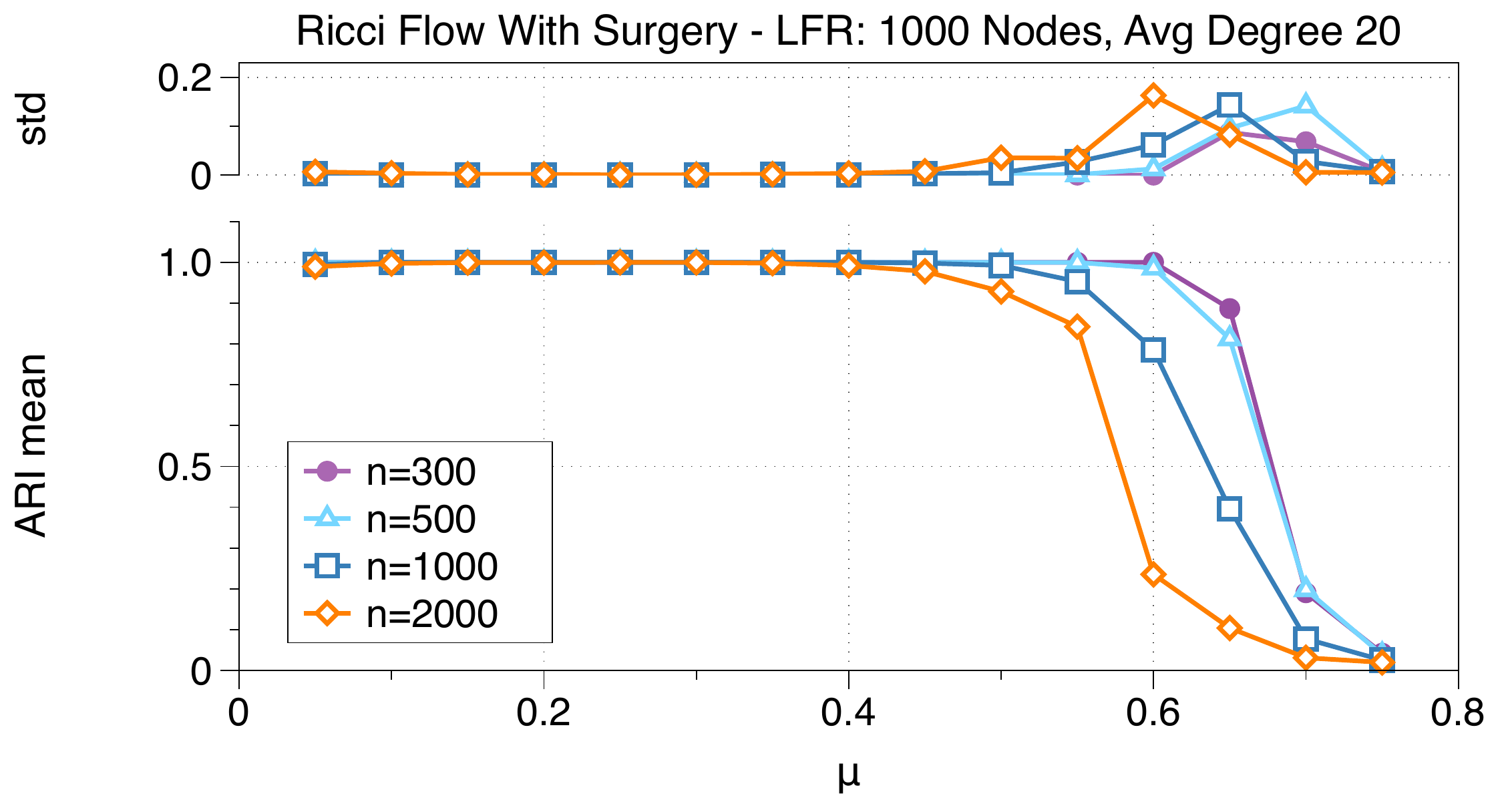}}
    \caption{ Accuracy evaluation of $50$ iterations of Ricci flow and $50$ iterations Ricci flow with surgery for every $5$ iterations. With LFR graph of $1000$ nodes and $\mu=0.6$, the surgery process boost the clustering accuracy ARI from $0.3$ to $0.8$.}
    \label{fig:surgery} 
\end{figure}

For most of the networks that come without ground-truth community labels, we proposed to use modularity as an index to decide the final edge weight cutoff threshold to detect the communities. For a given graph, we first apply the graph with $20$ to $50$ iterations of discrete Ricci flow processes. (Iteration required varies by the complexity of the graph. For example, LFR with $1000$ nodes in Fig.~\ref{fig:surgery-iteration} takes around $50$ iterations to fully stabilized.) 
Then we remove edges from highest to lowest, and log the modularity as Fig.~\ref{fig:Gnet-modularity}. From the relationship of ARI and modularity we observed from Fig.6 in the main article, we suggest the cutoff threshold to be the point when modularity first hit the plateau of the curve. In Fig.~\ref{fig:Gnet-modularity} and Fig.~\ref{fig:Gnet}, we represent three different models of GNet from planar to scale-free to demonstrate the final clustering results with different cutoff thresholds based on modularity. By properly adjusting the cutoff threshold, the hierarchical community structure of the graph is also revealed. 

\begin{figure}[htbp]
    \centering
    \subfigure[GNet: $m=2, p=0.9$]{
        \label{fig:Gnet-modularity:subfig:m2p09}
        \includegraphics[width=0.31\columnwidth]{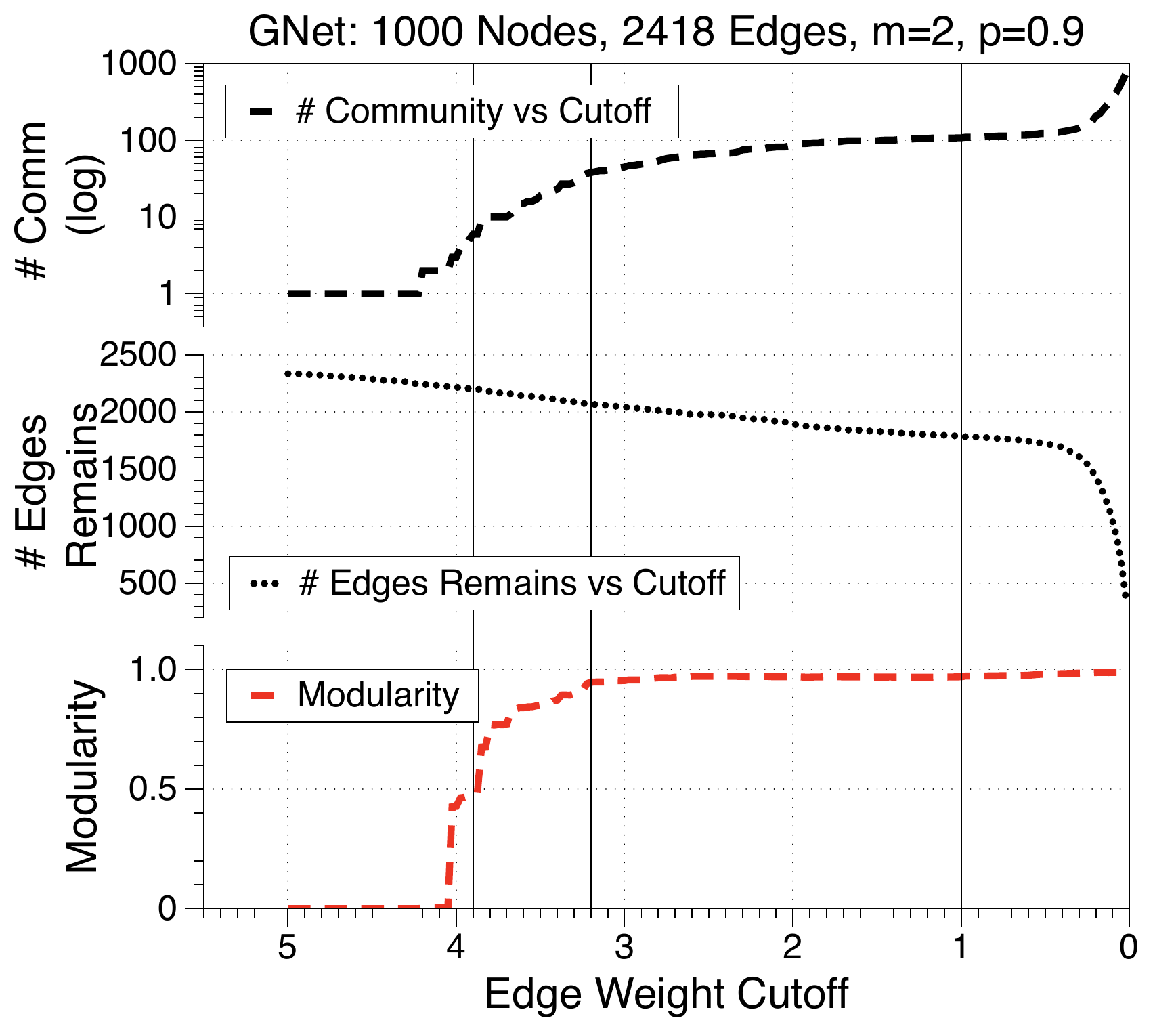}}
    \subfigure[GNet: $m=4, p=0.9$]{
        \label{fig:Gnet-modularity:subfig:m4p09}
        \includegraphics[width=0.31\columnwidth]{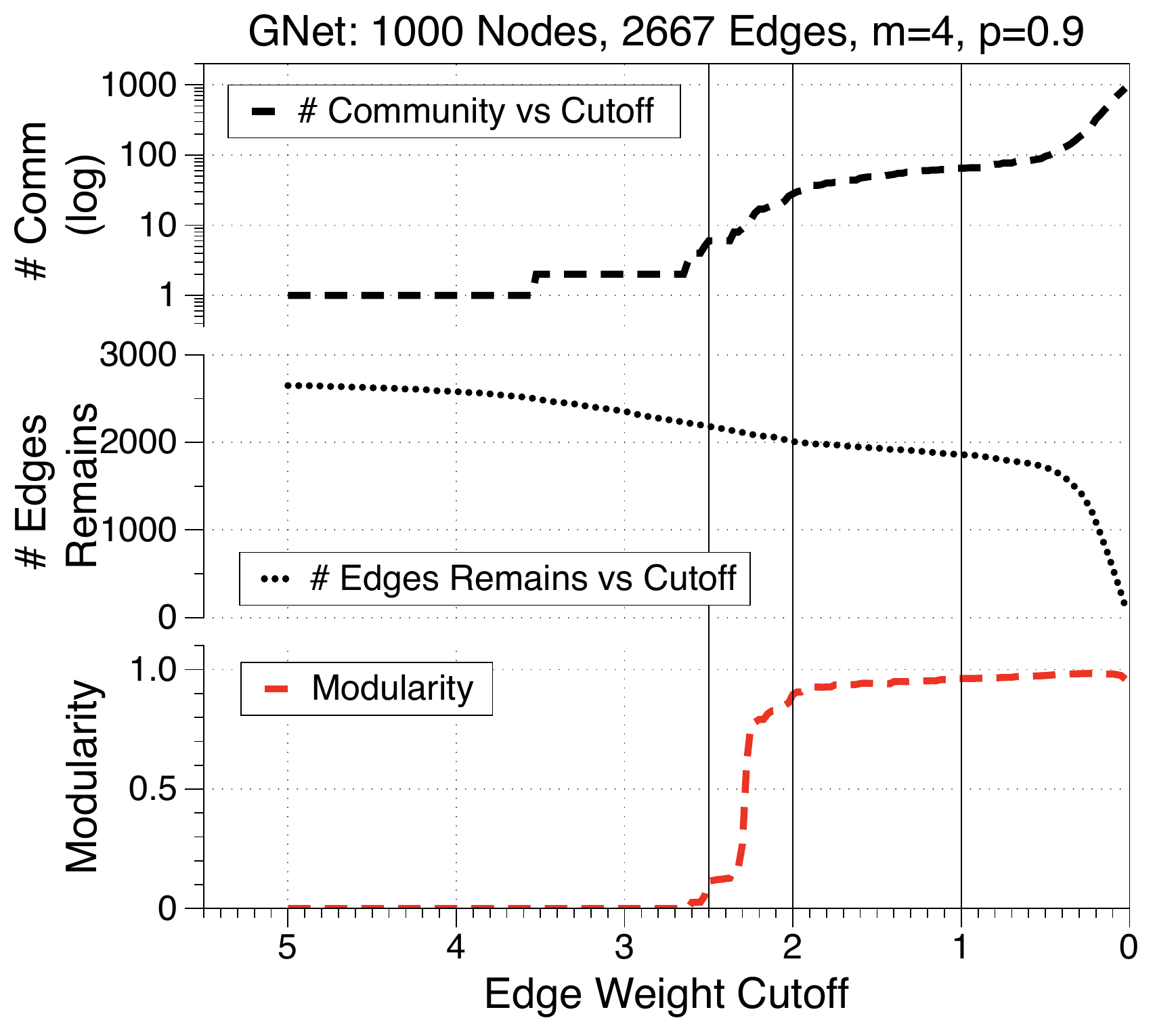}}
    \subfigure[GNet: $m=\infty, p=0.0$]{
        \label{fig:Gnet-modularity:subfig:m1p00}
        \includegraphics[width=0.31\columnwidth]{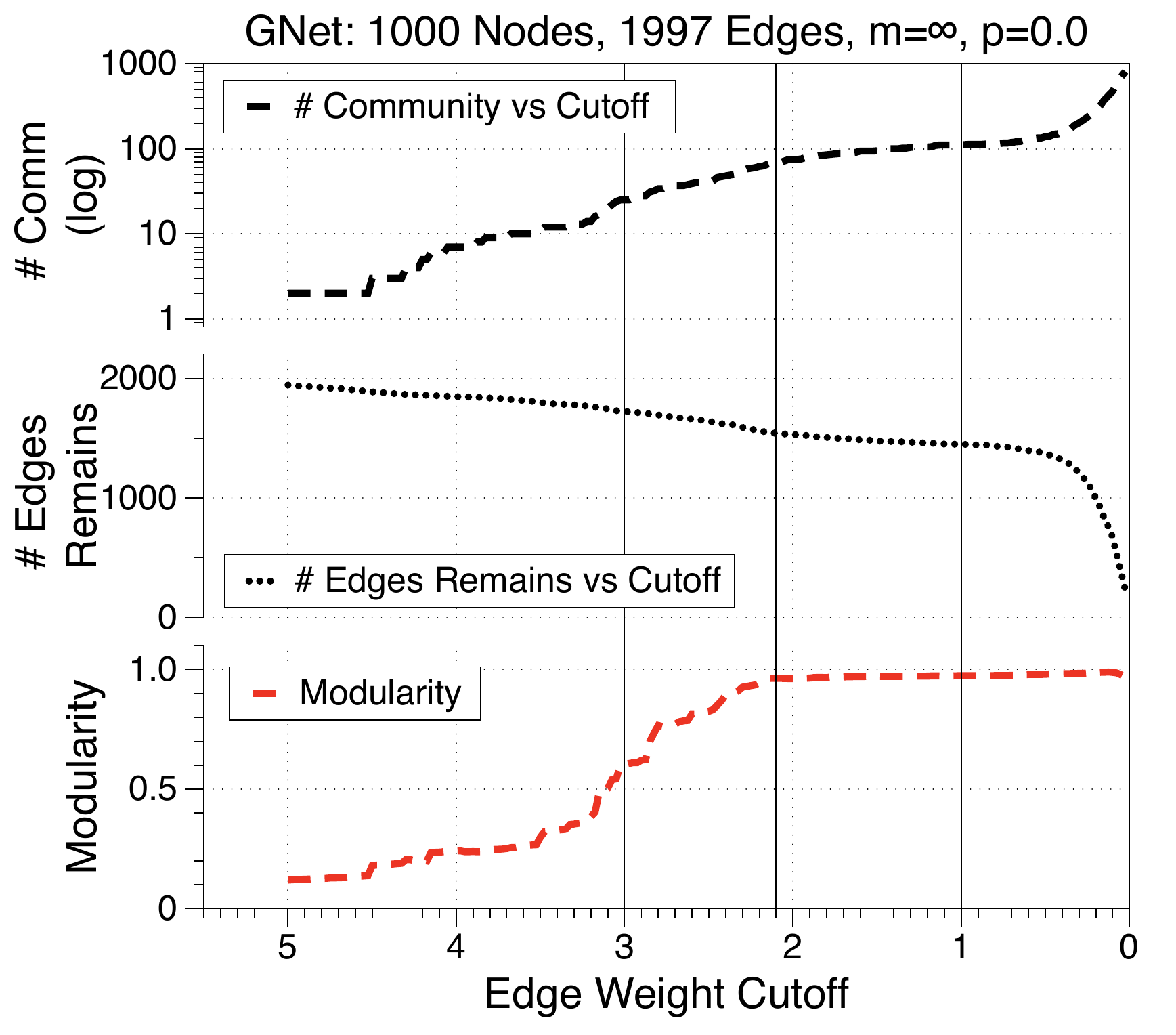}}
    \caption{Modularity of GNet with $1000$ nodes with different parameter setting. For each setting, we suggest the middle vertical line as cutoff threshold as it is the turning point for modularity curve. We also add two extra cutoff thresholds (left and right vertical line) for comparison. The result of communities detected with these three given cutoff thresholds are shown in Fig.~\ref{fig:Gnet}.}
    \label{fig:Gnet-modularity} 
\end{figure}

\begin{figure}[htbp]
    \centering
    \includegraphics[width=1\columnwidth]{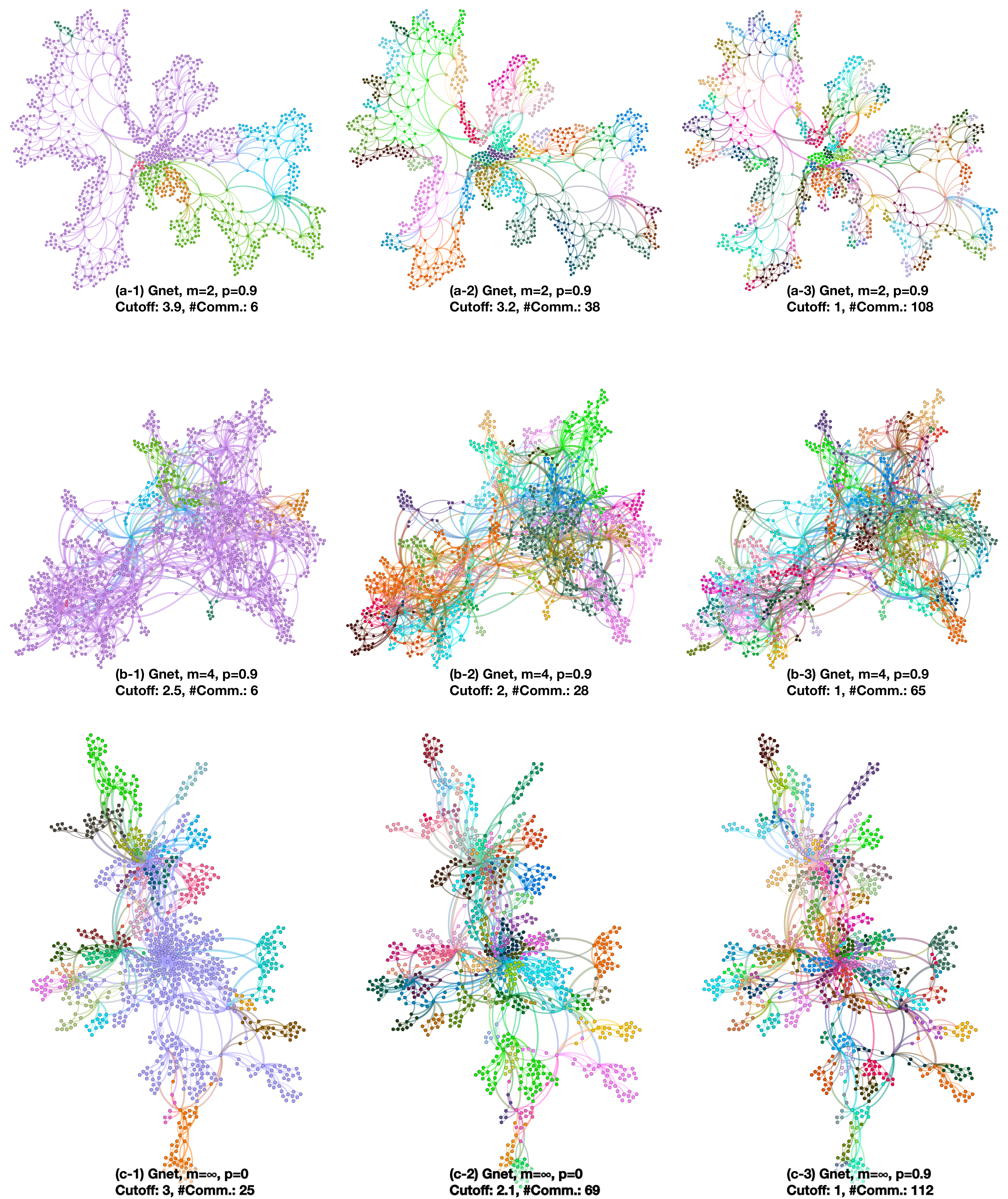}
    \caption{The emergent geometrical network model generates the following networks:\ref{fig:Gnet-modularity:subfig:m2p09} a network with random planar geometry for $m=2, p=0.9$; \ref{fig:Gnet-modularity:subfig:m4p09} a network with a broad degree distribution, small-world property, and finite spectral dimension for $m=4, p=0.9$; \ref{fig:Gnet-modularity:subfig:m1p00} and a scale-free network with power-law degree distribution for $m=\infty, p=0.9$.}
    \label{fig:Gnet} 
\end{figure}
\section{Proof of Theorem 4.1}
\label{sec:convergence_proof}
We remark that we are not able to prove the similar result for other Ollivier-Ricci curvatures when $p>0$ though numerical results indicate it should be true.

We start by computing the Wasserstein distance of a metric on $G(a,b)$. In each community $C_i$ there is a specific node $u_i$ which connects to other communities. We call this node the gateway node and the rest of the nodes in $C_i$ the non-gateway nodes. There are three types of edges in the graph, edges connecting two communities (on two gateway nodes, such as $u_1u_2$ in Fig.~\ref{fig:G_a_b}), edges connecting a gateway node with a non-gateway node in the same community (such as $u_2i$ and $u_2 j$), and edges connecting two non-gateway nodes (such as $ij$).

\begin{figure}[htb]
    \centering
    \includegraphics[width=0.3\columnwidth]{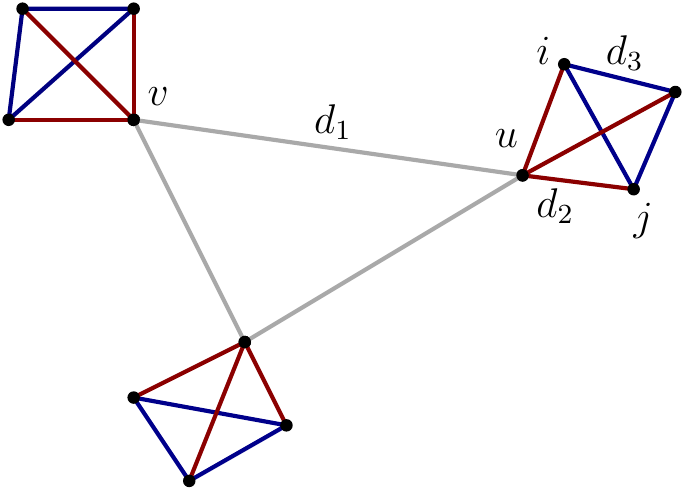}
    \caption{The graph $G(a,b)$ obtained from a complete graph on $b+1$ vertices by replacing each vertex by a complete graph of $a+1$ vertices. In this figure, $a=3$, $b=2$.} 
    \label{fig:G_a_b} 
\end{figure}

Since the initial metric has edge length one and the Ricci flow preserves the graph symmetry, there are only three different edge lengths at each iteration of the Ricci flow, corresponding to the three types of edges.   Suppose the edge lengths of the metric at the $n$-th iteration are $d_1$, $d_2$ and $d_3$ for the edges between communities, edges from a gateway node to a non-gateway node, and edges between two non-gateway nodes respectively, as shown in Fig.~\ref{fig:G_a_b}.  Let edge lengths of the $(n+1)$-th iteration be $D_1$, $D_2$ and $D_3$ which are the Wasserstein distances of the corresponding edges for the metric graph $(G(a,b), d)$ with respect to the probability measures $\{\mu_x| x \in V\}$.

\begin{lemma} The Wasserstein distances $D_1, D_2, D_3$ are given by

$ D_1 =\frac{a-1}{a+b}d_1+\frac{2a}{a+b}d_2,$

$ D_2=\frac{b}{a+b}d_1+\frac{ab-a-b}{a(a+b)}d_2+\frac{1}{a+b} d_3,$

$D_3=\frac{1}{a}d_3.$

 Furthermore, suppose $a>b \geq 2$ and $d_1\geq d_2 \geq d_3$. Then $D_1 \geq D_2 \geq D_3$. \end{lemma}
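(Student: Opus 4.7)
The strategy is to reduce each of the three Wasserstein computations to a small linear program by exploiting the vertex-transitive symmetry of $G(a,b)$, solve it in closed form, and then establish the monotonicity $D_1 \geq D_2 \geq D_3$ by direct algebra using $a > b \geq 2$ and $d_1 \geq d_2 \geq d_3$. Throughout, $\mu_x$ denotes the measure $m^{0,0}_x$, the uniform probability on the neighbors of $x$, so $\mu_{u_i}$ is uniform on $a+b$ vertices and $\mu_v$ (for a non-gateway $v$) is uniform on $a$ vertices. I work under the natural assumption that the edge lengths $d_1, d_2, d_3$ realize the shortest-path distances; in particular $d(v,w) = d_3$ for non-gateways $v, w$ in the same community, and any path between two communities must pass through the two relevant gateways.

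For $D_3$, the computation is immediate. For an edge $ij$ between two non-gateways in $C_k$, the measures $\mu_i$ and $\mu_j$ coincide on their $a-1$ common neighbors, leaving only the discrepancy $\mu_i(j) = \mu_j(i) = 1/a$. The unique optimal plan slides this mass along the edge $ij$, so $D_3 = d_3/a$. For $D_1$, consider the inter-community edge $u_1 u_2$. Averaging any optimal plan over the symmetry group permuting non-gateways within $C_1$ and within $C_2$ reduces the LP to a single free parameter: the amount $x$ of mass routed from $u_2$ directly to $u_1$. The three remaining symmetric flows---$u_2 \to $ non-gateways of $C_2$ at cost $d_2$, non-gateways of $C_1 \to u_1$ at cost $d_2$, and non-gateways of $C_1 \to$ non-gateways of $C_2$ at cost $d_1 + 2 d_2$---are determined by $x$. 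The objective is linear in $x$ with positive slope $2 d_1$, so the optimum is $x=0$, which after substitution yields $D_1 = \frac{a-1}{a+b} d_1 + \frac{2a}{a+b} d_2$.

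The case $D_2$ is the most delicate. The edge $u_2 v$ is intra-community, and after symmetry reduction the LP has one free variable $x$ equal to the mass sent from $v$ directly to the other non-gateways of $C_2$ (rather than to $u_2$). The other flows---$v \to u_2$ at cost $d_2$, each outer gateway $\to u_2$ at cost $d_1$, and outer gateways $\to$ other non-gateways of $C_2$ at cost $d_1 + d_2$---are again pinned by $x$. The objective reduces to $(\text{constant}) + x(d_3 - 2 d_2)$. Under the shortest-path assumption $d_3 \leq 2 d_2$, the coefficient is non-positive and $x$ should be as large as possible; feasibility forces $x \leq 1/(a+b)$, which is the tighter non-negativity bound thanks to $a, b \geq 2$. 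Substituting $x = 1/(a+b)$ gives the claimed formula for $D_2$.

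Monotonicity follows by direct algebra. Writing $D_1 - D_2$ as a linear form in $d_1, d_2, d_3$, the coefficient of $d_1$ is $(a-1-b)/(a+b) \geq 0$ since $a > b$, the coefficient of $d_2$ is positive, and bounding $d_1, d_2$ below by $d_3$ leaves a polynomial in $a, b$ whose positivity follows from $a > b \geq 2$. The inequality $D_2 \geq D_3$ reduces similarly to $ab \geq a+b$, i.e.\ $(a-1)(b-1) \geq 1$, which is immediate for $a, b \geq 2$. The main obstacle, and where care is required, is justifying the symmetry-reduction ansatz so that the LPs really do collapse to a one-parameter family; this is handled cleanly by group-averaging any optimal plan over the automorphisms of $G(a,b)$ that fix the chosen edge, and optimality can be independently confirmed via Kantorovich duality by exhibiting an explicit $1$-Lipschitz potential (for $D_1$, one such is $f(u_1) = d_1$, $f(u_j) = 0$ for $j \geq 2$, $f(v) = d_1 + d_2$ on non-gateways of $C_1$, and $f(w') = -d_2$ on non-gateways of $C_2$).
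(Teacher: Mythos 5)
Your proposal is correct and lands on exactly the formulas and the same final algebra as the paper (the bound $D_2\ge\frac{b}{a+b}d_1\ge\frac1a d_1\ge\frac1a d_3$ via $(a-1)(b-1)\ge 1$, and the same linear form for $D_1-D_2$). The substantive difference is in how optimality of the transport plans is established. The paper simply exhibits a plan for each of $D_1,D_2,D_3$ and asserts it is ``the best transportation plan'' without justification; you instead average an optimal plan over the edge-stabilizing automorphisms of $G(a,b)$ to collapse each Kantorovich LP to a one-parameter family, read off the sign of the slope ($+2d_1$ for $D_1$, $d_3-2d_2\le 0$ for $D_2$), and pin down the optimal vertex --- I checked that your feasibility bound $x\le 1/(a+b)$ is indeed the binding one for $D_2$ when $a,b\ge 2$, and that your $1$-Lipschitz potential for $D_1$ is tight on every edge used by the plan and certifies the value $\frac{(a-1)d_1+2ad_2}{a+b}$. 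This buys you a complete proof of the equalities rather than just upper bounds, at the cost of having to state explicitly the standing assumption (implicit in the paper) that $d_1,d_2,d_3$ realize the shortest-path distances, which is what makes the costs $d_1+2d_2$, $d_1+d_2$, and $d_3$ the correct entries of the LP; under the hypothesis $d_1\ge d_2\ge d_3$ this assumption is easily verified and is preserved along the flow, so your argument is sound.
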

\begin{proof}
The easiest to compute is $D_3$.  The vertices adjacent to $i$ are the same as vertices adjacent to $j$. Furthermore, each vertex $x$ adjacent to
$i$  (or $j$) carries the same mass $1/a$. See Fig.~\ref{fig:G_a_b} (a). Thus the optimal transportation to move $\mu_i$ to $\mu_j$ is to transport the mass  $1/a$ at $j$ to $i$ along the edge $ij$ of distance $d_3$. Therefore,  $D_3=\frac{1}{a}d_3$ by definition.  See Fig. 2 (a).

\begin{figure}[htbp]
    \centering
    \subfigure[]{
        \label{fig:opt:subfig:a}
        \includegraphics[width=0.6\columnwidth]{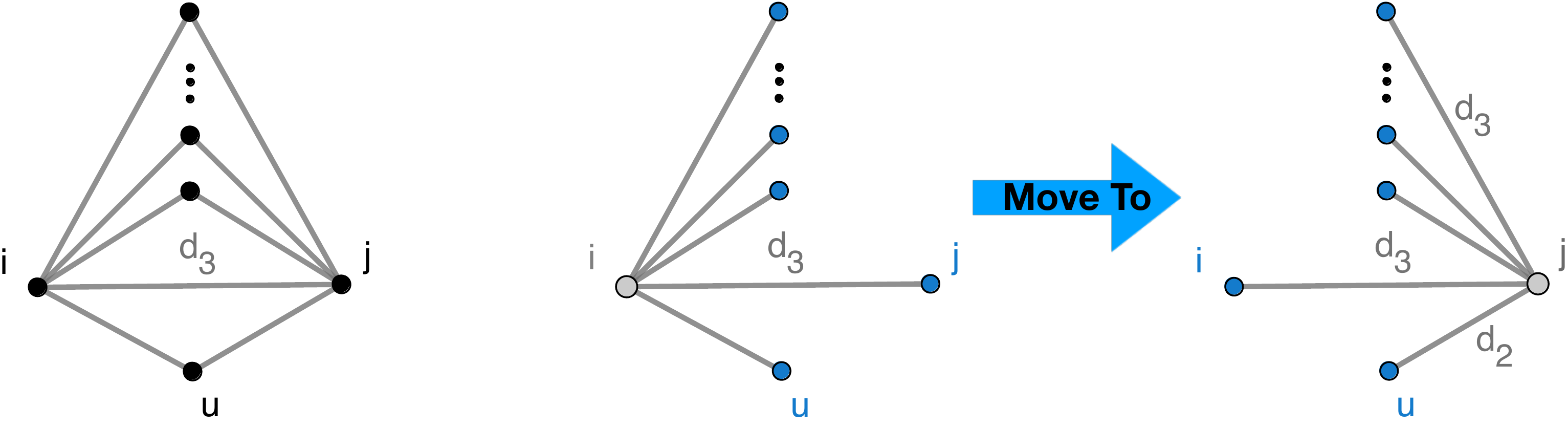}}
    \subfigure[]{
        \label{fig:opt:subfig:b}
        \includegraphics[width=0.6\columnwidth]{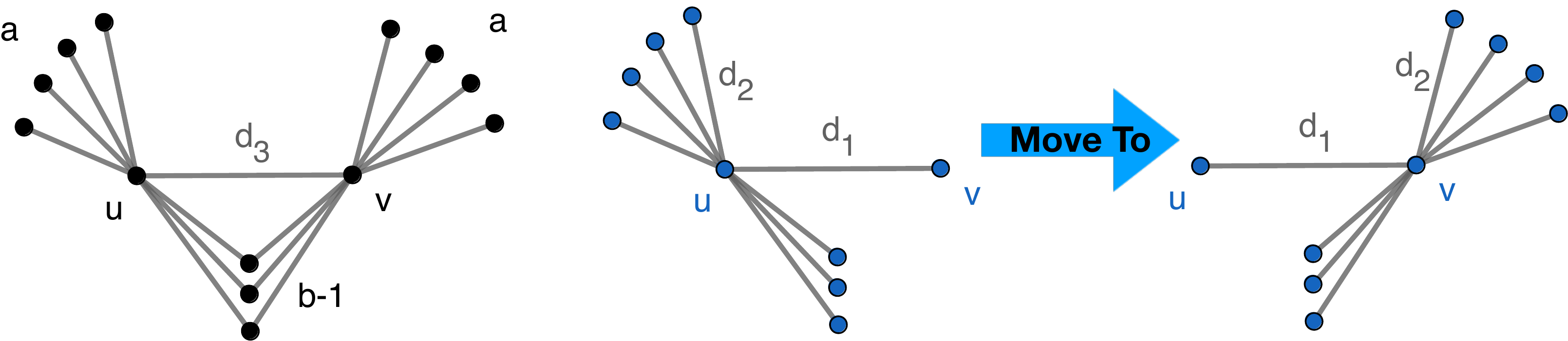}}
    \subfigure[]{
        \label{fig:opt:subfig:c}
        \includegraphics[width=0.6\columnwidth]{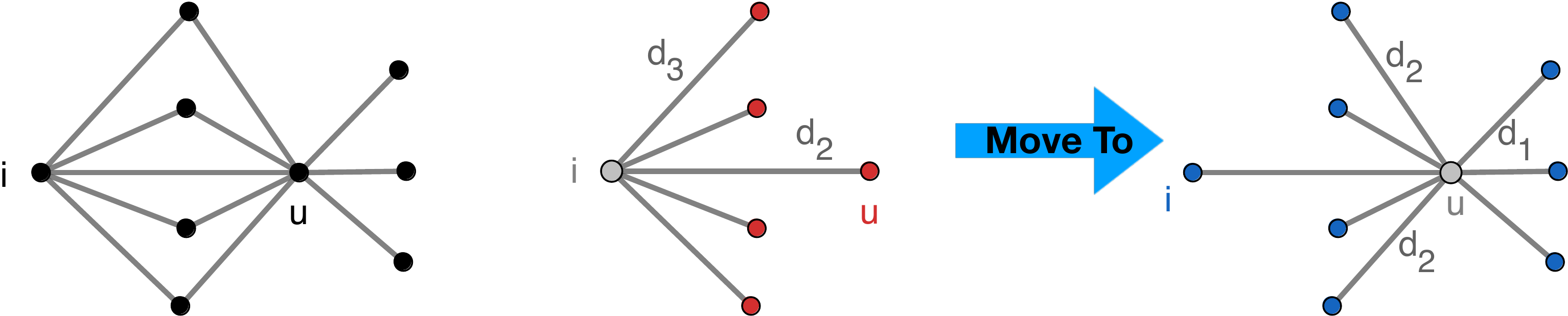}}
    \caption{Parts (a), (b) and (c) illustrate the optimal transportation to move the mass at vertex u to vertex v.}
    \label{fig:opt} 
\end{figure}

Now let us compute the Wasserstein distance $D_1$ of moving the probability measures $\mu_u$ to $\mu_v$. Note that by definition the vertex degree $d_u=d_v=a+b$. Thus the probability measures $\mu_u$ and $\mu_v$ have mass $\frac{1}{a+b}$ at vertices adjacent to $u$ or $v$.
There are $b-1$ vertices which are adjacent to both $u$ and $v$ such that each vertex has mass $\frac{1}{a+b}$ in both $\mu_u$ and $\mu_v$. Therefore, there is no need to move them.  We only need to move the mass $\frac{1}{a+b}$ at each of  the rest $a$ many vertices $x$ adjacent to $u$ to those vertices $y$ adjacent to $v$ and need to move the mass at $v$ to the mass at $u$.  The best transportation plan goes as follows.
\begin{itemize}
\item Step 1. Move  the mass $\frac{1}{a+b}$ at vertex $x$ to $u$ along the edge $xu$. Since there are $a$ many such vertices $x$, the total cost is  $\frac{a}{a+b} d_2$  where $d_2$ is the distance from $x$ to $u$.

\item Step 2. Leave a mass of $\frac{1}{a+b}$ at $u$ and move the rest of mass $\frac{a-1}{a+b}$ from $u$  to $v$ along $uv$.  The total cost is $\frac{a-1}{a+b}d_1$.

\item Step 3.  Finally, move the mass $\frac{a}{a+b}$ at $v$ to vertices $y$ by equal distributions along edges $vy$ of distance $d_2$.
The total cost is $\frac{a}{a+b}d_2$.
\end{itemize}
Therefore, $D_1=\frac{a-1}{a+b}d_1+\frac{2a}{a+b}d_2$.

Finally, let us compute the
Wasserstein distance $D_2$ of moving the measure $\mu_i$ to $\mu_u$.  The mass of $\mu_i$ at a vertex $x$ adjacent to $i$ is $\frac{1}{a}$  and the mass of $\mu_u$ at vertices adjacent to $u$ is $\frac{1}{a+b}$.  The best transportation plan is the following.  Note that every vertex adjacent to $i$ is also adjacent to $u$.
\begin{itemize}
\item Step 1. Leave the mass $\frac{1}{a+b}$ at each vertex $x \neq u$ adjacent to $i$. The total leftover  mass at these vertices $x$ is  $\frac{(a-1)b}{a(a+b)}$.
Move the mass of  $\frac{1}{a+b}$ from the leftover mass at these $x$ to the vertex $i$ of distance $d_3$ from $x$. The total cost is $\frac{d_3}{a+b}$. After this move, there is a total mass of $\frac{(a-1)b}{a(a+b)}-\frac{1}{a+b}=\frac{ab-a-b}{a(a+b)}$ at these $x$.  Move them to vertex $u$ along edges of length $d_2$.
The total cost is
$\frac{ab-a-b}{a(a+b)}d_2$.

\item Step 2.  Finally, move a mass of $\frac{b}{a+b}$ at the vertex $u$ to vertices $y$ adjacent to $u$ such that $y$ is not adjacent to $i$. The total cost is $\frac{b d_1}{a+b}$.
\end{itemize}
Therefore, the cost of transportation is $D_2=\frac{d_3}{a+b}+\frac{ab-a-b}{a(a+b)} d_2+ \frac{bd_1}{a+b}$.

Now we come to prove the last statement.  Since $d_1 \geq d_2 \geq d_3$, we obtain
$D_2 \geq \frac{b}{a+b}d_1 \geq \frac{1}{a}d_1  \geq \frac{1}{a}d_3 =D_3$.
Also, $$D_1-D_2=\frac{a(a-b-1)}{a(a+b)} d_1+\frac{2a^2-ab+a+b}{a(a+b)} d_2 -\frac{a}{a(a+b)} d_3$$ $$
\geq \frac{d_2}{a(a+b)}[ a(a-b-1)+2a^2-ab+a+b-a]$$ $$
=\frac{d_2}{a(a+b)}[a(a-b-1) +a^2-ab +a^2+b] \geq 0.$$
\end{proof}

Let $A$ be the $3 \times 3$ matrix
$$
\begin{bmatrix}
    \frac{a-1}{a+b}    & \frac{2a}{a+b} &  0 \\
    \frac{b}{a+b}      & \frac{ab-a-b}{a(a+b)}&    \frac{1}{a+b}\\
      0   & 0  & \frac{1}{a}
\end{bmatrix}
$$
and $W_n=[w_{n1}, w_{n2}, w_{n3}]^t$ be the $3 \times 1$ column vector given by $W_{n+1}=AW_{n}$ with $W_0=[1,1,1]^t$.  Then $w_{ni}$ are the edge lengths of the graph $G(a,b)$ after $n$-th iteration of the Ricci flow.  Our goal is the understand the asymptotic behavior of $W_n$.

Using Maple calculation, we conclude the following lemma.

\begin{lemma} Suppose $a > b \geq 2$, then there are three real eigenvalues $\lambda_1 > \lambda_2= \frac{1}{a} \geq 0 > \lambda_3$ of the matrix $A$.
The largest eigenvalue $\lambda_1 > \lambda_3$. Furthermore, an eigenvector $w_1$ associated to $\lambda_1$ is of the form $[1, k, 0]^t$ where $k$ is in the open interval $(0,1)$.
\end{lemma}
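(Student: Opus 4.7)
The plan is to exploit the block structure of $A$ and reduce everything to the $2\times 2$ upper-left block $B$, then use Perron-type arguments and explicit polynomial inequalities.

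First I would expand $\det(A-\lambda I)$ along the third row. Since that row is $(0,0,1/a)$, the expansion factors as $(\tfrac{1}{a}-\lambda)\det(B-\lambda I)$, where $B$ is the upper-left $2\times 2$ block. So $\lambda_2:=1/a$ is one eigenvalue and the other two are the eigenvalues of $B$. Moreover, because $a_{13}=0$, a direct check shows that if $v=(v_1,v_2)^t$ is an eigenvector of $B$ for eigenvalue $\mu$, then $(v_1,v_2,0)^t$ is an eigenvector of $A$ for $\mu$. This reduces the whole lemma to an analysis of $B$.

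Next I would verify that $B$ has two real eigenvalues of opposite signs. All entries of $B$ are positive for $a>b\ge 2$ (in particular $b_{22}=\tfrac{ab-a-b}{a(a+b)}>0$ since $(a-1)(b-1)\ge 2$), so the discriminant $(b_{11}-b_{22})^2+4b_{12}b_{21}$ is strictly positive and both eigenvalues are real. A direct expansion gives
\begin{equation*}
a(a+b)^2 \det B=(a-1)(ab-a-b)-2a^2b=-(a^2b+a^2+2ab-a-b)<0,
\end{equation*}
so the eigenvalues have opposite signs; label them $\lambda_1>0>\lambda_3$.

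Then I would pin down the ordering and dominance. Since $\lambda_3<0$, $\lambda_1=\mathrm{tr}(B)-\lambda_3>\mathrm{tr}(B)$. Computing
\begin{equation*}
\mathrm{tr}(B)=\frac{a^2+ab-2a-b}{a(a+b)},
\end{equation*}
the inequality $\mathrm{tr}(B)>1/a$ becomes $a^2+ab-3a-2b>0$, which follows from $a\ge 3$ since $a(a+b)\ge 3(a+b)>3a+2b$. Hence $\lambda_1>1/a=\lambda_2\ge 0>\lambda_3$, as claimed. The same computation gives $\mathrm{tr}(B)>0$, equivalently $\lambda_1+\lambda_3>0$, i.e.\ $\lambda_1>|\lambda_3|$ (the intended interpretation of ``$\lambda_1>\lambda_3$'', since the strict inequality between a positive and a negative number is otherwise trivial).

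Finally I would construct the eigenvector. From the first row of $(B-\lambda_1 I)v=0$, $v_2/v_1=(\lambda_1-b_{11})/b_{12}$, which is positive because the quadratic formula yields $\lambda_1-b_{11}=\tfrac{1}{2}(b_{22}-b_{11}+\sqrt{(b_{11}-b_{22})^2+4b_{12}b_{21}})>0$ (the root strictly exceeds $|b_{11}-b_{22}|$). Normalizing $v_1=1$ sets $k:=v_2>0$, and lifting gives the $A$-eigenvector $(1,k,0)^t$. To show $k<1$, eliminate $\lambda_1$ between the two rows of $Bv=\lambda_1 v$ to get
\begin{equation*}
b_{12}k^2+(b_{11}-b_{22})k-b_{21}=0.
\end{equation*}
At $k=0$ the left side equals $-b_{21}<0$; at $k=1$ it equals $b_{11}-b_{22}+b_{12}-b_{21}=\tfrac{3a^2-2ab+b}{a(a+b)}$. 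This last quantity is positive precisely because $a>b$: using $b\le a-1$, $3a^2-b(2a-1)\ge 3a^2-(a-1)(2a-1)=a^2+3a-1>0$. Since the quadratic opens upward ($b_{12}>0$), its unique positive root lies in $(0,1)$, giving $k\in(0,1)$. The main obstacle is simply the bookkeeping of polynomial inequalities; the only conceptual point is that the hypothesis $a>b$ enters exactly at the last step, which is what forces $k<1$ and ultimately drives the community-separation dynamics.
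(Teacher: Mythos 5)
Your proof is correct, and it is genuinely different from what the paper does: the paper offers no argument at all for this lemma, stating only that it follows ``using Maple calculation.'' You instead give a complete hand proof. The key structural observation --- that the vanishing third row of $A$ off the diagonal makes the characteristic polynomial factor as $(\tfrac{1}{a}-\lambda)\det(B-\lambda I)$ and lets eigenvectors of the $2\times 2$ block $B$ lift to eigenvectors of $A$ with third coordinate zero --- reduces everything to elementary facts about $B$: positivity of its entries gives a positive discriminant (real, distinct eigenvalues), the sign of $\det B$ gives opposite-sign eigenvalues, $\operatorname{tr}B>1/a$ gives the ordering and the dominance $\lambda_1>|\lambda_3|$ (which, as you correctly note, is the interpretation of ``$\lambda_1>\lambda_3$'' actually needed downstream, where the paper uses $\lambda_1>|\lambda_2|,|\lambda_3|$), and an intermediate-value argument on the quadratic $b_{12}k^2+(b_{11}-b_{22})k-b_{21}$ pins $k\in(0,1)$. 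I checked your algebra ($a(a+b)^2\det B=-(a^2b+a^2+2ab-a-b)$, $\operatorname{tr}B=\tfrac{a^2+ab-2a-b}{a(a+b)}$, and the value $\tfrac{3a^2-2ab+b}{a(a+b)}$ of the quadratic at $k=1$) and it is all correct; this buys a verifiable, self-contained proof where the paper relies on a computer-algebra black box, and it isolates exactly where the hypothesis $a>b$ is used. One trivial slip: the lifting of eigenvectors works because the third \emph{row} of $A$ is $(0,0,1/a)$, i.e.\ $a_{31}=a_{32}=0$, not because $a_{13}=0$ as you wrote; the conclusion is unaffected.
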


By this lemma, we conclude the proof of the theorem as follows. Let $w_2$ and $w_3$ be the eigenvectors associated to $\lambda_1$ and $\lambda_3$ of $A$. The $\lambda_2=\frac{1}{a}$ eigenvector $w_2 =[0,0,1]^t$.  Since $\lambda_1, \lambda_2, \lambda_3$ are distinct, the eigenvectors $w_1, w_2, w_3$ are linearly independent in $\R^3$.

Write the vector $W_0=[1,1,1]^t=a_1w_1+a_2w_2+a_3w_3$. Then using $w_1=[1,k,0]^t$ and $\lambda_1 > |\lambda_2|, |\lambda_3|$, we obtain
$$W_n = a_1\lambda_1^n w_1+a_2\lambda_2^n w_2+ a_3\lambda_3^n w_3 =
\begin{bmatrix}  a_1 \lambda_1^n +o(\lambda_1^n) \\ k a_1 \lambda_1^n +o(\lambda_1^n)  \\ (\frac{1}{a})^n \end{bmatrix}.$$
Here $o(\lambda_1^n)$ stands for an expression such that $\lim_{n \to \infty} o(\lambda_1^n)/\lambda_1^n=0$.  As a conclusion, we see that the distance at the edge $uv$ grows at the rate of $\lambda_1^n$, the distance at the edges $ui$ and $ij$ grows at rate $o(\lambda_1^n)$. In fact, the distance the edge $ij$ grows at the rate of $\frac{1}{a}<1$ and shrinks to zero exponentially fast.

\bibliography{jiepub,community,yuylin,ccni}

\begin{thebibliography}{10}
\urlstyle{rm}
\expandafter\ifx\csname url\endcsname\relax
  \def\url#1{\texttt{#1}}\fi
\expandafter\ifx\csname urlprefix\endcsname\relax\def\urlprefix{URL }\fi
\expandafter\ifx\csname doiprefix\endcsname\relax\def\doiprefix{DOI: }\fi
\providecommand{\bibinfo}[2]{#2}
\providecommand{\eprint}[2][]{\url{#2}}

\bibitem{bhowmick2015clustering}
\bibinfo{author}{Bhowmick, S.~S.} \& \bibinfo{author}{Seah, B.~S.}
\newblock \bibinfo{journal}{\bibinfo{title}{Clustering and summarizing
  protein-protein interaction networks: A survey}}.
\newblock {\emph{\JournalTitle{IEEE Trans. Knowl. Data Eng.}}}
  \textbf{\bibinfo{volume}{28}}, \bibinfo{pages}{638--658}
  (\bibinfo{year}{2015}).

\bibitem{yang2016comparative}
\bibinfo{author}{Yang, Z.}, \bibinfo{author}{Algesheimer, R.} \&
  \bibinfo{author}{Tessone, C.~J.}
\newblock \bibinfo{journal}{\bibinfo{title}{A comparative analysis of community
  detection algorithms on artificial networks}}.
\newblock {\emph{\JournalTitle{Sci. Rep.}}} \textbf{\bibinfo{volume}{6}},
  \bibinfo{pages}{30750} (\bibinfo{year}{2016}).

\bibitem{Fortunato2010-pe}
\bibinfo{author}{Fortunato, S.}
\newblock \bibinfo{journal}{\bibinfo{title}{Community detection in graphs}}.
\newblock {\emph{\JournalTitle{Phys. Rep.}}} \textbf{\bibinfo{volume}{486}},
  \bibinfo{pages}{75--174} (\bibinfo{year}{2010}).

\bibitem{Newman2006-xe}
\bibinfo{author}{Newman, M. E.~J.}
\newblock \bibinfo{journal}{\bibinfo{title}{Modularity and community structure
  in networks}}.
\newblock {\emph{\JournalTitle{Proc. Natl. Acad. Sci. U. S. A.}}}
  \textbf{\bibinfo{volume}{103}}, \bibinfo{pages}{8577--8582}
  (\bibinfo{year}{2006}).

\bibitem{Sinha2018-nx}
\bibinfo{author}{Sinha, A.}, \bibinfo{author}{Gleich, D.~F.} \&
  \bibinfo{author}{Ramani, K.}
\newblock \bibinfo{journal}{\bibinfo{title}{Gauss's law for networks directly
  reveals community boundaries}}.
\newblock {\emph{\JournalTitle{Sci. Rep.}}} \textbf{\bibinfo{volume}{8}},
  \bibinfo{pages}{11909} (\bibinfo{year}{2018}).

\bibitem{leskovec2010empirical}
\bibinfo{author}{Leskovec, J.}, \bibinfo{author}{Lang, K.~J.} \&
  \bibinfo{author}{Mahoney, M.}
\newblock \bibinfo{title}{Empirical comparison of algorithms for network
  community detection}.
\newblock In \emph{\bibinfo{booktitle}{Proc. 19th Int. Conf. World Wide Web}},
  \bibinfo{pages}{631--640} (\bibinfo{publisher}{ACM}, \bibinfo{year}{2010}).

\bibitem{Clauset2004-bp}
\bibinfo{author}{Clauset, A.}, \bibinfo{author}{Newman, M. E.~J.} \&
  \bibinfo{author}{Moore, C.}
\newblock \bibinfo{journal}{\bibinfo{title}{Finding community structure in very
  large networks}}.
\newblock {\emph{\JournalTitle{Phys. Rev. E}}} \textbf{\bibinfo{volume}{70}},
  \bibinfo{pages}{066111} (\bibinfo{year}{2004}).

\bibitem{Zhang2014-xe}
\bibinfo{author}{Zhang, P.} \& \bibinfo{author}{Moore, C.}
\newblock \bibinfo{journal}{\bibinfo{title}{Scalable detection of statistically
  significant communities and hierarchies, using message passing for
  modularity}}.
\newblock {\emph{\JournalTitle{Proceedings of the National Academy of
  Sciences}}} \textbf{\bibinfo{volume}{111}}, \bibinfo{pages}{18144--18149}
  (\bibinfo{year}{2014}).

\bibitem{Peel2017-nx}
\bibinfo{author}{Peel, L.}, \bibinfo{author}{Larremore, D.~B.} \&
  \bibinfo{author}{Clauset, A.}
\newblock \bibinfo{journal}{\bibinfo{title}{The ground truth about metadata and
  community detection in networks}}.
\newblock {\emph{\JournalTitle{Sci. Adv.}}} \textbf{\bibinfo{volume}{3}},
  \bibinfo{pages}{e1602548} (\bibinfo{year}{2017}).

\bibitem{Allen2017-ah}
\bibinfo{author}{Allen, B.} \emph{et~al.}
\newblock \bibinfo{journal}{\bibinfo{title}{Evolutionary dynamics on any
  population structure}}.
\newblock {\emph{\JournalTitle{Nature}}} \textbf{\bibinfo{volume}{544}},
  \bibinfo{pages}{227--230} (\bibinfo{year}{2017}).

\bibitem{JMLR:v18:16-480}
\bibinfo{author}{Abbe, E.}
\newblock \bibinfo{journal}{\bibinfo{title}{Community detection and stochastic
  block models: Recent developments}}.
\newblock {\emph{\JournalTitle{Journal of Machine Learning Research}}}
  \textbf{\bibinfo{volume}{18}}, \bibinfo{pages}{1--86} (\bibinfo{year}{2018}).

\bibitem{Raghavan2007-wf}
\bibinfo{author}{Raghavan, U.~N.}, \bibinfo{author}{Albert, R.} \&
  \bibinfo{author}{Kumara, S.}
\newblock \bibinfo{journal}{\bibinfo{title}{Near linear time algorithm to
  detect community structures in large-scale networks}}.
\newblock {\emph{\JournalTitle{Phys. Rev. E Stat. Nonlin. Soft Matter Phys.}}}
  \textbf{\bibinfo{volume}{76}}, \bibinfo{pages}{036106}
  (\bibinfo{year}{2007}).

\bibitem{Rosvall2008-ns}
\bibinfo{author}{Rosvall, M.} \& \bibinfo{author}{Bergstrom, C.~T.}
\newblock \bibinfo{journal}{\bibinfo{title}{Maps of random walks on complex
  networks reveal community structure}}.
\newblock {\emph{\JournalTitle{Proc. Natl. Acad. Sci. U. S. A.}}}
  \textbf{\bibinfo{volume}{105}}, \bibinfo{pages}{1118--1123}
  (\bibinfo{year}{2008}).

\bibitem{Girvan2002-cq}
\bibinfo{author}{Girvan, M.} \& \bibinfo{author}{Newman, M. E.~J.}
\newblock \bibinfo{journal}{\bibinfo{title}{Community structure in social and
  biological networks}}.
\newblock {\emph{\JournalTitle{Proc. Natl. Acad. Sci. U. S. A.}}}
  \textbf{\bibinfo{volume}{99}}, \bibinfo{pages}{7821--7826}
  (\bibinfo{year}{2002}).

\bibitem{newman2004finding}
\bibinfo{author}{Newman, M.~E.} \& \bibinfo{author}{Girvan, M.}
\newblock \bibinfo{journal}{\bibinfo{title}{Finding and evaluating community
  structure in networks}}.
\newblock {\emph{\JournalTitle{Phys. Rev. E}}} \textbf{\bibinfo{volume}{69}},
  \bibinfo{pages}{026113} (\bibinfo{year}{2004}).

\bibitem{hamilton1982three}
\bibinfo{author}{Hamilton, R.~S.}
\newblock \bibinfo{journal}{\bibinfo{title}{Three-manifolds with positive ricci
  curvature}}.
\newblock {\emph{\JournalTitle{Journal of Differential Geometry}}}
  \textbf{\bibinfo{volume}{17}}, \bibinfo{pages}{255--306}
  (\bibinfo{year}{1982}).

\bibitem{Perelman2002-gs}
\bibinfo{author}{Perelman, G.}
\newblock \bibinfo{journal}{\bibinfo{title}{The entropy formula for the ricci
  flow and its geometric applications}}. (\bibinfo{year}{2002}).
\newblock \eprint{https://arxiv.org/abs/math/0211159}.

\bibitem{Jost2017-bx}
\bibinfo{author}{Jost, J.}
\newblock \emph{\bibinfo{title}{Riemannian geometry and geometric analysis}}
  (\bibinfo{publisher}{Springer Science {\&} Business Media},
  \bibinfo{year}{2011}).

\bibitem{Ollivier2009-yl}
\bibinfo{author}{Ollivier, Y.}
\newblock \bibinfo{journal}{\bibinfo{title}{Ricci curvature of markov chains on
  metric spaces}}.
\newblock {\emph{\JournalTitle{J. Funct. Anal.}}}
  \textbf{\bibinfo{volume}{256}}, \bibinfo{pages}{810--864}
  (\bibinfo{year}{2009}).

\bibitem{Ollivier2010-ub}
\bibinfo{author}{Ollivier, Y.}
\newblock \bibinfo{title}{A survey of ricci curvature for metric spaces and
  markov chains}.
\newblock In \emph{\bibinfo{booktitle}{Probabilistic Approach to Geometry}},
  \bibinfo{pages}{343--381} (\bibinfo{publisher}{Math. Soc. of Japan},
  \bibinfo{address}{Tokyo, Japan}, \bibinfo{year}{2010}).
\newblock \eprint{https://doi.org/10.2969/aspm/05710343}.

\bibitem{Lott2009-lp}
\bibinfo{author}{Lott, J.} \& \bibinfo{author}{Villani, C.}
\newblock \bibinfo{journal}{\bibinfo{title}{Ricci curvature for metric-measure
  spaces via optimal transport}}.
\newblock {\emph{\JournalTitle{Annals of Mathematics. Second Series}}}
  \textbf{\bibinfo{volume}{169}}, \bibinfo{pages}{903--991}
  (\bibinfo{year}{2009}).

\bibitem{Ni2015-yv}
\bibinfo{author}{Ni, C.-C.}, \bibinfo{author}{Lin, Y.-Y.},
  \bibinfo{author}{Gao, J.}, \bibinfo{author}{Gu, X.~D.} \&
  \bibinfo{author}{Saucan, E.}
\newblock \bibinfo{title}{Ricci curvature of the internet topology}.
\newblock In \emph{\bibinfo{booktitle}{IEEE. Ic. Comp. Com. Net. ({INFOCOM})}},
  vol.~\bibinfo{volume}{26}, \bibinfo{pages}{2758--2766}
  (\bibinfo{publisher}{IEEE}, \bibinfo{year}{2015}).
\newblock \eprint{https://doi.org/10.1109/INFOCOM.2015.7218668}.

\bibitem{Samal2018-bt}
\bibinfo{author}{Samal, A.} \emph{et~al.}
\newblock \bibinfo{journal}{\bibinfo{title}{Comparative analysis of two
  discretizations of {Ricci} curvature for complex networks}}.
\newblock {\emph{\JournalTitle{Sci. Rep.}}} \textbf{\bibinfo{volume}{8}},
  \bibinfo{pages}{8650} (\bibinfo{year}{2018}).

\bibitem{Sreejith2016-yl}
\bibinfo{author}{Sreejith, R.~P.}, \bibinfo{author}{Mohanraj, K.},
  \bibinfo{author}{Jost, J.}, \bibinfo{author}{Saucan, E.} \&
  \bibinfo{author}{Samal, A.}
\newblock \bibinfo{journal}{\bibinfo{title}{Forman curvature for complex
  networks}}.
\newblock {\emph{\JournalTitle{J. Stat. Mech: Theory Exp.}}}
  \textbf{\bibinfo{volume}{2016}}, \bibinfo{pages}{063206}
  (\bibinfo{year}{2016}).

\bibitem{Wang2014-iy}
\bibinfo{author}{Wang, C.}, \bibinfo{author}{Jonckheere, E.} \&
  \bibinfo{author}{Banirazi, R.}
\newblock \bibinfo{title}{Wireless network capacity versus {Ollivier-Ricci}
  curvature under {Heat-Diffusion} ({HD}) protocol}.
\newblock In \emph{\bibinfo{booktitle}{2014 American Control Conference}},
  \bibinfo{pages}{3536--3541} (\bibinfo{publisher}{IEEE},
  \bibinfo{year}{2014}).

\bibitem{Whidden2017-xa}
\bibinfo{author}{Whidden, C.} \& \bibinfo{author}{Matsen, F.~A.}
\newblock \bibinfo{journal}{\bibinfo{title}{{Ricci--Ollivier} curvature of the
  rooted phylogenetic subtree--prune--regraft graph}}.
\newblock {\emph{\JournalTitle{Theor. Comput. Sci.}}}
  \textbf{\bibinfo{volume}{699}}, \bibinfo{pages}{1--20}
  (\bibinfo{year}{2017}).

\bibitem{Jost2014-rk}
\bibinfo{author}{Jost, J.} \& \bibinfo{author}{Liu, S.}
\newblock \bibinfo{journal}{\bibinfo{title}{Ollivier's {Ricci} curvature, local
  clustering and {Curvature-Dimension} inequalities on graphs}}.
\newblock {\emph{\JournalTitle{Discrete Comput. Geom.}}}
  \textbf{\bibinfo{volume}{51}}, \bibinfo{pages}{300--322}
  (\bibinfo{year}{2014}).

\bibitem{Sandhu2015-lz}
\bibinfo{author}{Sandhu, R.} \emph{et~al.}
\newblock \bibinfo{journal}{\bibinfo{title}{Graph curvature for differentiating
  cancer networks}}.
\newblock {\emph{\JournalTitle{Sci. Rep.}}} \textbf{\bibinfo{volume}{5}},
  \bibinfo{pages}{12323} (\bibinfo{year}{2015}).

\bibitem{Sandhu2016-vh}
\bibinfo{author}{Sandhu, R.~S.}, \bibinfo{author}{Georgiou, T.~T.} \&
  \bibinfo{author}{Tannenbaum, A.~R.}
\newblock \bibinfo{journal}{\bibinfo{title}{Ricci curvature: An economic
  indicator for market fragility and systemic risk}}.
\newblock {\emph{\JournalTitle{Sci Adv}}} \textbf{\bibinfo{volume}{2}},
  \bibinfo{pages}{e1501495} (\bibinfo{year}{2016}).

\bibitem{Ni2018-pv}
\bibinfo{author}{Ni, C.-C.}, \bibinfo{author}{Lin, Y.-Y.},
  \bibinfo{author}{Gao, J.} \& \bibinfo{author}{Gu, X.}
\newblock \bibinfo{title}{Network alignment by discrete {Ollivier-Ricci} flow}.
\newblock In \emph{\bibinfo{booktitle}{Graph Drawing and Network
  Visualization}}, \bibinfo{pages}{447--462} (\bibinfo{publisher}{Springer
  International Publishing}, \bibinfo{year}{2018}).

\bibitem{ICWSM09154}
\bibinfo{author}{Bastian, M.}, \bibinfo{author}{Heymann, S.} \&
  \bibinfo{author}{Jacomy, M.}
\newblock \bibinfo{journal}{\bibinfo{title}{Gephi: An open source software for
  exploring and manipulating networks}}.
\newblock {\emph{\JournalTitle{International AAAI Conference on Weblogs and
  Social Media}}}  (\bibinfo{year}{2009}).

\bibitem{lancichinetti2008benchmark}
\bibinfo{author}{Lancichinetti, A.}, \bibinfo{author}{Fortunato, S.} \&
  \bibinfo{author}{Radicchi, F.}
\newblock \bibinfo{journal}{\bibinfo{title}{Benchmark graphs for testing
  community detection algorithms}}.
\newblock {\emph{\JournalTitle{Phys. Rev. E}}} \textbf{\bibinfo{volume}{78}},
  \bibinfo{pages}{046110} (\bibinfo{year}{2008}).

\bibitem{Bianconi2014-cw}
\bibinfo{author}{Bianconi, G.}, \bibinfo{author}{Darst, R.~K.},
  \bibinfo{author}{Iacovacci, J.} \& \bibinfo{author}{Fortunato, S.}
\newblock \bibinfo{journal}{\bibinfo{title}{Triadic closure as a basic
  generating mechanism of communities in complex networks}}.
\newblock {\emph{\JournalTitle{Phys. Rev. E Stat. Nonlin. Soft Matter Phys.}}}
  \textbf{\bibinfo{volume}{90}}, \bibinfo{pages}{042806}
  (\bibinfo{year}{2014}).

\bibitem{wu2015emergent}
\bibinfo{author}{Wu, Z.}, \bibinfo{author}{Menichetti, G.},
  \bibinfo{author}{Rahmede, C.} \& \bibinfo{author}{Bianconi, G.}
\newblock \bibinfo{journal}{\bibinfo{title}{Emergent complex network
  geometry}}.
\newblock {\emph{\JournalTitle{Scientific reports}}}
  \textbf{\bibinfo{volume}{5}}, \bibinfo{pages}{10073} (\bibinfo{year}{2015}).

\bibitem{hubert1985comparing}
\bibinfo{author}{Hubert, L.} \& \bibinfo{author}{Arabie, P.}
\newblock \bibinfo{journal}{\bibinfo{title}{Comparing partitions}}.
\newblock {\emph{\JournalTitle{Journal of Classification}}}
  \textbf{\bibinfo{volume}{2}}, \bibinfo{pages}{193--218}
  (\bibinfo{year}{1985}).

\bibitem{Saucan2018-yg}
\bibinfo{author}{Saucan, E.}, \bibinfo{author}{Samal, A.},
  \bibinfo{author}{Weber, M.} \& \bibinfo{author}{Jost, J.}
\newblock \bibinfo{journal}{\bibinfo{title}{Discrete curvatures and network
  analysis}}.
\newblock {\emph{\JournalTitle{MATCH Commun. Math. Comput. Chem.}}}
  \textbf{\bibinfo{volume}{80}}, \bibinfo{pages}{605--622}
  (\bibinfo{year}{2018}).

\bibitem{Sreejith2017-md}
\bibinfo{author}{Sreejith, R.~P.}, \bibinfo{author}{Jost, J.},
  \bibinfo{author}{Saucan, E.} \& \bibinfo{author}{Samal, A.}
\newblock \bibinfo{journal}{\bibinfo{title}{Systematic evaluation of a new
  combinatorial curvature for complex networks}}.
\newblock {\emph{\JournalTitle{Chaos Solitons Fractals}}}
  \textbf{\bibinfo{volume}{101}}, \bibinfo{pages}{50--67}
  (\bibinfo{year}{2017}).

\bibitem{Bakry1985-tp}
\bibinfo{author}{Bakry, D.} \& \bibinfo{author}{{\'E}mery, M.}
\newblock \bibinfo{title}{Diffusions hypercontractives}.
\newblock In \bibinfo{editor}{Az{\'e}ma, J.} \& \bibinfo{editor}{Yor, M.}
  (eds.) \emph{\bibinfo{booktitle}{S{\'e}minaire de Probabilit{\'e}s {XIX}
  1983/84}}, vol. \bibinfo{volume}{1123} of \emph{\bibinfo{series}{Lecture
  Notes in Mathematics}}, \bibinfo{pages}{177--206}
  (\bibinfo{publisher}{Springer Berlin Heidelberg}, \bibinfo{address}{Berlin,
  Heidelberg}, \bibinfo{year}{1985}).

\bibitem{Bonciocat2009-tq}
\bibinfo{author}{Bonciocat, A.~I.} \& \bibinfo{author}{Sturm, K.~T.}
\newblock \bibinfo{journal}{\bibinfo{title}{Mass transportation and rough
  curvature bounds for discrete spaces}}.
\newblock {\emph{\JournalTitle{J. Funct. Anal.}}}  (\bibinfo{year}{2009}).

\bibitem{Bonciocat2014-qw}
\bibinfo{author}{Bonciocat, A.-I.}
\newblock \bibinfo{journal}{\bibinfo{title}{A rough curvature-dimension
  condition for metric measure spaces}}.
\newblock {\emph{\JournalTitle{Cent. Eur. J. Math.}}}
  \textbf{\bibinfo{volume}{12}}, \bibinfo{pages}{362--380}
  (\bibinfo{year}{2014}).

\bibitem{Wang2016-pm}
\bibinfo{author}{Wang, C.}, \bibinfo{author}{Jonckheere, E.} \&
  \bibinfo{author}{Banirazi, R.}
\newblock \bibinfo{title}{Interference constrained network control based on
  curvature}.
\newblock In \emph{\bibinfo{booktitle}{Proc. American Control Conference}},
  vol. \bibinfo{volume}{2016-July}, \bibinfo{pages}{6036--6041}
  (\bibinfo{publisher}{IEEE}, \bibinfo{year}{2016}).

\bibitem{Pal2018-cy}
\bibinfo{author}{Pal, S.} \emph{et~al.}
\newblock \bibinfo{title}{Jaccard curvature---an efficient proxy for
  {Ollivier-Ricci} curvature in graphs}.
\newblock In \emph{\bibinfo{booktitle}{Complex Networks {IX}}},
  \bibinfo{pages}{51--63} (\bibinfo{publisher}{Springer International
  Publishing}, \bibinfo{year}{2018}).

\bibitem{Forman2003-ao}
\bibinfo{author}{Forman, R.}
\newblock \bibinfo{journal}{\bibinfo{title}{Bochner's method for cell complexes
  and combinatorial ricci curvature}}.
\newblock {\emph{\JournalTitle{Discrete Comput. Geom.}}}
  \textbf{\bibinfo{volume}{29}}, \bibinfo{pages}{323--374}
  (\bibinfo{year}{2003}).

\bibitem{Weber2017-hu}
\bibinfo{author}{Weber, M.}, \bibinfo{author}{Saucan, E.} \&
  \bibinfo{author}{Jost, J.}
\newblock \bibinfo{journal}{\bibinfo{title}{Characterizing complex networks
  with {Forman-Ricci} curvature and associated geometric flows}}.
\newblock {\emph{\JournalTitle{J Complex Netw}}} \textbf{\bibinfo{volume}{5}},
  \bibinfo{pages}{527--550} (\bibinfo{year}{2017}).

\bibitem{weber2018detecting}
\bibinfo{author}{Weber, M.}, \bibinfo{author}{Jost, J.} \&
  \bibinfo{author}{Saucan, E.}
\newblock \bibinfo{title}{Detecting the coarse geometry of networks}.
\newblock In \emph{\bibinfo{booktitle}{NeurIPS 2018 Workshop}}
  (\bibinfo{year}{2018}).
\newblock \eprint{https://www.mis.mpg.de/preprints/2018/preprint2018_97.pdf}.

\bibitem{Saucan2009-ih}
\bibinfo{author}{{Saucan}, E.}, \bibinfo{author}{{Wolansky}, G.},
  \bibinfo{author}{{Appleboim}, E.} \& \bibinfo{author}{{Zeevi}, Y.~Y.}
\newblock \bibinfo{title}{Combinatorial ricci curvature and laplacians for
  image processing}.
\newblock In \emph{\bibinfo{booktitle}{2nd Int. Cong. on Image and Signal
  Processing}}, \bibinfo{pages}{1--6} (\bibinfo{year}{2009}).
\newblock \eprint{https://doi.org/10.1109/CISP.2009.5304710}.

\bibitem{chow2003combinatorial}
\bibinfo{author}{Chow, B.}, \bibinfo{author}{Luo, F.} \emph{et~al.}
\newblock \bibinfo{journal}{\bibinfo{title}{Combinatorial {Ricci} flows on
  surfaces}}.
\newblock {\emph{\JournalTitle{Journal of Differential Geometry}}}
  \textbf{\bibinfo{volume}{63}}, \bibinfo{pages}{97--129}
  (\bibinfo{year}{2003}).

\bibitem{Plantie2013-dy}
\bibinfo{author}{Planti{\'e}, M.} \& \bibinfo{author}{Crampes, M.}
\newblock \bibinfo{title}{Survey on social community detection}.
\newblock In \emph{\bibinfo{booktitle}{Social Media Retrieval}}, Computer
  Communications and Networks, \bibinfo{pages}{65--85}
  (\bibinfo{publisher}{Springer, London}, \bibinfo{year}{2013}).

\bibitem{Pares2018-yq}
\bibinfo{author}{Par{\'e}s, F.} \emph{et~al.}
\newblock \bibinfo{title}{Fluid communities: A competitive, scalable and
  diverse community detection algorithm}.
\newblock In \emph{\bibinfo{booktitle}{Complex Networks \& Their Applications
  {VI}}}, \bibinfo{pages}{229--240} (\bibinfo{publisher}{Springer International
  Publishing}, \bibinfo{year}{2018}).

\bibitem{Yin2017-nq}
\bibinfo{author}{Yin, H.}, \bibinfo{author}{Benson, A.~R.},
  \bibinfo{author}{Leskovec, J.} \& \bibinfo{author}{Gleich, D.~F.}
\newblock \bibinfo{journal}{\bibinfo{title}{Local higher-order graph
  clustering}}.
\newblock {\emph{\JournalTitle{ACM Trans. on Knowledge Discovery from Data
  (TKDD)}}} \textbf{\bibinfo{volume}{2017}}, \bibinfo{pages}{555--564}
  (\bibinfo{year}{2017}).

\bibitem{Newman2016-nj}
\bibinfo{author}{Newman, M. E.~J.}
\newblock \bibinfo{journal}{\bibinfo{title}{Equivalence between modularity
  optimization and maximum likelihood methods for community detection}}.
\newblock {\emph{\JournalTitle{Physical Review E}}}
  \textbf{\bibinfo{volume}{94}}, \bibinfo{pages}{052315}
  (\bibinfo{year}{2016}).

\bibitem{Decelle2011-fh}
\bibinfo{author}{Decelle, A.}, \bibinfo{author}{Krzakala, F.},
  \bibinfo{author}{Moore, C.} \& \bibinfo{author}{Zdeborov{\'a}, L.}
\newblock \bibinfo{journal}{\bibinfo{title}{Asymptotic analysis of the
  stochastic block model for modular networks and its algorithmic
  applications}}.
\newblock {\emph{\JournalTitle{Physical Review E}}}
  \textbf{\bibinfo{volume}{84}}, \bibinfo{pages}{066106}
  (\bibinfo{year}{2011}).

\bibitem{Ji2014-cw}
\bibinfo{author}{Ji, J.}, \bibinfo{author}{Zhang, A.}, \bibinfo{author}{Liu,
  C.}, \bibinfo{author}{Quan, X.} \& \bibinfo{author}{Liu, Z.}
\newblock \bibinfo{journal}{\bibinfo{title}{Survey: Functional module detection
  from protein-protein interaction networks}}.
\newblock {\emph{\JournalTitle{IEEE Trans. Knowl. Data Eng.}}}
  \textbf{\bibinfo{volume}{26}}, \bibinfo{pages}{261--277}
  (\bibinfo{year}{2014}).

\bibitem{Reichardt2006-zs}
\bibinfo{author}{Reichardt, J.} \& \bibinfo{author}{Bornholdt, S.}
\newblock \bibinfo{journal}{\bibinfo{title}{Statistical mechanics of community
  detection}}.
\newblock {\emph{\JournalTitle{Physical Review E}}}
  \textbf{\bibinfo{volume}{74}}, \bibinfo{pages}{016110}
  (\bibinfo{year}{2006}).

\bibitem{Faqeeh2018-lo}
\bibinfo{author}{Faqeeh, A.}, \bibinfo{author}{Osat, S.} \&
  \bibinfo{author}{Radicchi, F.}
\newblock \bibinfo{journal}{\bibinfo{title}{Characterizing the analogy between
  hyperbolic embedding and community structure of complex networks}}.
\newblock {\emph{\JournalTitle{Phys. Rev. Lett.}}}
  \textbf{\bibinfo{volume}{121}}, \bibinfo{pages}{098301}
  (\bibinfo{year}{2018}).

\bibitem{Salnikov2018-he}
\bibinfo{author}{Salnikov, V.}, \bibinfo{author}{Cassese, D.} \&
  \bibinfo{author}{Lambiotte, R.}
\newblock \bibinfo{journal}{\bibinfo{title}{Simplicial complexes and complex
  systems}}.
\newblock {\emph{\JournalTitle{Eur. J. Phys.}}} \textbf{\bibinfo{volume}{40}},
  \bibinfo{pages}{014001} (\bibinfo{year}{2018}).

\bibitem{Lin2011-wk}
\bibinfo{author}{Lin, Y.}, \bibinfo{author}{Lu, L.} \& \bibinfo{author}{Yau,
  S.-T.}
\newblock \bibinfo{journal}{\bibinfo{title}{Ricci curvature of graphs}}.
\newblock {\emph{\JournalTitle{Tohoku Math. J.}}}
  \textbf{\bibinfo{volume}{63}}, \bibinfo{pages}{605--627}
  (\bibinfo{year}{2011}).

\bibitem{Kunegis2013-bg}
\bibinfo{author}{Kunegis, J.}
\newblock \bibinfo{title}{{KONECT}: The koblenz network collection}.
\newblock In \emph{\bibinfo{booktitle}{Proceedings of the 22Nd International
  Conference on World Wide Web}}, WWW '13 Companion,
  \bibinfo{pages}{1343--1350} (\bibinfo{publisher}{ACM}, \bibinfo{address}{New
  York, NY, USA}, \bibinfo{year}{2013}).

\bibitem{snapnets}
\bibinfo{author}{Leskovec, J.} \& \bibinfo{author}{Krevl, A.}
\newblock \bibinfo{title}{{SNAP Datasets}: {Stanford} large network dataset
  collection}.
\newblock \bibinfo{howpublished}{\url{http://snap.stanford.edu/data}}
  (\bibinfo{year}{2014}).

\bibitem{kantorovich1942translocation}
\bibinfo{author}{Kantorovich, L.~V.}
\newblock \bibinfo{title}{On the translocation of masses}.
\newblock In \emph{\bibinfo{booktitle}{Dokl. Akad. Nauk. USSR (NS)}},
  vol.~\bibinfo{volume}{37}, \bibinfo{pages}{199--201} (\bibinfo{year}{1942}).

\bibitem{Milnor1962-pm}
\bibinfo{author}{Milnor, J.}
\newblock \bibinfo{journal}{\bibinfo{title}{A unique decomposition theorem for
  3-manifolds}}.
\newblock {\emph{\JournalTitle{Amer. J. Math.}}} \textbf{\bibinfo{volume}{84}},
  \bibinfo{pages}{1--7} (\bibinfo{year}{1962}).

\bibitem{Jaco1979-pw}
\bibinfo{author}{Jaco, W.} \& \bibinfo{author}{Shalen, P.~B.}
\newblock \bibinfo{title}{{SEIFERT} {FIBERED} {SPACES} {IN} {3-MANIFOLDS}}.
\newblock In \bibinfo{editor}{Cantrell, J.~C.} (ed.)
  \emph{\bibinfo{booktitle}{Geometric Topology}}, \bibinfo{pages}{91--99}
  (\bibinfo{publisher}{Academic Press}, \bibinfo{year}{1979}).

\bibitem{Johannson2006-vx}
\bibinfo{author}{Johannson, K.}
\newblock \emph{\bibinfo{title}{Homotopy Equivalences of 3-Manifolds with
  Boundaries}} (\bibinfo{publisher}{Springer}, \bibinfo{year}{2006}).

\bibitem{Thurston1982-rf}
\bibinfo{author}{Thurston, W.~P.}
\newblock \bibinfo{journal}{\bibinfo{title}{Three dimensional manifolds,
  kleinian groups and hyperbolic geometry}}.
\newblock {\emph{\JournalTitle{Bull. Am. Math. Soc.}}}
  \textbf{\bibinfo{volume}{6}}, \bibinfo{pages}{357--382}
  (\bibinfo{year}{1982}).

\bibitem{mt}
\bibinfo{author}{Morgan, G., John;~Tian}.
\newblock \bibinfo{journal}{\bibinfo{title}{Ricci flow and the poincaré
  conjecture}}.
\newblock {\emph{\JournalTitle{Clay Mathematics Monographs,}}}
  \textbf{\bibinfo{volume}{3}}, \bibinfo{pages}{521} (\bibinfo{year}{2007}).

\bibitem{spanier1966algebraic}
\bibinfo{author}{Spanier, E.~H.}
\newblock \bibinfo{journal}{\bibinfo{title}{Algebraic topology}}. (\bibinfo{year}{1966}).

\bibitem{Cuturi2013-kx}
\bibinfo{author}{Cuturi, M.}
\newblock \bibinfo{title}{Sinkhorn distances: Lightspeed computation of optimal
  transport}.
\newblock In \bibinfo{editor}{Burges, C. J.~C.}, \bibinfo{editor}{Bottou, L.},
  \bibinfo{editor}{Welling, M.}, \bibinfo{editor}{Ghahramani, Z.} \&
  \bibinfo{editor}{Weinberger, K.~Q.} (eds.) \emph{\bibinfo{booktitle}{Advances
  in Neural Information Processing Systems 26}}, \bibinfo{pages}{2292--2300}
  (\bibinfo{publisher}{Curran Associates, Inc.}, \bibinfo{year}{2013}).

\bibitem{rand1971objective}
\bibinfo{author}{Rand, W.~M.}
\newblock \bibinfo{journal}{\bibinfo{title}{Objective criteria for the
  evaluation of clustering methods}}.
\newblock {\emph{\JournalTitle{Journal of the American Statistical
  association}}} \textbf{\bibinfo{volume}{66}}, \bibinfo{pages}{846--850}
  (\bibinfo{year}{1971}).

\bibitem{Fortunato2007-ff}
\bibinfo{author}{Fortunato, S.} \& \bibinfo{author}{Barth{\'e}lemy, M.}
\newblock \bibinfo{journal}{\bibinfo{title}{Resolution limit in community
  detection}}.
\newblock {\emph{\JournalTitle{Proc. Natl. Acad. Sci. U. S. A.}}}
  \textbf{\bibinfo{volume}{104}}, \bibinfo{pages}{36--41}
  (\bibinfo{year}{2007}).

\end{thebibliography}

\end{document}